\def\arXiv#1{\href{http://arxiv.org/abs/#1}{arXiv:#1}}
\def\?[#1]{\textbf{[#1]}\marginpar{\Large{\textbf{??}}}}
\def\smallsection#1{\smallskip\noindent\textbf{#1}.}
\let\epsilon=\varepsilon % sorry Knuth
\newcommand{\RR}{{\mathbb R}}
\newcommand{\NN}{{\mathbb N}}
\newcommand{\CC}{{\mathbb C}}
\newcommand{\ZZ}{{\mathbb Z}}
\newcommand{\C}{{\mathbb C}}
\newcommand{\Z}{{\mathbb Z}}
\newcommand\tinyvarhexagon{\vcenter{\hbox{\scalebox{0.7}{$\varhexagon$}}}}
\newtheorem{theo}{Theorem}
\newtheorem{prop}{Proposition}[section]	
\newtheorem{defi}[prop]{Definition}
\newtheorem{lemm}[prop]{Lemma}
\newtheorem{rem}{Remark}
\numberwithin{equation}{section}
\let\Im=\Imag
\let\Re=\Real
\DeclareMathOperator{\sgn}{sgn}
\DeclareMathOperator{\supp}{supp}
\DeclareMathOperator{\WF}{WF}
\DeclareMathOperator{\tr}{tr}
\def\indic{\operatorname{1\hskip-2.75pt\relax l}}
\newcommand\reallywidehat[1]{\arraycolsep=0pt\relax%
\begin{array}{c}
\stretchto{
  \scaleto{
    \scalerel*[\widthof{\ensuremath{#1}}]{\kern-.5pt\bigwedge\kern-.5pt}
    {\rule[-\textheight/2]{1ex}{\textheight}} %WIDTH-LIMITED BIG WEDGE
  }{\textheight} % 
}{0.5ex}\\           % THIS SQUEEZES THE WEDGE TO 0.5ex HEIGHT
#1\\                 % THIS STACKS THE WEDGE ATOP THE ARGUMENT
\rule{-1ex}{0ex}
\end{array}
}
\title[Density of states and Delocalization.]{Density of states and Delocalization for discrete magnetic random Schr\"odinger operators}
\author{Simon Becker}
\email{simon.becker@damtp.cam.ac.uk}
\address{DAMTP, University of Cambridge, Wilberforce Rd, Cambridge CB3 0WA, UK}
\author{Rui Han}
\email{rhan@lsu.edu}
\address{Department of Mathematics, Louisiana State University, Baton Rouge, LA 70803, US}
\begin{document}

\begin{abstract}
We study discrete magnetic random Schr\"odinger operators on the square and honeycomb lattice. 
For the non-random magnetic operator on the hexagonal lattice with any rational magnetic flux, we show that the middle two dispersion surfaces exhibit Dirac cones. 
We then derive an asymptotic expansion for the density of states on the honeycomb lattice for oscillations of arbitrary rational magnetic flux. 
This allows us, as a corollary, to rigorously study the quantum Hall effect and conclude dynamical delocalization close to the conical point under disorder. 
We obtain similar results for the discrete random Schr\"odinger operator on the $\mathbb Z^2$-lattice with weak magnetic fields, close to the bottom and top of its spectrum. 
\end{abstract}

\maketitle

%%%%%%%%%%%%%%%%%%%%%%%%%%%%%%%%%%%%%%%%%%%%%%%%%%%%%%%%%%%%%%%%%%%%%%%%%%%%%%%%
%                                 INTRODUCTION                                 %
%%%%%%%%%%%%%%%%%%%%%%%%%%%%%%%%%%%%%%%%%%%%%%%%%%%%%%%%%%%%%%%%%%%%%%%%%%%%%%%%
%\addtocounter{section}{1}
%\begin{figure}[H]
% \includegraphics[ height=6cm, width=13cm]{hofstadter.pdf}
% \caption{Full Hofstadter butterfly for honeycomb lattice.-Different colours indicate different Hall conductivities.%\label{fig:Hofstadter}}
%\end{figure}
\section{Introduction and statement of results}
In this article, we study discrete random Schr\"odinger operators, the tight-binding limits of continuous random Schr\"odinger operators, under weak disorder in weak magnetic fields on the $\mathbb Z^2$ lattice $ \Lambda_{{\scriptscriptstyle{ \blacksquare}}}$ and in addition for magnetic fluxes close to rationals on the honeycomb lattice $\Lambda_{\tinyvarhexagon}$:
\begin{equation*}
\begin{split}
(H^h_{{\scriptscriptstyle{ \blacksquare}},\lambda,\omega} u)(\gamma)&:=-\frac{1}{4} \Bigg( e^{i h \gamma_2/2} u(\gamma + \vec{b}_1)+e^{-i h \gamma_2/2} u(\gamma - \vec{b}_1) \\
& \qquad + e^{-i h \gamma_1/2} u(\gamma+\vec{b}_2) + e^{i h \gamma_1/2} u (\gamma-\vec{b}_2) \Bigg) + \lambda V_{\omega}(\gamma)u(\gamma) \\
(H^h_{\tinyvarhexagon,\lambda,\omega} u)(v)&:=-\frac{1}{3} \left(\sum_{\vec{e} \in \mathcal{E}, i(\vec{e})=v } e^{-i  A_{\vec{e}}} u(t(\vec{e})) +  \sum_{\vec{e} \in \mathcal{E}, t(\vec{e})=v } e^{iA_{\vec e}} u(i(\vec{e})) \right) + \lambda V_{\omega}(v) u(v),
\end{split}
\end{equation*}  
where $V_{\omega}$ is an i.i.d. random potential on the respective lattice $\Lambda$.
For precise definitions of these operators, we refer to Section \ref{sec:defH}.

The spectral properties of the discrete magnetic Laplacian (DML) on $\mathbb Z^2$, and of the almost Mathieu operator (AMO), have been extensively studied over the past forty years, see for instance a survey \cite{JMsurvey} and some recent advancements \cite{AYZ17,JL18,JK19}. Significant progress on the location of the spectrum has been made for magnetic Schr\"odinger operators using semiclassical analysis \cite{HS0,HS1,HS2,W94}.
In two preceding articles \cite{BHJ17,BZ}, by the authors, this study was extended to spectral properties and the density of states (DOS) of the magnetic Schr\"odinger operator on the honeycomb lattice -but without disorder. It was shown in \cite[Theorem $1$]{BZ} that the DOS for the magnetic Schr\"odinger operator on the honeycomb quantum graph- close to the conical point- is concentrated at so-called relativistic Landau levels. 

The spectral analysis in \cite{BHJ17} showed that for the DML on the hexagonal lattice, close to the conical point, there is no point spectrum, as the analogy to the magnetic two-dimensional Dirac operator suggests. Instead, the spectrum of the DML on the honeycomb lattice is either absolutely continuous (a.\@c.\@) band spectrum or singular continuous (s.\@c.\@) and a Cantor set of Lebesgue measure zero, depending on the arithmetic properties of the magnetic flux through a single honeycomb.

Next let us introduce our results. We start with the non-random operator on the hexagonal lattice $H^{h}_{\tinyvarhexagon,\lambda=0}$.
The part of the energy spectrum of graphene, modeled here by the discrete operator $H^{h}_{\tinyvarhexagon,\lambda=0}$ that is relevant for most of its remarkable physical properties, is the energy spectrum close to the conical points, the so-called \emph{Dirac points} at energy zero, see Fig.\@ \ref{fig:hex}. 
The existence of Dirac points for the tight-binding graphene model in the absence of magnetic field is known since \cite{W47}.
In the absence of magnetic fields, the operator can be reduced to a $2\times 2$ matrix via Floquet-Bloch theory.
Hence the only two dispersion surfaces can be computed explicitly, whence conical touching of the two surfaces is evident.
It is natural to ask the question if Dirac points still exist for arbitrary rational magnetic flux, where the operator is still periodic.
Indeed, for flux $h=2\pi p/q$, the operator can be reduced to a $2q\times 2q$ matrix. 
The dispersion surfaces thus have to be analyzed implicitly and hence making it much harder to prove conical structures.
%A recent significant result \cite{FW12} shows the existence of Dirac points for the continuous non-magnetic Schr\"odinger operators with honeycomb lattice potentials. 
Our first result is to prove the existence of Dirac cones at energy zero for the tight-binding model for {\it any} rational magnetic flux.
\begin{theo}\label{thm:Dirac}
For any rational flux $h=2\pi \frac{p}{q} \in 2\pi\, \mathbb{Q}$, the operator
$H^{h}_{\tinyvarhexagon,\lambda=0}$ possesses Dirac points at energy zero.
\end{theo}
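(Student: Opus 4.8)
The plan is to pass to the Floquet--Bloch representation and exploit the bipartite structure of the honeycomb lattice. For $h=2\pi p/q$ the operator $H^{h}_{\tinyvarhexagon,\lambda=0}$ is periodic with respect to an enlarged magnetic cell carrying $q$ sites of each sublattice, so a magnetic Bloch transform reduces it to a $2q\times 2q$ matrix $H(\theta)$, $\theta=(\theta_1,\theta_2)\in\mathbb T^2$, as already noted in the excerpt. Since every edge of $\Lambda_{\tinyvarhexagon}$ joins the two sublattices, ordering the basis by sublattice makes this matrix off-diagonal,
\begin{equation*}
H(\theta)=\begin{pmatrix}0 & Q(\theta)\\ Q(\theta)^{*} & 0\end{pmatrix},\qquad Q(\theta)\in\mathrm{Mat}_{q\times q}(\mathbb C),
\end{equation*}
the discrete avatar of the chiral symmetry of the Dirac operator. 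The eigenvalues of $H(\theta)$ are therefore $\pm\sigma_{1}(\theta)\le\cdots\le\pm\sigma_{q}(\theta)$, the signed singular values of $Q(\theta)$; the two central dispersion surfaces are $E_{\pm}(\theta)=\pm\sigma_{\min}(Q(\theta))$, and they can meet only at energy zero, at points where $\det Q(\theta)=0$.

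First I would record that a Dirac point is exactly a \emph{transversal} zero of the single scalar function $g:=\det Q\colon\mathbb T^{2}\to\mathbb C$. At a point $\theta_{0}$ where $\sigma_{\min}$ vanishes simply while the remaining singular values stay bounded below, $\prod_{j\ge 2}\sigma_{j}(\theta)$ is continuous and positive, so
\begin{equation*}
E_{\pm}(\theta)=\pm\sigma_{\min}(Q(\theta))=\pm\frac{|g(\theta)|}{\prod_{j\ge 2}\sigma_{j}(\theta)}=\pm\bigl(c_{0}+o(1)\bigr)\,|g(\theta)|,\qquad c_{0}>0.
\end{equation*}
If moreover the real Jacobian $Dg(\theta_{0})\colon\mathbb R^{2}\to\mathbb C\cong\mathbb R^{2}$ is invertible, then $|g(\theta)|$ is comparable to $|Dg(\theta_{0})(\theta-\theta_{0})|$, so $E_{\pm}$ form a genuine (possibly tilted) cone at $\theta_{0}$. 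Thus it suffices to produce one zero of $g$ at which $Dg$ is nonsingular.

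Next I would compute $g$ explicitly in a Landau gauge. Expanding $\det Q(\theta)$ over the perfect matchings of the magnetic cell (a transfer-matrix, Chambers-type evaluation), the Bloch phases $e^{i\theta_{1}},e^{i\theta_{2}}$ enter only through the bonds winding around the two cycles of $\mathbb T^{2}$; by the Newton-triangle structure of the honeycomb dimer polynomial, $g$ is affine in each with no cross term, of the shape
\begin{equation*}
g(\theta)=D_{0}+\gamma_{1}e^{i\theta_{1}}+\gamma_{2}e^{i\theta_{2}},
\end{equation*}
with $D_{0},\gamma_{1},\gamma_{2}$ explicit in $p,q$ (for $q=1$ one recovers $g=1+e^{i\theta_{1}}+e^{i\theta_{2}}$, the classical graphene dispersion). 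As $(\theta_1,\theta_2)$ varies, $\gamma_{1}e^{i\theta_{1}}+\gamma_{2}e^{i\theta_{2}}$ sweeps the annulus of radii $\bigl[\,\bigl||\gamma_{1}|-|\gamma_{2}|\bigr|,\;|\gamma_{1}|+|\gamma_{2}|\,\bigr]$ about the origin, so $g$ has a zero iff $|D_{0}|$ lies in this range; and at any such crossing with \emph{strict} inequalities the spokes $\gamma_{1}e^{i\theta_{1}},\gamma_{2}e^{i\theta_{2}}$ are not collinear, which is precisely the invertibility of $Dg(\theta_{0})$. Producing the Dirac cone is thereby reduced to the coefficient inequality
\begin{equation*}
\bigl||\gamma_{1}|-|\gamma_{2}|\bigr|<|D_{0}|<|\gamma_{1}|+|\gamma_{2}|.
\end{equation*}

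The main obstacle is exactly this last step: evaluating $D_{0},\gamma_{1},\gamma_{2}$ in closed form and establishing the strict two-sided bound for every coprime pair $(p,q)$, uniformly in $q$. The coefficients $\gamma_{1},\gamma_{2}$ are products of hopping phases around a single winding cycle and should have equal modulus by the symmetry exchanging the two lattice directions, collapsing the condition to $0<|D_{0}|<2|\gamma_{1}|$; the delicate part is controlling $D_{0}$, a sum over non-winding matchings carrying the full magnetic flux pattern, and ruling out the degenerate boundary value at which the cone would flatten. I expect this to require the transfer-matrix structure inherited from the associated Hofstadter/almost-Mathieu operator, as in \cite{BHJ17}, using that the relevant discriminant stays strictly inside its band edges at $E=0$; the symmetry-reduced form of the secular equation at energy zero is what should make the bound hold for all rational flux.
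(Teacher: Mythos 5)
Your reduction is sound as far as it goes: the chiral block structure $H(\theta)=\bigl(\begin{smallmatrix}0 & Q\\ Q^* & 0\end{smallmatrix}\bigr)$ does hold for every rational flux (it is the matrix $M_{\tinyvarhexagon}(k)$ of \eqref{def:MG}), the middle surfaces are $\pm\sigma_{\min}(Q)$, and a zero of $\det Q$ at which the real Jacobian is invertible does force a simple vanishing of $\sigma_{\min}$ (two singular values vanishing simultaneously would make $|\det Q|=O(|\theta-\theta_0|^2)$ by Lipschitz continuity of singular values, contradicting transversality) and hence a cone, up to the linear change of variables that the paper also performs. The genuine gap is the step you yourself flag as the ``main obstacle'': you never compute $D_0,\gamma_1,\gamma_2$, never verify that $\det Q$ really has the asserted form with no cross term, and never establish the strict two-sided triangle inequality, so as written the proof produces no Dirac point for any $q\geq 2$. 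This gap is fillable, and more easily than you anticipate: with $3\mathcal A = I_q+e^{ik_1}J_{p,q}+e^{ik_2}K_q$, the matrix is diagonal-plus-cyclic-shift, so only the identity permutation and the full $q$-cycle contribute to the determinant, giving $\det(3\mathcal A)=\prod_{j=1}^{q}\bigl(1+e^{ik_1}e^{i(j-1)\phi}\bigr)+(-1)^{q+1}e^{iqk_2}$; since $\gcd(p,q)=1$ makes $e^{i\phi}$ a primitive $q$-th root of unity, the product collapses to $1+(-1)^{q+1}e^{iqk_1}$, whence $\det(3\mathcal A)=1+(-1)^{q+1}\bigl(e^{iqk_1}+e^{iqk_2}\bigr)$. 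So $|D_0|=1$, $|\gamma_1|=|\gamma_2|=1$, the strict inequalities $0<1<2$ hold for every coprime $(p,q)$, and no transfer-matrix or Hofstadter-discriminant input is needed. You should also note that your zeros come in the expected multiplicity $2q^2$ per Brillouin zone (the equation $e^{iqk_1}+e^{iqk_2}=(-1)^q$ has $2q^2$ solutions in $\mathbb T^2_*$), consistent with \eqref{eq:even_max}--\eqref{eq:odd_max}.

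For comparison, the paper argues on the squared operator: it invokes the results of \cite{HKL16} that the $2q$ bands are non-overlapping with $B_q\cap B_{q+1}=\{0\}$, uses the Chambers formula \eqref{eq:Chambers_MT} for $\det(M_T(k)-\lambda)$ to locate the touching quasimomentum $\tilde k$ as the maximizer of $g_q$, and then differentiates the characteristic polynomial implicitly to compute the Hessian of the bottom eigenvalue $E_1$ of $M_T+3I_q$ at $\tilde k$, obtaining a positive definite matrix whose square root yields the cone together with an explicit Fermi velocity \eqref{eq:well} that is reused later in the semiclassical analysis. Your route, once the determinant identity above is supplied, is shorter and self-contained (it does not need the non-overlapping-bands input a priori, since transversality of $\det Q$ forces simplicity), but it delivers the cone only up to a linear distortion and an unspecified constant $c_0=\bigl(\prod_{j\geq 2}\sigma_j(\theta_0)\bigr)^{-1}$, so you would still need the paper's normalization to extract the Fermi velocity needed in \eqref{eq:Fermivel}.
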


Using the conical structures as a starting point, we are able to carry out the semi-classical analysis and obtain the expansion of the density of states (DOS) near the energy zero for flux $2\pi p/q+h$ with $h$ being the semi-classical parameter, see Theorems \ref{theo1}\footnote{Theorem \ref{theo1} actually proves the expansion of DOS for the operator with disorder.} and \ref{theol}. 
This in particular allows us to prove the localization of the spectrum in Landau bands near the zero energy, characterized by the Bohr-Sommerfeld condition, and the existence of spectral gaps between any two consecutive Landau bands. Our framework follows \cite{HS20}, but uses independent arguments for the derivation of the density of states and the presence of spectral gaps in between Landau bands. 
In addition to the study of the discrete magnetic Laplacian on the honeycomb lattice, we also derive the expansion of DOS for the operator $H^h_{{\scriptscriptstyle{ \blacksquare}},\lambda,\omega}$ on the $\Z^2$ lattice with small flux near the top and bottom of the spectrum, which is included in Theorem \ref{theo1}.

By combining the expansion of DOS with the St\v{r}eda formula, we are able to compute the Hall conductivity explicitly in each of the aforementioned spectral gaps for the non-random operators $H^{h}_{\tinyvarhexagon,\lambda=0}$ and $H^h_{{\scriptscriptstyle{ \blacksquare}},\lambda=0}$, thus giving a rigorous derivation of the Quantum Hall effect (QHE). 
%Let us note that for general Fermi energies, the Chern number of the Fermi projection does not possess closed-form expressions and can only be computed numerically from the TKNN formula \cite{TKNN,AEG}.%, see Figures \ref{fig:Hofstadter} and \ref{fig:HC}. 
%However, semiclassical arguments allow us to compute the Chern number of the Fermi projection close to the spectral edges. The result we obtain is in agreement with the experimental results for graphene \cite{Z}.
We then argue using the index-theoretic formulation that the Hall conductivity is invariant under a random perturbation in the spectral gaps between any two consecutive disorder-broaden Landau bands. The study of the quantum Hall effect of the continuous Laplacian in a homogeneous magnetic field is much simpler, as the (infinitely-degenerate) eigenfunctions are fully explicit and so all computations can be done analytically. In contrast to this, the discrete magnetic Laplacian, does not have point spectrum and closed-form describing it are also not available. This is a major difficulty in the discrete setting, which can be partly overcome by gap-labelling methods techniques as in \cite{AEG}. However, we would like to emphasize that such methods are usually not quantitative in the sense that they do not specify the Hall conductivity at prescribed energies. From our refined study of the density of states with error bounds, we are able to solve this problem and get precise information on the Hall conductivity in the gaps between Landau bands that are quantitative. 
For the cleanness of the presentation, we present below the QHE for small magnetic fields, and refer the readers to Theorem \ref{theol} for $H^h_{\tinyvarhexagon,\lambda,\omega}$ with fluxes close to rationals.

\begin{prop}[QHE under weak disorder; Small magnetic fields]
\label{QHEran}
For sufficiently small magnetic flux $h>0,$ there are spectral gaps between disorder-broadened Landau bands up to some magnetic-dependent disorder parameter $\lambda_0(h)> 0$. 
In the spectral gap between two consecutive disorder-broadened Landau bands $B^h_{\tinyvarhexagon,\lambda,n}$ and $B^h_{\tinyvarhexagon,\lambda,n+1}$, the Hall conductivity $c_H$ with Fermi energy $\mu$ is quantized with its value given below.
\begin{align*}
c_H(H^h_{\tinyvarhexagon,\lambda,\omega},\mu) &=  \frac{2n+1}{2\pi}, \text{ with }-N_{\scriptscriptstyle\tinyvarhexagon}(h,\lambda_0) \le n \le N_{\scriptscriptstyle\tinyvarhexagon}(h,\lambda_0)\\
c_H(H^h_{{\scriptscriptstyle{ \blacksquare}},\lambda,\omega},\mu) &= \frac{n}{2\pi},  \text{ with } 1 \le n \le N_{\scriptscriptstyle\blacksquare}(h,\lambda_0)
\end{align*} 
\end{prop}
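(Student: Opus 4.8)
The plan is to combine three ingredients already developed in the paper: the quantitative density-of-states expansion of Theorem~\ref{theo1}, the St\v{r}eda formula expressing the Hall conductivity as the derivative of the integrated density of states (IDS) $N(\mu,h)$ with respect to the flux $h$, and the homotopy invariance of the index-theoretic (Chern number) formulation of $c_H$ in a spectral gap.

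First I would dispose of the non-random operators $H^h_{\tinyvarhexagon,\lambda=0}$ and $H^h_{{\scriptscriptstyle{ \blacksquare}},\lambda=0}$. By Theorem~\ref{theo1}, for sufficiently small $h$ the DOS concentrates in Landau bands separated by open spectral gaps, and $N(\cdot,h)$ is constant on each gap, where its value counts the spectral weight below $\mu$. For the $\mathbb Z^2$ lattice near the bottom edge the ordinary Landau levels each carry weight $\tfrac{h}{2\pi}$, giving $N(\mu,h)=\tfrac{nh}{2\pi}$ in the $n$-th gap and hence $c_H=\partial_h N=\tfrac{n}{2\pi}$. For the honeycomb lattice the relativistic Landau spectrum $E_k\sim\sgn(k)\sqrt{|k|\,h}$ together with the symmetry-protected zero mode at $E=0$ forces the odd-integer counting $N(\mu,h)=\tfrac{(2n+1)h}{2\pi}$ on the gap above the $n$-th relativistic level, whence $c_H=\tfrac{2n+1}{2\pi}$; particle--hole symmetry about $E=0$ yields the symmetric range $-N_{\scriptscriptstyle\tinyvarhexagon}\le n\le N_{\scriptscriptstyle\tinyvarhexagon}$. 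It is exactly here that Theorem~\ref{thm:Dirac} enters: the conical touching is what makes the Landau spectrum relativistic, and hence the quantization anomalous, rather than of ordinary harmonic-oscillator type.

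Next I would turn on disorder. Each Landau band carries spectral weight $O(\tfrac{h}{2\pi})$, while the Bohr--Sommerfeld spacing in Theorem~\ref{theo1} bounds the gap widths from below by a definite multiple of $h$ (for $\mathbb Z^2$) or of $\sqrt h$ (for the honeycomb, due to the $\sqrt{kh}$ level spacing). A Wegner estimate controlling the averaged DOS together with a Combes--Thomas bound shows that adding the i.i.d.\ potential $\lambda V_\omega$ broadens each band by at most $O(\lambda)$ almost surely; choosing $\lambda_0(h)$ so that this broadening is strictly smaller than the gap width keeps a genuine almost-sure spectral gap open for all $0\le\lambda<\lambda_0(h)$, which is precisely the gap between the disorder-broadened bands $B^h_{\tinyvarhexagon,\lambda,n}$ and $B^h_{\tinyvarhexagon,\lambda,n+1}$. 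On such a gap the Fermi projection $P_\mu=\chi_{(-\infty,\mu]}(H^h_{\tinyvarhexagon,\lambda,\omega})$ is a covariant family depending norm-continuously on $\lambda$.

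Finally I would invoke the index-theoretic formulation: in a spectral gap $c_H$ equals a fixed multiple of a Fredholm index, equivalently the noncommutative Chern number of $P_\mu$ in the covariant formalism of Bellissard--van Elst--Schulz-Baldes, which is integer-valued and invariant under norm-continuous deformations of $P_\mu$. Since $P_\mu$ varies continuously along $\lambda\in[0,\lambda_0(h))$ without the gap ever closing, this integer cannot jump, so $c_H(H^h_{\tinyvarhexagon,\lambda,\omega},\mu)$ and $c_H(H^h_{{\scriptscriptstyle{ \blacksquare}},\lambda,\omega},\mu)$ equal the $\lambda=0$ values computed above. I expect the main obstacle to be the quantitative gap-persistence step: one must show the disorder broadening stays \emph{strictly} below the small, $h$-dependent Landau gaps, which forces the error terms in the DOS expansion of Theorem~\ref{theo1} to be uniform and sharp enough to beat the Wegner bound and thereby fix an admissible threshold $\lambda_0(h)$.
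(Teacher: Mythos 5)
Your overall architecture matches the paper's: compute $c_H$ at $\lambda=0$ from the DOS expansion via the St\v{r}eda formula, keep the gaps open for $\lambda<\lambda_0(h)$, and transport the quantized value through the disorder by an index/Chern-number rigidity argument. Two of your three steps are executed differently from the paper, and one step has a soft spot. For gap persistence you invoke a Wegner estimate; this is unnecessary machinery, since $V_\omega$ has compact support and the paper simply uses the deterministic inclusion $\Sigma(H^h_{\lambda,\omega})\subset\{z:\ d(z,\Sigma(H^h))\le\lambda\|V\|_\infty\}$ together with the $\mathcal O(h^3)$ (resp.\ $\mathcal O(h^{5/2})$) gap widths of Proposition \ref{prop:SG} to fix $\lambda_0(h)$ — your route works but buys nothing here. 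For the invariance under disorder you argue that $P_\mu$ is norm-continuous in $\lambda$ and appeal to homotopy invariance of the Fredholm index; the paper instead estimates the relative index $\operatorname{tr}(T_{\lambda,\omega}^3)-\operatorname{tr}(T_{0,\omega}^3)$ directly, splitting into a finite block controlled by strong resolvent convergence and a tail controlled uniformly by Combes--Thomas and the $\mathcal L^3$ summability \eqref{eq:series}. Your version is a legitimate and arguably cleaner alternative, but note that norm continuity of $P_\lambda$ alone is not enough: you still need the uniform off-diagonal decay (Combes--Thomas in the gap) to know that $P_\lambda-U_aP_\lambda U_a^*$ stays in the trace-ideal/compact class so that the Fredholm index is defined and stable along the path; that uniform estimate is exactly where the paper spends its effort.

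The soft spot is the $\lambda=0$ computation. Theorem \ref{theo1} gives $\widetilde{\operatorname{tr}}_\Lambda(\indic_{(-\infty,\mu]}(H^h))=\tfrac{nh}{2\pi}+\mathcal O(h^\infty)$ in the $n$-th gap, and you differentiate this in $h$ to get $c_H=\tfrac{n}{2\pi}$. A pointwise $\mathcal O(h^\infty)$ bound does not control the $h$-derivative of the error, so this step is not justified as written. The paper closes this by gap labelling: since $K_0(\mathcal A_\hbar)=\ZZ+\hbar\,\ZZ$, the trace of the Fermi projection is \emph{exactly} $\gamma_1+\gamma_2\hbar$ with integers $\gamma_1,\gamma_2$ locally constant in $h$; comparing with the DOS expansion over a range of $h$ pins down $\gamma_1=0$ and $\gamma_2=n$ (resp.\ $2n+1$), and only then is $D_h$ applied. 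You list the Chern-number rigidity as an ingredient but deploy it only for the disorder step; it is also needed here, before differentiating.
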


%that in a neighbourhood of this energy dynamical delocalization occurs for perturbations of {\it any} rational flux under weak disorder. We refer to the beginning of Section \ref{sec:mit} for precise definitions. We verify similar properties near the top and bottom of the spectrum of the discrete magnetic Anderson model on $\mathbb Z^2$ with weak magnetic fields, see Fig.\@ \ref{fig:square}.
%This allows us to study transport properties of graphene, see also \cite{Pe10} for the non-magnetic case and \cite{GS} for the magnetic case. 
%For our result on magnetic perturbations of non-zero rational magnetic fluxes on the hexagonal lattice, we refer the readers to Theorem \ref{theol}. Close to zero magnetic flux, we have as a corollary of a careful semiclassical analysis of the density of states:

Using the jump of Hall conductivity in each disorder-broadened Landau band, we then show that the discrete magnetic random Schr\"odinger operators undergo metal/insulator transitions, using the framework of Germinet-Klein \cite{GK} and Klein-Germinet-Schenker \cite{GKS}. More precisely, we prove the existence of (at least one) mobility edge near the Landau levels. 
Again, we only present the small magnetic fields case here, and refer the readers to Theorem \ref{theol} for $H^h_{\tinyvarhexagon,\lambda,\omega}$ with perturbations of rational fluxes.
\begin{theo}[Dyn. Delocalization; Small fields]
\label{Deloc}
Under the same assumptions as Proposition \ref{QHEran}, there exists in each disorder-broadened Landau band (at least) one energy that belongs to the region of dynamical delocalization. 
\end{theo}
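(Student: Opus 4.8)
The plan is to run the classical ``nonzero Hall transport forces delocalization'' argument of Germinet--Klein--Schenker \cite{GK,GKS}. Its logical engine is that the Hall conductivity $c_H$ is locally constant in the Fermi energy $\mu$ throughout the region of dynamical localization, so that any \emph{strict} change of $c_H$ across an energy interval obstructs that interval from lying entirely in the localization region. Since Proposition \ref{QHEran} records exactly such a change of $c_H$ when $\mu$ is moved from the gap just below a disorder-broadened Landau band to the gap just above it, the theorem will follow by contradiction: a fully localized band would force the two flanking quantized values to coincide.

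First I would verify that $H^h_{\tinyvarhexagon,\lambda,\omega}$ and $H^h_{{\scriptscriptstyle{ \blacksquare}},\lambda,\omega}$ fit the abstract framework of \cite{GK,GKS}, namely ergodicity and covariance under the magnetic translations, a Wegner estimate and regularity of the integrated density of states from the i.i.d.\@ potential, and the a priori existence of a region of complete dynamical localization inside the spectral gaps and at the band edges. The last ingredient is supplied by the DOS expansion of Theorem \ref{theo1}, which both separates the Landau bands $B^h_{\tinyvarhexagon,\lambda,n}$ (resp.\@ $B^h_{{\scriptscriptstyle{ \blacksquare}},\lambda,n}$) by gaps for $\lambda<\lambda_0(h)$ and identifies those gaps as localization intervals where the Fermi projection $P_\mu$ is well defined. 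In these gaps the index-theoretic invariance of $c_H$ under the random perturbation (as asserted for the gap values in Proposition \ref{QHEran}) provides the two quantized levels that will be compared.

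The core step is the constancy of $c_H$ on the localization region. Here I would represent the Hall conductivity as the (relative) index of a pair of spectral projections, i.e.\@ the Fredholm index of $P_\mu U P_\mu$ for the flux-insertion switch unitary $U$; on the localization set this index is well defined, integer-valued, and continuous in $\mu$, hence locally constant. Equivalently, complete dynamical localization on an interval makes the current across any energy in that interval vanish, so $\mu\mapsto c_H(\cdot,\mu)$ cannot move there. Assembling this with the previous step: if some band $B^h_{\tinyvarhexagon,\lambda,n}$ lay entirely in the localization region, then the gap below and the gap above it would belong to a single localization interval, forcing the value in the lower gap to equal that in the upper gap; but Proposition \ref{QHEran} gives the strict jump $\tfrac{2n-1}{2\pi}\to\tfrac{2n+1}{2\pi}$ on the honeycomb lattice (resp.\@ $\tfrac{n-1}{2\pi}\to\tfrac{n}{2\pi}$ on $\Z^2$), a contradiction. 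Hence each band must contain at least one energy in the region of dynamical delocalization.

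I expect the main obstacle to be the verification in the second step that the discrete magnetic operators genuinely satisfy the quantitative hypotheses underlying the GKS index machinery: unlike the continuum Landau Hamiltonian there are no explicit eigenprojections, so one must control the Hilbert--Schmidt/trace-class bounds on the off-diagonal kernels of $P_\mu$ and the stability of the index under the i.i.d.\@ perturbation entirely through the DOS expansion and finite-volume approximations, and one must confirm that the localization intervals produced by Theorem \ref{theo1} are wide enough to absorb the gaps flanking each band. Once the framework applies, the index-constancy and the final contradiction are essentially formal.
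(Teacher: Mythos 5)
Your proposal is correct and follows essentially the same route as the paper: assume a disorder-broadened Landau band lies entirely in the region of dynamical localization and derive a contradiction with the jump of the quantized Hall conductivity across that band recorded in Proposition \ref{QHEran}. The paper implements the key step --- that a fully localized band contributes nothing to the Hall conductivity --- by decomposing the band's spectral projection into finite-rank covariant eigenprojections (each with vanishing relative index), then using additivity of the $2$-cocycle $\Theta$ together with SUDEC to pass to the limit; this is the concrete form of the index-constancy on the localization region that you invoke.
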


The paper is structured as follows: Section 2 serves as preliminary and background, the study of DOS is presented in Section 3, QHE is studied in Section 4, dynamical delocalization is proved in Section 5, the proof of Theorem \ref{thm:Dirac} is presented in Section 6, and finally the semiclassical analysis together with the proofs of Theorems \ref{theo1} and \ref{theol} are presented in Section 7.

\label{s:intr}

\smallsection{Acknowledgements} 
The first author is supported by the UK Engineering and Physical Sciences Research Council (EPSRC) grant EP/L016516/1 for the University of Cambridge Centre for Doctoral Training. The second author is partially supported by NSF-DMS 2053285.
The Cambridge Centre for Analysis is gratefully acknowledged (S.B). The first author is grateful to Gian-Michele Graf for bringing the St\v{r}eda formula to his attention. Helpful remarks and discussions with Svetlana Jitomirskaya, Hermann Schulz-Baldes and Maciej Zworski are gratefully acknowledged as well.

\smallsection{Notation}
%\medskip\noindent\textbf{Notation:}
$B_x(r)$ is the ball of radius $r$ centred at $x.$ We write $ f_\alpha =                                                                  
\mathcal O_\alpha ( g )_H $ for $ \| f \|_H \leq C_\alpha g $ and
$ f = \mathcal O ( h^\infty )_H  $ means that for any $ N $ there exists
$ C_N $ such that $ \|f \|_H \leq C_N h^N $. 
We write $\langle x\rangle:=\sqrt{1+|x|^2}$. $\mathcal U(\mathcal H)$ are the unitary operators on a Hilbert space $\mathcal H.$
The symbol class $\mathcal S_{h_0}$, of possibly matrix-valued symbols, is defined as 
\[\mathcal S_{h_0}:= \left\{ a(\bullet,h) \in C^{\infty}(T^*\RR):  \forall \alpha \in \mathbb N_0^2 \ \exists\, C_{\alpha}>0 \ \forall h \in [0,h_0]: \ \vert \partial^{\alpha}a(\bullet,h) \vert \le C_{\alpha} \right\}.\]
We write $a \sim \sum_{j=0}^{\infty} a_jh^j$ to denote an asymptotic expansion of symbols, cf. \cite[$4.4.2$]{ev-zw} where $a_j \in \mathcal S$, with 
\[ \mathcal S:=\left\{ a \in C^{\infty}(T^*\RR); \forall \alpha \in \mathbb N_0^2 \ \exists\, C_{\alpha}>0: \ \vert \partial^{\alpha}a \vert \le C_{\alpha} \right\}\] 
and denote the class of symbols allowing such an expansion by $\mathcal S^{\text{cl}}.$
The standard basis vectors of $\ell^2(\mathbb Z^2)$ are for $\gamma \in \mathbb Z^2$ denoted by $\delta_{\gamma}:=\left(\delta_{\gamma,\gamma'} \right)_{\gamma'}$ and occasionally by $\vec{e}_i$ if the Hilbert space is finite-dimensional. $\mathcal L(X,Y)$ are the bounded linear operators between normed spaces $X,Y$. $\mathbb E$ and $\operatorname{Var}$ denote expectation and variance.
The semiclassical Weyl quantization of a symbol $a \in \mathcal S_h(T^*\mathbb R)$ is for suitable functions $u$ defined as
\[(\operatorname{Op}_h^{\text{w}}(a)u)(x):=(a^{\text{w}}(x,hp_x,h)u)(x):= \frac{1}{2\pi h} \int_{\mathbb R} \int_{\mathbb R} e^{ \frac{i}{h} \langle x-y, \xi \rangle} a\left( \tfrac{x+y}{2},\xi,h \right) u(y) \ dy \ d\xi.  \] 
Here, $p_x:=-i\frac{d}{dx}.$ Conversely, we write $ \sigma\left(\operatorname{Op}_h^w(a)\right):=a$ to denote the Weyl symbol of a $\Psi$DO and $\sigma_0\left(\operatorname{Op}_h^w(a)\right)$ for the principal symbol. Analogously, higher order symbols are denoted by $\sigma_k$, respectively. The semiclassical wavefront set is denoted by $\operatorname{WF}_h$, see \cite[Sec.\@8.4]{ev-zw}.
We also write $\mathbb Z^2_*:=(2\pi \ZZ)^2.$
For a subset $I \subset \mathbb R$ we denote by $\oint_{I}$ a contour integral over a path in the complex plane that encloses $I$ sufficiently close. 

\medskip

The spectrum of an operator $T$ is denoted by $\Sigma(T).$ We sometimes use the convention $\hbar:=\frac{h}{2\pi}$ where $h$ is the \emph{magnetic flux} (thus this notation should not be confused with Planck's constant). The $p$-th Schatten class is denoted by $\mathcal L^p.$ The symplectic form on $\mathbb{R}^2$ is denoted by $\sigma_{\text{symp}}(\gamma,\delta):=\gamma_1\delta_{2}-\delta_1\gamma_2.$ Finally, we use Wirtinger derivatives $D_{ z}:=\frac{1}{2}(\partial_x - i \partial_y)$ and $ D_{\overline{z}}:=\frac{1}{2}(\partial_x + i \partial_y)$ where we recall that $D_zf$ is nothing but the derivative of a holomorphic function $f$. In particular, holomorphic functions satisfy $ D_{\overline{z}}f=0$ by the Cauchy-Riemann equations. $\mathscr S(\mathbb Z^2)$ are the sequences that decay faster than any polynomial power. We also write $\mathscr S(\RR^n)$ or $\mathscr S(\CC^n)$ for the Schwartz functions on $\RR^n$ or $\CC^n.$
We also define for one of the two lattices $\Lambda$ we study in this article, the truncated sets
\begin{equation}
\begin{split}
\label{lattice}
\Lambda_{L}&:= \Big\{ y \in \mathbb{R}^2; y = \gamma_1\vec{b}_1+\gamma_2\vec{b}_2+[y]\text{ for } \gamma \in \left\{-L,...,L \right\}^2 \text{ and } [y] \in W_{\Lambda} \Big\}
\end{split}
\end{equation}
where $\vec{b}_1$ and $\vec{b}_2$ are the basis vectors of the lattice and $W_{\Lambda}$ a fundamental domain.

\section{Lattices and discrete random Schr\"odinger operators}
\begin{figure*}[t!]
    \centering
    \begin{subfigure}[t]{0.5\textwidth}
        \centering
        \includegraphics[height=2.0in]{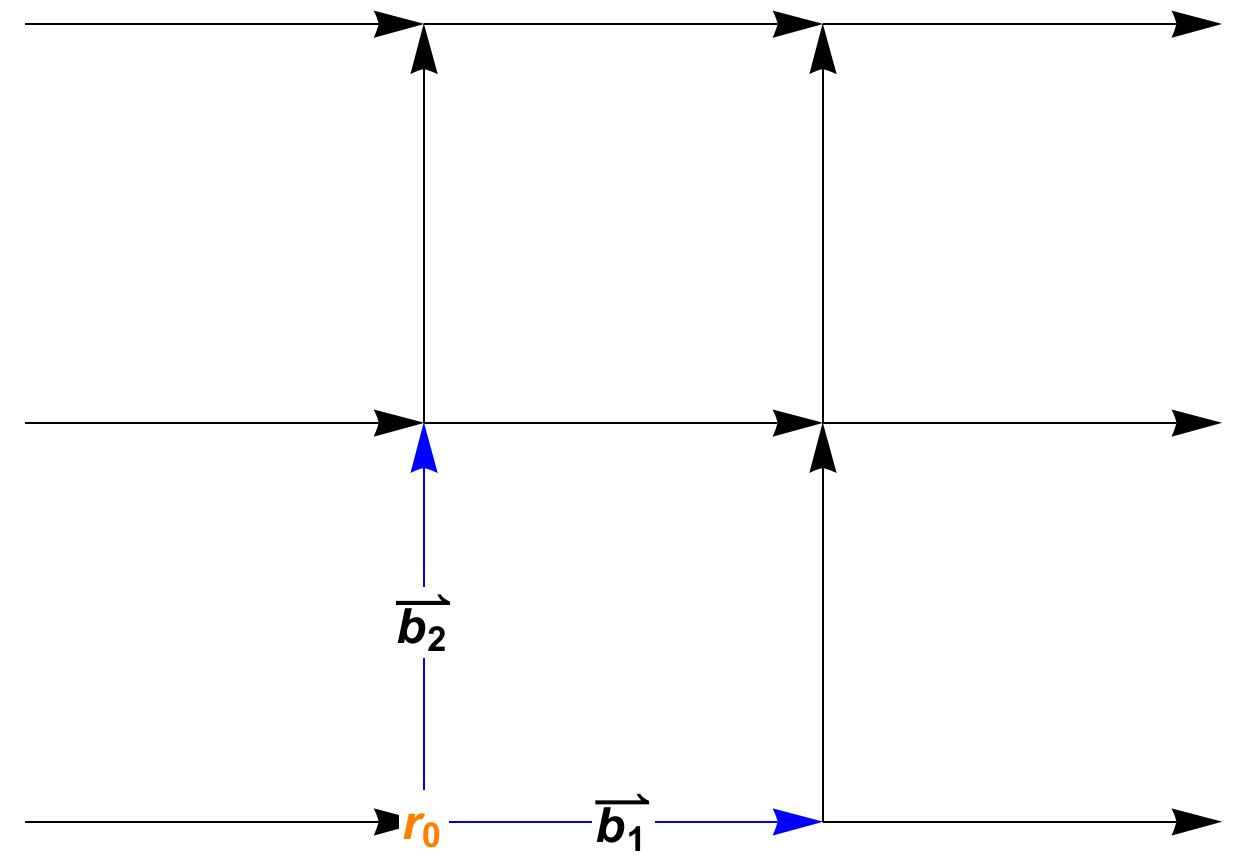}
        \caption{The square lattice $\Lambda_{\blacksquare}.$}
        \label{fig:a}
    \end{subfigure}%
    ~ 
    \begin{subfigure}[t]{0.5\textwidth}
        \centering
        \includegraphics[height=2.0in]{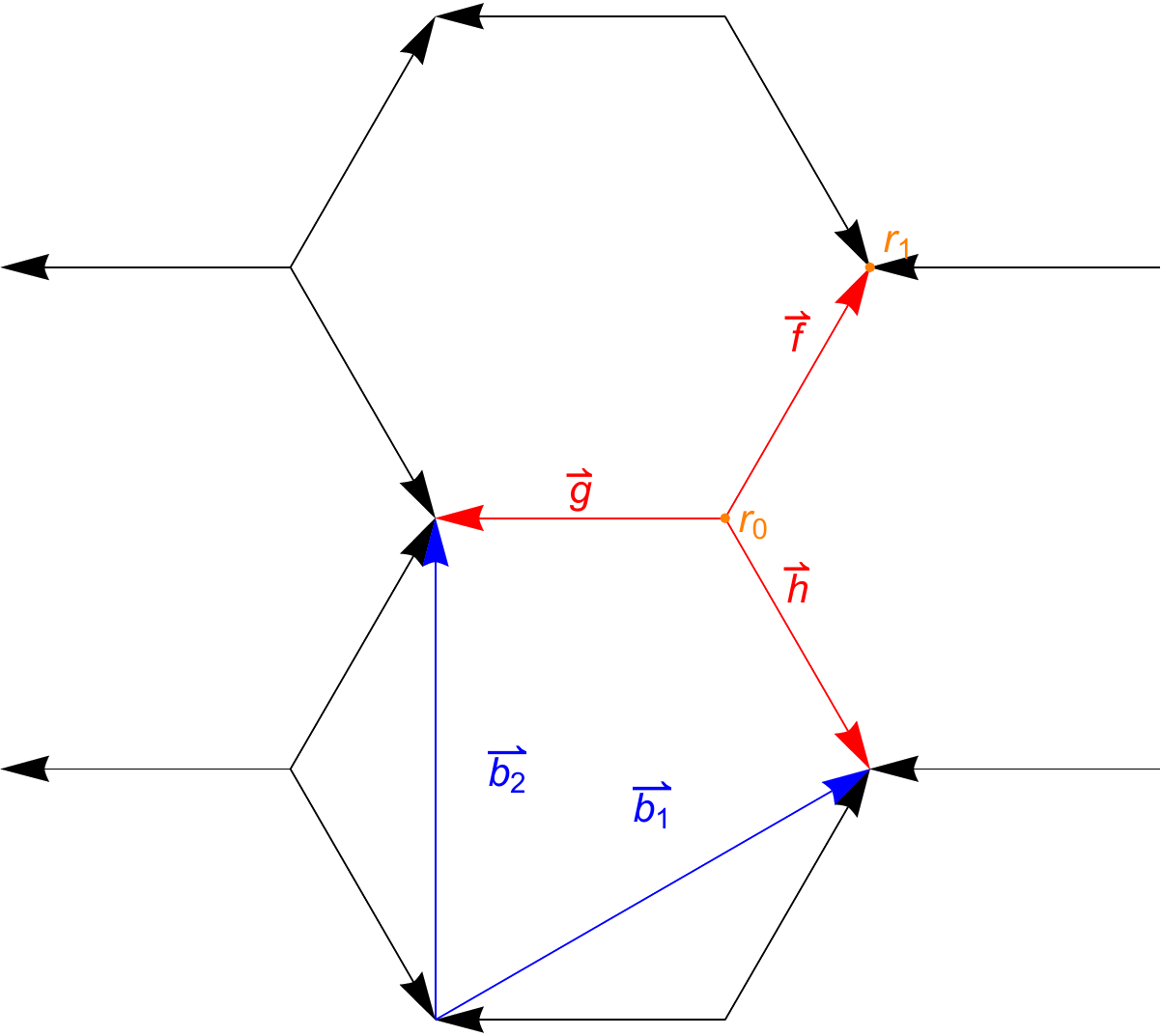}
        \caption{The hexagonal lattice $\Lambda_{\hexagon}.$}
        \label{fig:2}
    \end{subfigure}
    \caption{Fundamental cells of lattices.}
\end{figure*}
\subsection{Geometry of lattices}
\label{sec:lattices}

 \medskip

\smallsection{The $\mathbb Z^2$ lattice $ \blacksquare$, see Fig.\@ \ref{fig:a}}
The square lattice $\Lambda_{{\scriptscriptstyle{ \blacksquare}}}:=\mathbb Z^2$ is spanned by basis vectors $\vec{b}_{\scriptscriptstyle{ \blacksquare},1}:=(1,0), \ \vec{b}_{\scriptscriptstyle{ \blacksquare},2}:=(0,1)$ and its fundamental cell $W_{\Lambda_{{\scriptscriptstyle{ \blacksquare}}}}$ consists of just the vertex $r_0:=(0,0)$.
Although we do not study operators on the associated graph, we also introduce the set of edges $\mathcal E_{{\scriptscriptstyle{ \blacksquare}}}$ on the square graph consisting of the two edges
\begin{equation}
\label{eq:fgh}
\begin{split}
\vec{f}_{\uparrow} & :=\operatorname{conv}\left(\left\{r_0,(1,0)\right\}\right) \ \backslash \ \left\{r_0,(1,0)\right\},  \\ 
\vec{f}_{\rightarrow} & :=\operatorname{conv}\left(\left\{r_0,\left(0,1\right)\right\}\right) \ \backslash \  \left\{r_0,\left(0,1\right)\right\} \\
\end{split}
\end{equation}
and translations thereof by basis vectors $\vec{b}_{\scriptscriptstyle{ \blacksquare},1},\vec{b}_{\scriptscriptstyle{ \blacksquare},2}$, where $\operatorname{conv}$ denotes the convex hull. To orient the graph, we also define a map $i:\mathcal E_{{\scriptscriptstyle{ \blacksquare}}} \rightarrow  \Lambda_{{\scriptscriptstyle{ \blacksquare}}}$ by $i(\vec{f}_{\uparrow}):=i(\vec{f}_{\rightarrow}):=r_0$ and extend it to all edges by translation
\[i(\vec{f}_{\uparrow}+\gamma) =i(\vec{f}_{\rightarrow}+\gamma)=r_0 + \gamma \text{ for } \gamma \in \mathbb Z^2. \] 

 \medskip

 Let us now turn to the hexagonal lattice: \newline
 \smallsection{The hexagonal lattice $\varhexagon$, see Fig.\@ \ref{fig:2}}
The hexagonal lattice $\Lambda_{\tinyvarhexagon}$ is obtained by translating its fundamental cell $W_{\Lambda_{\tinyvarhexagon}}$, consisting of vertices
\begin{equation}
\label{eq:r0r}
r_0:=(0,0) , \ \ \  r_1:=\left(\tfrac{1}{2},\tfrac{\sqrt{3}}{2}\right)
\end{equation}
along the basis vectors of the lattice.
The basis vectors are
\begin{align}
\vec{b}_{\tinyvarhexagon,1}:= \left(\tfrac{3}{2}, \tfrac{\sqrt{3}}{2} \right) \  \text{ and } \ 
\vec{b}_{\tinyvarhexagon,2}:= \left(0,\sqrt{3}\right).
\end{align}

As in the case of the $\mathbb Z^2$ lattice, we also introduce auxiliary edges
\begin{equation}
\begin{split}
\vec{f} & :=\operatorname{conv}\left(\left\{r_0,r_1\right\}\right) \ \backslash \ \left\{r_0,r_1\right\},  \\ 
\vec{g}& :=\operatorname{conv}\left(\left\{r_0,\left(-1,0\right)\right\}\right) \ \backslash \left\{r_0,\left(-1,0\right)\right\} , \\
\vec{h}& :=\operatorname{conv}\left(\left\{r_0,\left(\tfrac{1}{2},-\tfrac{\sqrt{3}}{2}\right)\right\}\right) \ \backslash \ \left\{r_0,\left(\tfrac{1}{2},-\tfrac{\sqrt{3}}{2}\right)\right\},
\end{split}
\end{equation}
and define the set of all edges $\mathcal E_{\tinyvarhexagon}$ as the set of all translates of these three edges along the basis vectors $\vec{b}_{\tinyvarhexagon,1},\vec{b}_{\tinyvarhexagon,2}$ of the hexagonal lattice.

We call translates of $r_0$ by basis vectors $\vec{b}_{\tinyvarhexagon,1},\vec{b}_{\tinyvarhexagon,2}$ \emph{initial vertices} $\Lambda_{\tinyvarhexagon}^i$ whereas translates of $r_1$ will be referred to as \emph{terminal vertices} $\Lambda_{\tinyvarhexagon}^t$. Moreover, we consider maps $i:\mathcal E_{\tinyvarhexagon} \rightarrow \Lambda_{\tinyvarhexagon}$ and $t:\mathcal E_{\tinyvarhexagon} \rightarrow \Lambda_{\tinyvarhexagon}$ that map edges to the respective initial or terminal vertex they contain. 

In the sequel, we will use the isomorphism $\ell^2(\Lambda_{\tinyvarhexagon}) \simeq \ell^2(\ZZ^2; \CC^2)$ as the honeycomb has two basis vectors and two vertices in its fundamental domain. More generally, any lattice with $\Lambda$ spanned by two basis vectors with $n$ vertices in its fundamental domain satisfies $\ell^2(\Lambda) \simeq \ell^2(\ZZ^2;\CC^n).$

\subsection{Discrete random Schr\"odinger operators}\label{sec:defH}
We consider a constant magnetic field. The vector potential $\textbf{A} $ is a one form on $ \RR^2 $ and the magnetic field is given by $ \textbf{B} = d \textbf{A} $. For homogeneous magnetic fields
\begin{equation}
\label{eq:magfield}
\textbf{B} : =B \ dx_1 \wedge dx_2
\end{equation}
we can choose a symmetric gauge for the vector potential $ \mathbf A $ such that
\begin{equation}
\label{eq:gauge}
\textbf{B} = d\textbf{A} , \ \ \ \textbf{A}=\tfrac12 {B} \left(-x_2 \, dx_1 +x_1 \, dx_2 \right).
\end{equation}
The discrete magnetic Laplacians (DMLs) with single-site disorder are then defined as follows: First, we take the scalar potential $A_{\vec{e}} \in C^{\infty}(\vec{e})$ along edges $\vec{e} =e_1 \ dx_1^* + e_2\ dx_2^*$ of the respective graph, where $dx_j (dx_i^*)=\delta_{i,j}$ is defined by evaluating the 1-form on the graph along the vector field generated by the respective edge $\vec{e}$:
\begin{equation}
\begin{split}
\label{eq:edgeA}
A_{\vec{e} }(t)&:= \textbf{A}\left(i(\vec{e} )+t\vec{e}  \right)\left(e_1 \ dx_1^* + e_2 \ dx_2^* \right) =\textbf{A}\left(i(\vec{e} )\right)\left(e_1 \ dx_1^* + e_2\ dx_2^* \right).
\end{split}
\end{equation}
The quantities $A_{\vec{e}}$ on the
square lattice are given by
\begin{equation}
\label{eq:squareA}
A_{\vec{f}_{\uparrow}+\gamma_1 \vec{b}_{\scriptscriptstyle{ \blacksquare},1}+\gamma_2\vec{b}_{\scriptscriptstyle{ \blacksquare},2}}=\tfrac{h_{{\scriptscriptstyle{ \blacksquare}}}}{2} \gamma_1 \text{ and } A_{\vec{f}_{\rightarrow}+\gamma_1\vec{b}_{\scriptscriptstyle{ \blacksquare},1}+\gamma_2\vec{b}_{\scriptscriptstyle{ \blacksquare},2}}=-\tfrac{h_{{\scriptscriptstyle{ \blacksquare}}}}{2} \gamma_2
\end{equation}
and the quantities $A_{\vec{e}}$ on the
hexagonal lattice are explicitly given by
\begin{equation}
\label{Vertauscher}
\begin{split}
A_{\vec{f}+\gamma_1 \vec{b}_{\tinyvarhexagon,1}+\gamma_2\vec{b}_{\tinyvarhexagon,2}} &=  \tfrac{h_{\tinyvarhexagon}}{6}(\gamma_1-\gamma_2) , \
A_{\vec{g}+\gamma_1 \vec{b}_{\tinyvarhexagon,1}+\gamma_2\vec{b}_{\tinyvarhexagon,2}}=  \tfrac{h_{\tinyvarhexagon}}{6}(\gamma_1+2\gamma_2) , \text{ and }\\
A_{\vec{h}+\gamma_1 \vec{b}_{\tinyvarhexagon,1}+\gamma_2\vec{b}_{\tinyvarhexagon,2}} &=  -\tfrac{h_{\tinyvarhexagon}}{6}(2\gamma_1+\gamma_2)
\end{split}
\end{equation}
where the magnetic flux for either lattice is defined as 
\begin{equation}
\label{eq:h}
 h_{{\scriptscriptstyle{ \blacksquare}}} := B \text{ and }h_{\tinyvarhexagon}:=\tfrac{B}{ | \vec{b}_1 \wedge \vec{b}_2 |}= \tfrac{3\sqrt{3}} 2 B. 
\end{equation} 
From this point on, we may suppress the dependence on the lattices in some notations if there is no ambiguity or if the results hold for both lattices.

We now define the discrete magnetic random Schr\"odinger operators:
\begin{defi}[Discrete magnetic Schr\"odinger operators]
We define discrete magnetic random Schr\"odinger operators $H^h_{{\scriptscriptstyle{ \blacksquare}}} \in \mathcal L(\ell^2(\Lambda_{{\scriptscriptstyle{ \blacksquare}}}))$ and $H^h_{\tinyvarhexagon} \in \mathcal L(\ell^2(\Lambda_{\tinyvarhexagon}))$ on the square $\scriptscriptstyle\blacksquare$, using \eqref{eq:squareA}, and hexagonal $\scriptscriptstyle\tinyvarhexagon$ lattice, using \eqref{Vertauscher}, respectively
\begin{equation}
\begin{split}
\label{discreter}
(H^h_{{\scriptscriptstyle{ \blacksquare}},\lambda,\omega} u)(\gamma)&:=\frac{1}{4} \Bigg( e^{i h \gamma_2/2} u(\gamma +\vec{f}_{\rightarrow})+e^{-i h \gamma_2/2} u(\gamma - \vec{f}_{\rightarrow}) \\
& \qquad + e^{-i h \gamma_1/2} u(\gamma+\vec{f}_{\uparrow}) + e^{i h \gamma_1/2} u (\gamma-\vec{f}_{\uparrow}) \Bigg) + \lambda V_{\omega}(\gamma)u(\gamma) \\
(H^h_{\tinyvarhexagon,\lambda,\omega} u)(v)&:=\frac{1}{3} \left(\sum_{\vec{e}  \in \mathcal{E}_{\tinyvarhexagon}, i(\vec{e} )=v } e^{-i  A_{\vec{e} }} u(t(\vec{e} )) +  \sum_{\vec{e}  \in \mathcal{E}_{\tinyvarhexagon}, t(\vec{e} )=v } e^{iA_{\vec{e}}} u(i(\vec{e})) \right) + \lambda V_{\omega}(v) u(v) 
\end{split}
\end{equation} 
where the parameter $\lambda>0$ measures the disorder strength. The random potential satisfies $V_{\omega}(v)=\omega(v)$, where $\{\omega(v)\}_{v\in \Lambda}$ is a family of  i.i.d with common probability distribution $\nu$ of compact support on $\mathbb R$. We write $(\Omega, \mathbb P)$ the underlying probability space, and $\mathbb E$ the expectation. 
\end{defi}
We will write $(\Omega, \mathbb P)$ for the underlying probability space, hence $\Omega=\times_{v\in \Lambda} \mathbb R$, and $\mathbb P=\times_{v\in \Lambda} \nu$. 
We define the shifts operators $\{T_{\delta}^{\Omega}\}_{\delta\in \Z^2}$ on $\Omega$ by 
\begin{align}\label{shift_Omega}
T_{\delta}^{\Omega} \omega(v)=\omega(v-\delta_1\vec{b}_1-\delta_2\vec{b}_2).
\end{align}
The sample space $\Omega$ of the configuration space of impurities $(\Omega,\mathbb P)$ is, without loss of generality, assumed to be compact, cf. \cite[p.\@ 372f.]{C94} for details.

We then write $H^h:=H^h_{\lambda=0,\omega}$ for the non-random DML.

\begin{figure}
\includegraphics[height=6cm]{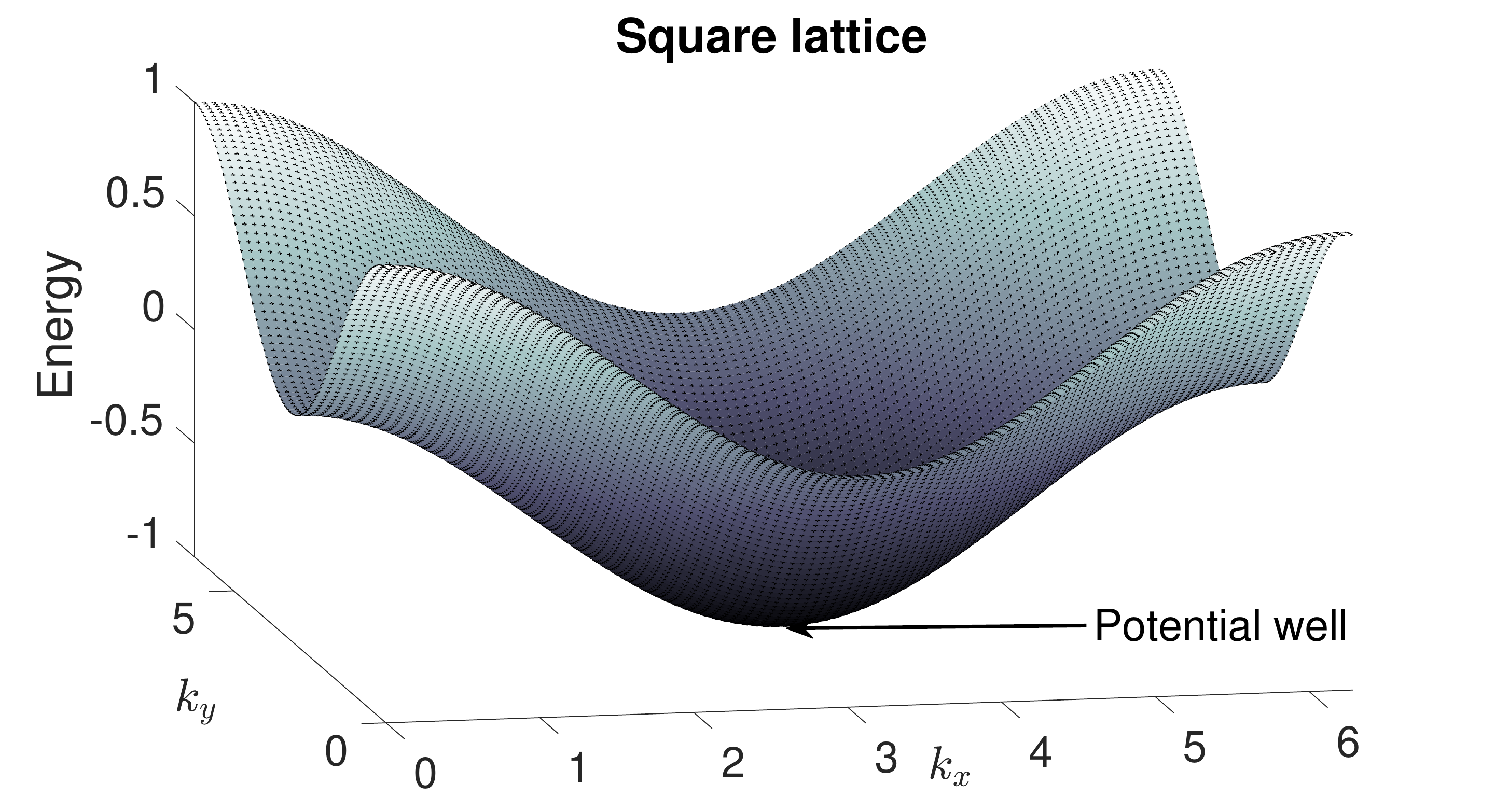} 
\caption{Energy band of the non-magnetic discrete Laplacian on $\Lambda_{{\scriptscriptstyle{ \blacksquare}}}$. The bottom of the spectrum forms a potential well.}
\label{fig:square}
\end{figure}
\subsection{Magnetic translations, regularized traces, and the density of states measure}
We start our analysis by introducing discrete translation operators $T_{\gamma}$ with $\gamma \in \mathbb Z^2$ for $\psi \in \ell^2(\Lambda)$
\begin{equation}
\label{stl}
T_{\gamma}\psi(v):=\psi(v-\gamma_1 \vec{b}_1- \gamma_2 \vec{b}_2).
\end{equation}
The magnetic Schr\"odinger operator $H^h$ does, in general, not commute with standard lattice translations $T_{\gamma},$ but with magnetic translations $T_{\gamma}^h$ instead. These operators and powers of them, do not commute with each other, if $T_{(0,1)}^h$ and $T_{(1,0)}^h$ generate the irrational  ($\hbar \in \mathbb R \backslash \mathbb Q$) rotation algebra.
Magnetic translations $T^{h}_{\gamma} : \ell^2(\Lambda) \rightarrow \ell^2(\Lambda)$ are unitary operators of the form
\begin{equation}
\label{MagTra}
T^{h}_{\gamma} \psi := u^h(\gamma)T _{\gamma}\psi, 
\ \ \ \psi=(\psi_v)_{v \in \Lambda} \in 
\ell^2(\Lambda), \ \vert u^h(\gamma) \vert=1, \ \gamma \in \mathbb{Z}^2
\end{equation}
that satisfy the commutation relation
\begin{equation}
\label{commrelkl}
T_{\gamma}^hT_{\delta}^h = e^{i h \sigma_{\text{symp}}\left( \gamma,\delta \right)  }  T_{\delta}^hT_{\gamma}^h.
\end{equation}

On the square lattice we define magnetic translations as
\begin{equation}
(T^h_{(1,0)} u)(\gamma) = e^{ih/2\gamma_2} u(\gamma-\vec{b}_1) \text{ and }(T^h_{(0,1)} u)(\gamma) = e^{-ih/2\gamma_1} u(\gamma-\vec{b}_2)
\end{equation}
and set then $T^h_{\gamma}:=(T^h_{(1,0)})^{\gamma_1}(T^h_{(0,1)})^{\gamma_2}.$ 

On the hexagonal lattice, the magnetic translations $T^{h}_{\gamma} : \ell^2(\Lambda_{\tinyvarhexagon}) \rightarrow \ell^2(\Lambda_{\tinyvarhexagon})$ are unitary operators of the above form \eqref{MagTra} with prefactors $(u^h(\gamma)_{v})_{v \in \Lambda_{\tinyvarhexagon}}$ defined as follows:
Let $\alpha(\gamma)= \tfrac{h}{6}(\gamma_1-\gamma_2)$, then we can define $u^B(\gamma)_{r_{*}-\delta_1\vec{b}_1-\delta_2\vec{b}_2} = e^{i\frac{h}{2}\sigma_{\text{symp}}(\gamma,\delta)} u^B(\gamma)_{r_{*}}$ with $*\in \{0,1\}$ where $u^B(\gamma)_{r_0}=1$ and $u^B(\gamma)_{r_1}=e^{i\alpha(\gamma)}.$
This way, the magnetic translations on both lattices satisfy
\begin{equation}
\label{eq:commreld}
 T_{\gamma}^hH^{h}_{\lambda,\omega} = H^{h}_{\lambda,T_{\gamma}^{\Omega}\omega}T_{\gamma}^h.
 \end{equation}

\begin{figure}
\includegraphics[height=6cm]{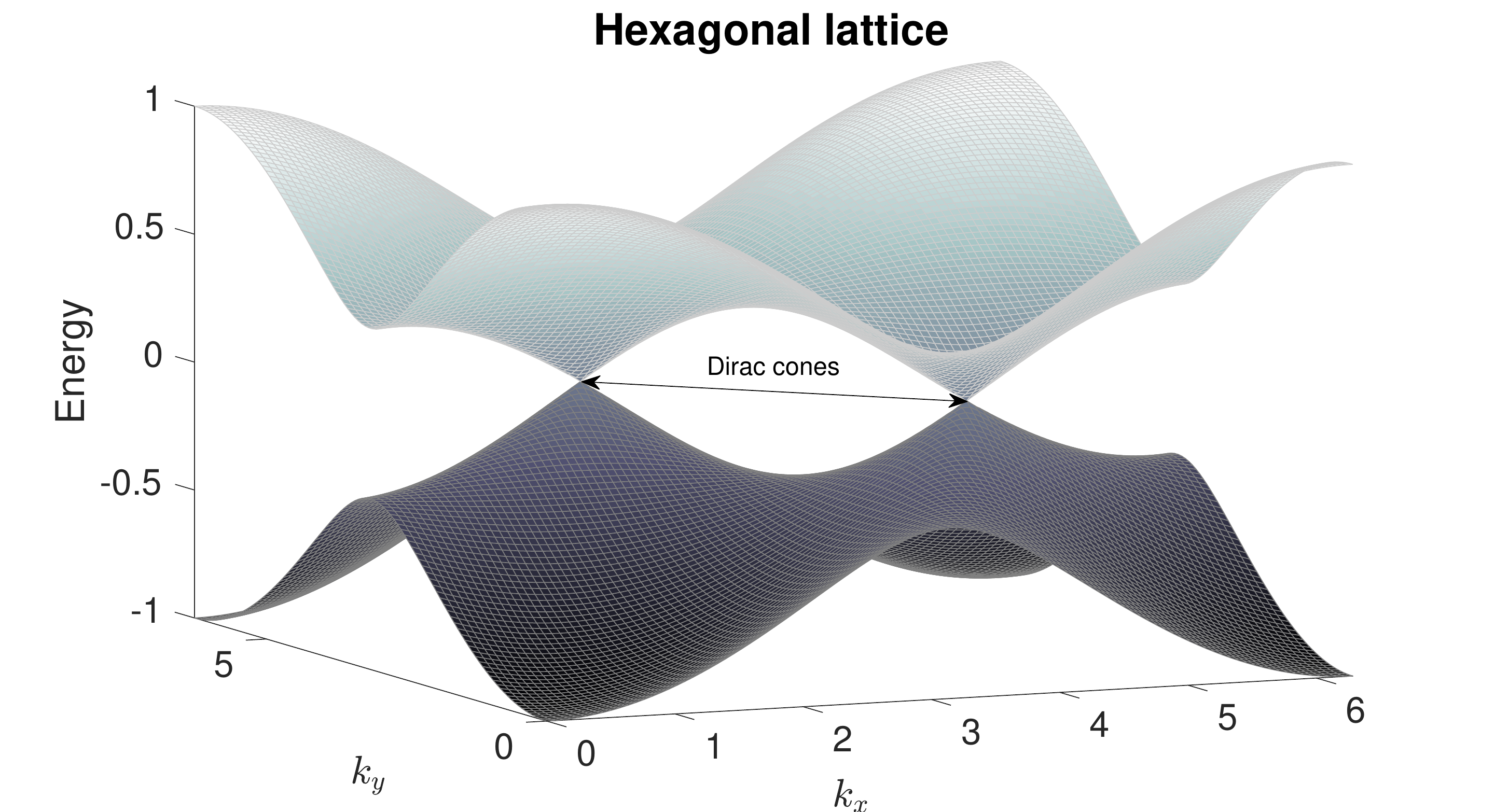} 
\caption{The two energy bands of the non-magnetic discrete Laplacian on $\Lambda_{\varhexagon}$. The Dirac cones are located at zero energy.}
\label{fig:hex}
\end{figure} 

The functional calculus implies that for measurable $f:\RR \rightarrow \RR$
\begin{equation}
\label{eq:equiv}
T_{\gamma}^h f(H^{h}_{\lambda,\omega} )= f(H^{h}_{\lambda,T_{\gamma}^{\Omega}\omega})T_{\gamma}^h
 \end{equation}
such that for the Schwartz kernels $f(H^h_{\lambda,\omega})[x,y] :=\langle \delta_x, f(H^h_{\lambda,\omega}) \delta_y \rangle$ on the diagonal
\begin{equation}
\label{eq:Schwartz}
f(H^h_{\lambda,\omega})[x,x] = f(H^h_{\lambda,T_{\gamma}^{\Omega}\omega})[x-\gamma_1\vec{b}_1-\gamma_2\vec{b}_2, x-\gamma_1\vec{b}_1-\gamma_2\vec{b}_2].
 \end{equation}

To study the density of states (DOS) of the model, we define, for a lattice $ \Gamma \subset \RR^2  $ and operators $ A \in \mathcal L ( \ell^2( \Gamma , \CC^n  ) ) $ 
given by $A (s) (\gamma):= \sum_{\beta \in \Gamma} A[\gamma,\beta]s(\beta)$
with possibly matrix-valued kernel $A[\gamma,\beta]$ \footnote{$(A[\gamma, \beta])_{i,j}=\langle \delta_{\gamma} \vec{e}_i, A (\delta_{\beta} \vec{e}_j\rangle)$, where $\{\delta_{\gamma}\}_{\gamma\in \Gamma}$ is the standard basis of $\ell^2(\Gamma)$ and $\{\vec{e}_m\}_{m=1}^n$ is the standard basis of $\CC^n$.}  $\in \mathbb{C}^{n \times n}$, the regularized trace
\begin{equation}
\label{discretetrace}
\widetilde{\operatorname{tr}}_{\Gamma}(  A ):= \lim_{r \rightarrow \infty} \frac{1}{\left\lvert B_0(r) \right\rvert} \sum_{\gamma \in \Gamma  \cap B_0(r)} \operatorname{tr}_{\mathbb{C}^{n}} A [ \gamma, \gamma ]
\end{equation}
provided the limit exists.

Birkhoff's ergodic theorem implies the a.s.\@ existence of the regularized trace
\begin{equation}
\label{eq:Birkhoff}
 \widetilde{\operatorname{tr}}_{\Lambda}(f(H^h_{\lambda,\omega})) = \mathbb E \left( \tfrac{\sum_{x \in W_{\Lambda}} f(H^h_{\lambda,\omega})[x,x]}{\vert \vec{b}_1 \wedge \vec{b}_2 \vert}\right)  = \tfrac{\mathbb E \operatorname{tr} \indic_{W_{\Lambda}} f(H^{h}_{\lambda,\omega})}{\vert \vec{b}_1 \wedge \vec{b}_2 \vert},
 \end{equation}
 where $\vert \vec{b}_1 \wedge \vec{b}_2 \vert^{-1}$ normalizes the number of vertices per unit volume. 
By Riesz's theorem one can then associate to the regularized trace a Radon measure $\rho_{H^h_{\lambda,\omega}}$, the DOS measure, and by the preceding discussion, this measure is a.s.\@ non-random. Thus $\rho_{H^h_{\lambda,\omega}} =: \rho_{H^h_{\lambda}}$ a.s.\@ and therefore $\int_{\mathbb R} f(x) \ d\rho_{H^h_{\lambda}}(x) =  \widetilde{\operatorname{tr}}_{\Lambda}(f(H^h_{\lambda,\omega}))$ a.s..

\section{The semiclassical expansion of the DOS}
We study the DOS by investigating operators $ f ( H^h_{\lambda,\omega} ) $ using the functional calculus of Helffer--Sj\"ostrand \cite{HS0}. We first recall that any function $f \in C_{\rm{c}}^{\infty}(\mathbb{R})$ can be extended to functions $ \widetilde f \in \mathscr S ( \CC ) $ such that $ \widetilde f|_\RR = f $ and
$D_{\overline{z}} \widetilde f = \mathcal {\mathcal O} ( |\Im z |^\infty ) $. 
Such functions $ \tilde f $ are then called {\em almost analytic} extensions of $ f $. One possible way of defining $ \widetilde f $ is by
\begin{equation}
\label{eq:mather}
\begin{gathered} 
\widetilde f ( x+ i y ) = \frac{1}{ 2 \pi }  \chi ( y ) \psi ( x ) 
\int_{\RR} \chi( y \xi ) \widehat f ( \xi ) e^{ i (x + i y ) \xi } d \xi , 
\\  \chi, \psi \in C^\infty_{\rm{c}}( \RR ) , \ \  \psi| _{ \supp f + 
( - 1, 1 )} = 1, \   \ \chi |_{ (-1, 1) } = 1 ,
\end{gathered}
\end{equation} 
\cite[see Chapter 8]{D-S} for details. A more pedestrian, but also more restrictive, way of defining almost-analytic extensions, for smooth functions $f\in C_c^{\infty}(\mathbb R)$, is for $n \in \mathbb N$ by
\begin{equation}
\begin{split}
\label{eq:taylor}
&\widetilde{f}(x+iy) = \left( \sum_{r=0}^n f^{(r)}(x) \frac{(iy)^r}{r!} \right) \zeta(x+iy)\\
&\zeta(x+iy):=\chi(y/\langle x \rangle),\ \chi \in C^{\infty},\ \chi\vert_{[-1,1]}=1,\ \supp(\chi)\subset[-2,2].
\end{split}
\end{equation}
Differentiating \eqref{eq:taylor}, one finds that $\left\vert D_{\overline{z}}\widetilde{f}(z) \right\rvert = \mathcal O \left(\vert \Im z \vert^{n}\right)$ which follows from
\begin{equation}
\label{eq:diff}
D_{\overline{z}}\widetilde{f}(x+iy) =   \sum_{r=0}^n f^{(r)}(x) \frac{(iy)^r}{r!}  D_{\overline{z}}\zeta(x+iy) + f^{(n+1)}(x)\frac{(iy)^n}{n!}\frac{\zeta(x+iy)}{2}.
\end{equation} 
A similar computation shows that the quasi-analytic extension satisfies 
\begin{equation}
\begin{split}
\label{eq:bd}
\left\vert D_{\overline{z}}\widetilde{f^{(k)}}(z) \right\rvert = \mathcal O \left(\vert \Im z  \vert^{n-k}\right).
\end{split}
\end{equation}

The almost-analytic extension enters then in the Helffer-Sj\"ostrand formula which states that for any self-adjoint operator $ P $, 
\begin{equation}
\label{eq:HeSj}
f(P) = \frac{1}{\pi} \int_{\mathbb{C}} D_{\overline{z}} \widetilde{f}(z) (P -z)^{-1} dm(z)
\end{equation}
where $m$ is the Lebesgue measure on $\mathbb{C}.$ 
For discrete random Schr\"odinger operators \eqref{discreter} this yields by applying the regularized trace
\begin{equation}
\begin{split}
\label{eq:tracesII}
\widetilde{\operatorname{tr}}_{\Lambda}(f(H^h_{\lambda,\omega})) &= \frac{1}{\pi}  \int_{\mathbb C} D_{\overline{z}} \widetilde{f}(z) \widetilde{\operatorname{tr}}_{\Lambda} \left((H^h_{\lambda,\omega}-z)^{-1}\right)  \ dm(z). 
\end{split}
\end{equation}

\subsection{Magnetic matrices}
\begin{defi}[Magnetic matrices]
\label{def:MM}
%Let the (random) rapidly decaying functions on $\mathbb Z^2$ be defined by
%\begin{equation}
%\begin{split}
%\label{eq:rapdec}
%\mathscr{S}_{\omega}(\mathbb{Z}^2;\mathbb C^{n \times n})&:=\Bigg\{f : \Omega \times \mathbb{Z}^2  \rightarrow \mathbb{C}^{n \times n} \ \text{measurable such that } \ \forall \, N \ \exists \, C_N: \\ 
%& \qquad  \qquad \sup_{\omega \in \Omega} \left\lVert f_{\omega}(\gamma) \right\rVert \le C_N ( 1 + |\gamma|)^{-N}  \Bigg\}.
%\end{split}
%\end{equation}
%We write just $\mathscr S(\mathbb Z^2)$ --without subscript-- if the functions are non-random. 
\label{magmat}
Let $f_{\omega}(\gamma)\in C_c(\Omega \times \mathbb Z^2; \CC^{n \times n})$ at first, where $\omega\in \Omega$ and $\gamma\in \Z^2$. We define {\em magnetic matrices} as discrete operators%, with translations $T_{\gamma}$ as in \eqref{stl}, 
as
\begin{equation}
\label{Magneticmatrix}
A^{h} (f_{\omega}) \in %\mathcal L \left(\ell^1(\mathbb{Z}^2;\mathbb{C}^{2 \times 2}),
\mathcal L\left(\ell^2(\mathbb{Z}^2;\mathbb{C}^{n \times n})\right) ,
%\right) 
\ \ \ 
A^{h}(f_{\omega}) := \left(e^{-i \frac{h}{2} \sigma_{\text{symp}}(\gamma,\delta)} f_{T_{\gamma}^{\Omega} \omega}(\gamma-\delta)  \right)_{\gamma,\delta \in \mathbb{Z}^2}.
\end{equation}
These matrices act on $\ell^2(\mathbb{Z}^2;\mathbb{C}^{ n })$ by matrix-like multiplication
\begin{equation}
(A^{h}(f_{\omega})u)_{\gamma} = \sum_{\delta \in \mathbb{Z}^2} 
\left(A^{h}(f_{\omega})\right)_{\gamma,\delta} u_{\delta}.
\end{equation}
\end{defi}
For yet another set of discrete magnetic translation operators $\tau^h_{\gamma}$ on the $\mathbb{Z}^2$-lattice\begin{equation}
\label{eq:deftau}
\tau^h_{\delta}(f_{\omega})(\gamma):= e^{-i\frac{h}{2} \sigma_{\text{symp}}(\gamma,\delta )} f_{T_{\gamma}^{\Omega}\omega}(\gamma-\delta), 
\end{equation}
we find, in analogy to \eqref{eq:commreld}, that magnetic matrices are covariant with respect to discrete magnetic translations \eqref{eq:deftau}
\begin{equation}
\label{eq:equiv2}
A^h(f_{T_{\gamma}^{\Omega}\omega}) \tau^h_{\gamma} = \tau^h_{\gamma}A^h(f_{\omega}).
\end{equation}
%This follows from the following simple computation
%\begin{equation}
%\begin{split}
%A^h(f_{\omega})(\tau^h_{\gamma}g)(\alpha) 
%&= \sum_{\delta \in \mathbb Z^2} f_{T_{\alpha}\omega} (\alpha-\delta)g(\delta-\gamma)e^{-i \frac{h}{2} \left( \sigma_{\text{symp}}(\delta, %\gamma)+ \sigma_{\text{symp}}(\alpha, \delta) \right) } \\
%&= \sum_{\delta' \in \mathbb Z^2} f_{T_{\alpha}\omega} (\alpha-(\delta'+\gamma))g(\delta')e^{-i \frac{h}{2} \left( \sigma_{\text{symp}}(\alpha, %\delta')+ \sigma_{\text{symp}}(\delta', \gamma)+ \sigma_{\text{symp}}(\alpha, \gamma) \right) } \\
%&= e^{-i \frac{h}{2} \sigma_{\text{symp}}(\alpha, \gamma) }\sum_{\delta' \in \mathbb Z^2} f_{T_{\alpha}\omega} %(\alpha-(\delta'+\gamma))g(\delta')e^{-i \frac{h}{2} \sigma_{\text{symp}}(\alpha-\gamma, \delta')}\\
%&=\tau_{\gamma}^h (A^h(f_{T_{\gamma}\omega})(g)) (\alpha).
%\end{split}
%\end{equation}
Moreover, translations \eqref{eq:deftau} satisfy the Weyl commutation relations
\begin{equation}
\label{commrel}
\tau^h_{\gamma}{\tau}^h_{\delta} = e^{i h \sigma_{\text{symp}}(\gamma,\delta)}{\tau}^h_{\delta}\tau^h_{\gamma}.
\end{equation}
For $f,g \in C_c(\Omega \times \mathbb Z^2; \mathbb C^{n \times n})$ we introduce the product 
\begin{equation}
\begin{split}
\label{eq:hash}
 (f\#_h g)_{\omega}(\gamma) &:=\sum_{z \in \mathbb Z^2} f_{\omega}(\gamma-z)g_{T_{\gamma-z}^{\Omega}\omega}(z)e^{-i \frac{h}{2} \sigma_{\text{symp}}(\gamma,z)} \\
&= \sum_{z \in \mathbb Z^2} f_{\omega}(z) g_{T_{z}^{\Omega}\omega}(\gamma-z)e^{-i\frac{h}{2} \sigma_{\text{symp}}(\gamma,z)}.
\end{split}
\end{equation}
This product is reconcilable with multiplication of magnetic matrices
\begin{equation}
\begin{split}
A^h(f\#_h g)_{\omega}u(\xi) 
%&= \sum_{\delta \in \mathbb Z^2} (f \#_h g)_{T_{\xi} \omega} (\xi-\delta) u(\delta)e^{-i \frac{h}{2} \sigma_{\text{symp}}(\xi, \delta)} \\
%&= \sum_{\delta \in \mathbb Z^2} \sum_{\gamma \in \mathbb Z^2} f_{T_{\xi} \omega}(\gamma) g_{T_{\xi+\gamma}\omega}(\xi-\gamma-\delta) %u(\delta)e^{-i \frac{h}{2} \sigma_{\text{symp}}(\xi, \delta)}e^{i \frac{h}{2} \sigma_{\text{symp}}(\xi-\delta, \gamma)} \\
%&= \sum_{\delta \in \mathbb Z^2} \sum_{\gamma \in \mathbb Z^2} f_{T_{\xi} \omega}(\gamma) (A^h(g_{\omega})u)(\xi-\gamma)e^{-i \frac{h}%{2} \sigma_{\text{symp}}(\gamma, \xi)} \\
&=A^h(f_{\omega})(A^h(g_{\omega})(u))(\xi).
\end{split}
\end{equation}
Moreover, defining the involution
\begin{equation}
\label{eq:involution} 
f_{\omega}^*(\gamma):=\overline{f_{T^{\Omega}_{-\gamma}\omega}(-\gamma)}
\end{equation}
we see that the adjoint of a magnetic matrix is again given by a magnetic matrix 
\begin{equation}
\begin{split}
\langle A^h(f_{\omega})(g),h \rangle 
%&= \sum_{\gamma \in \mathbb Z^2} f_{T_{\gamma}\omega}(\gamma-\delta) g(\delta)e^{-i \frac{h}{2} \sigma_{\text{symp}}(\gamma,\delta)}  %\overline{h(\gamma)}\\
%&= \sum_{\gamma \in \mathbb Z^2} g(\delta) \overline{f_{T_{\delta}\omega}(\delta-\gamma)}^* \ \overline{h(\gamma)} \ \overline{e^{-i \frac{h}{2} %\sigma_{\text{symp}}(\delta,\gamma)}} 
= \langle g, A^h(f_{\omega}^*)(h) \rangle.
\end{split}
\end{equation}
\begin{rem}
\label{C*remark}
The preceding computations show that magnetic matrices are the \linebreak $*$-representation of a $C^*$-algebra $\mathcal C_h$ which is the closure of functions $f \in C_c(\Omega \times \mathbb Z^2; \mathbb C^{n \times n})$ with composition $\eqref{eq:hash}$ and involution $\eqref{eq:involution} $ under the norm $\Vert f \Vert_{\mathcal C_h}:=  \sup_{\omega \in \Omega} \left\lVert A^h(f) \right\rVert.$
This defines a continuous field (as a function of $h$) of $C^*$-algebra $\mathcal C_h$, cf. \@\cite[Sec.\@ F]{BES},\cite{ST}. 
\end{rem}
To connect operators $H^h_{\lambda,\omega}$ with magnetic matrices, we define symbols
\begin{equation}
\label{1}
\begin{gathered}
a_{{\scriptscriptstyle{ \blacksquare}}}(1,0)=a_{{\scriptscriptstyle{ \blacksquare}}}(0,1)=a_{{\scriptscriptstyle{ \blacksquare}}}(-1,0)=a_{{\scriptscriptstyle{ \blacksquare}}}(0,-1)=\tfrac{1}{4}\text{, and for the hexagonal lattice} \\
a_{\tinyvarhexagon}(0,0):=\tfrac{1}{3} \left(\begin{matrix} 0 && 1 \\ 1 && 0 \end{matrix}\right), \ \ 
a_{\tinyvarhexagon}(1,0):=a_{\tinyvarhexagon}(0,1):=\tfrac{1}{3} \left(\begin{matrix} 0 && 1 \\ 0 && 0 \end{matrix}\right), \\ 
a_{\tinyvarhexagon}(-1,0):=a_{\tinyvarhexagon}(0,-1):=\tfrac{1}{3} \left(\begin{matrix} 0 && 0 \\ 1 && 0 \end{matrix}\right),
\end{gathered}
\end{equation}
and $a(\eta)=0$ otherwise. The random symbols are then defined as $a_{\lambda,\omega,{\scriptscriptstyle{ \blacksquare}}}(\gamma)=a_{{\scriptscriptstyle{ \blacksquare}}}(\gamma)+\lambda \delta_0(\gamma) V_{\omega}(0)$ or 
$a_{\lambda.\omega,\tinyvarhexagon}(\gamma)=a_{\tinyvarhexagon}(\gamma)+\lambda \delta_0(\gamma) \operatorname{diag}(V_{\omega}(r_0),V_{\omega}(r_1)).$
\begin{lemm}
\label{lem:unitmul}
There exist unitary multiplication operators $U_{\scriptscriptstyle{ \blacksquare}}:\ell^2(\ZZ^2;\CC) \rightarrow \ell^2(\ZZ^2;\CC)$ and $U_{\tinyvarhexagon}:\ell^2(\ZZ^2;\CC^2) \rightarrow \ell^2(\Lambda_{\tinyvarhexagon};\CC)$ such that 
\begin{equation}
\label{eq:conjugation}
H^h_{\lambda,\omega,{\scriptscriptstyle{ \blacksquare}}} = U_{\scriptscriptstyle{ \blacksquare}}A^h(a_{\lambda,\omega,{\scriptscriptstyle{ \blacksquare}}})U_{\scriptscriptstyle{ \blacksquare}}^* \text{ and }H^h_{\lambda,\omega,\tinyvarhexagon} = U_{\tinyvarhexagon}A^h(a_{\lambda,\omega,\tinyvarhexagon})U_{\tinyvarhexagon}^*.
\end{equation}
In particular, since operators $U$ are multiplication operators, we find
\begin{equation}
\label{eq:traces}
\widetilde{\operatorname{tr}}_{\Lambda}\left((H^h_{\lambda,\omega}-z)^{-1} \right)= \vert \vec{b}_1 \wedge \vec{b}_2 \vert^{-1} \widetilde{\operatorname{tr}}_{\mathbb Z^2}\left((A^h(a_{\lambda,\omega})-z)^{-1} \right).
\end{equation}
\end{lemm}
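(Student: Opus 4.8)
The plan is to establish the two intertwining relations in \eqref{eq:conjugation} by constructing the unitaries $U_{{\scriptscriptstyle{ \blacksquare}}}$ and $U_{\tinyvarhexagon}$ explicitly as gauge transformations, and then to deduce the trace identity \eqref{eq:traces} from the elementary fact that a multiplication operator preserves the fibrewise trace of a Schwartz kernel on the diagonal.

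First I would compute $A^h(a_{\lambda,\omega})$ directly from Definition \ref{def:MM}. Inserting the symbols \eqref{1} into \eqref{Magneticmatrix} shows that $A^h(a_{\lambda,\omega})$ is a nearest-neighbour magnetic difference operator on $\ell^2(\ZZ^2;\CC^n)$ (with $n=1$ on the square and $n=2$ on the honeycomb lattice) whose hopping amplitudes carry the symplectic Peierls phases $e^{-i\frac h2\sigma_{\text{symp}}(\gamma,\delta)}$, and whose fibrewise-diagonal part is $\lambda V_\omega$ by construction of the random symbol together with the covariance \eqref{eq:equiv2} (note that the on-site symbol $a_{\tinyvarhexagon}(0,0)$ is itself off-diagonal in $\CC^2$, hence encodes the intra-cell hopping rather than a potential). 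Under the canonical identifications $\ell^2(\ZZ^2;\CC)\simeq\ell^2(\Lambda_{{\scriptscriptstyle{ \blacksquare}}})$ and $\ell^2(\ZZ^2;\CC^2)\simeq\ell^2(\Lambda_{\tinyvarhexagon})$ this operator has exactly the same connectivity as $H^h_{\lambda,\omega}$ in \eqref{discreter}; the two operators realize the \emph{same} constant magnetic field $B$, and differ only in the choice of gauge for the vector potential, namely the symmetric gauge implicit in \eqref{Magneticmatrix} versus the edge potentials \eqref{eq:squareA}, \eqref{Vertauscher}.

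Next I would build the gauge transformation. Because both vector potentials produce the same field, the difference of the two Peierls one-forms has vanishing circulation around every plaquette; since the lattices are simply connected, the discrete Poincar\'e lemma yields a real vertex function $\phi$ with this difference equal to the discrete gradient $\phi(v)-\phi(w)$ along each edge $(v,w)$. Setting $U:=e^{i\phi}$, post-composed on the honeycomb with the two-to-one cell identification above, gives a unitary multiplication operator, and a termwise comparison of the hopping phases $e^{i(\phi(v)-\phi(w))}$ against the ratio of the two Peierls factors verifies the off-diagonal part of $H^h_{\lambda,\omega}=U\,A^h(a_{\lambda,\omega})\,U^*$; the fibrewise-diagonal potential terms are unchanged since $U$ is unitary and diagonal. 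Concretely, $\phi$ can be read off from the magnetic-translation prefactors $u^B(\gamma)_{r_*}$ fixed before \eqref{eq:commreld}, which already encode the passage between the physical gauge of $H^h$ and the symplectic gauge of $A^h$.

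Finally, for \eqref{eq:traces} I would use that $U$ acts fibrewise as multiplication by a unitary $M(\gamma)\in\mathcal U(\CC^n)$, whence $(U A U^*)[\gamma,\gamma]=M(\gamma)\,A[\gamma,\gamma]\,M(\gamma)^*$ and therefore $\operatorname{tr}_{\CC^n}(U A U^*)[\gamma,\gamma]=\operatorname{tr}_{\CC^n}A[\gamma,\gamma]$ by cyclicity. Applying this with $A=(A^h(a_{\lambda,\omega})-z)^{-1}$, so that $(H^h_{\lambda,\omega}-z)^{-1}=U(A^h(a_{\lambda,\omega})-z)^{-1}U^*$, and combining the Birkhoff formula \eqref{eq:Birkhoff} with the $\ZZ^2$-analogue $\widetilde{\operatorname{tr}}_{\ZZ^2}(B)=\mathbb E\,\operatorname{tr}_{\CC^n}B[0,0]$ for $\tau^h$-covariant $B$ (immediate from \eqref{discretetrace}, since $\ZZ^2$ has unit point density) produces the factor $\vert\vec b_1\wedge\vec b_2\vert^{-1}$: it appears because $\widetilde{\operatorname{tr}}_{\Lambda}$ is normalized by Euclidean volume while each fundamental cell, of volume $\vert\vec b_1\wedge\vec b_2\vert$, contributes the single $\ZZ^2$-site summed by $\widetilde{\operatorname{tr}}_{\ZZ^2}$. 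I expect the genuinely delicate step to be the explicit gauge matching on the honeycomb: the two-point fundamental cell forces the intra-cell edge $\vec f$ to acquire the phase $\tfrac{h}{6}(\gamma_1-\gamma_2)$ in $H^h$ while it is phase-free in $A^h$, so that confirming the equality of plaquette fluxes---the hypothesis that makes the discrete Poincar\'e lemma applicable---is the conceptual heart of the argument.
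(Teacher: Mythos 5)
Your proposal is correct and follows essentially the same route as the paper: both proofs realize the conjugating unitary as a gauge transformation (a unimodular multiplication operator) composed with the cell identification $\ell^2(\Lambda_{\tinyvarhexagon})\simeq\ell^2(\ZZ^2;\CC^2)$, and both deduce the trace identity from the invariance of the diagonal kernel entries under conjugation by a diagonal unitary together with the volume normalization $\vert\vec b_1\wedge\vec b_2\vert^{-1}$. The only difference is one of packaging: where you invoke flux-matching and the discrete Poincar\'e lemma to assert the existence of the gauge function $\phi$, the paper writes it down explicitly --- for the square lattice by passing from the symmetric to the Landau gauge and conjugating by $Wu(\gamma)=e^{-i\frac h2\gamma_1\gamma_2}u(\gamma)$, and for the honeycomb by the recursively defined phases $\zeta_v$, which is exactly the Poincar\'e-lemma integration along a spanning tree carried out by hand.
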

\begin{proof}
The first equivalence on the $\ZZ^2$ lattice in \eqref{eq:conjugation} is obtained by first passing from the symmetric to the Landau gauge and then conjugating this operator by $\label{mul2}Wu(\gamma):=e^{-i\frac{h}{2} \gamma_1\gamma_2}u(\gamma).$
For the hexagonal lattice, the transformation is slightly more involved. We start by defining two unitary maps: The first one is $U_1z:=\left(\zeta_{v} z ( v) \right)_{v \in \mathcal{V}(\Lambda_{\tiny\varhexagon})}$ with recursively defined factors 
\begin{equation}
\begin{split}
\zeta_{r_0} :=1 &, \ \ 
\zeta_{\gamma_1\vec{b_1}+\gamma_2\vec{b_2}+r_1}:=e^{i A_{\gamma_1\vec{b_1}+\gamma_2\vec{b_2}+\vec{f}}} \zeta_{\gamma_1\vec{b_1}+\gamma_2\vec{b_2}+r_0}  \\
\zeta_{(\gamma_1+1)\vec{b_1}+\gamma_2 \vec{b_2}+r_0}&:=e^{i \left(-A_{(\gamma_1+1)\vec{b_1}+\gamma_2\vec{b_2} + \vec{g}}+A_{\gamma_1\vec{b_1}+\gamma_2\vec{b_2} + \vec{f}} \right)}\zeta_{\gamma_1\vec{b_1}+\gamma_2\vec{b_2}+r_0} \text{ and }\\
\zeta_{\gamma_1\vec{b_1}+(\gamma_2+1)\vec{b_2}+r_0}&:=e^{i \left(-A_{\gamma_1\vec{b_1}+(\gamma_2+1)\vec{b_2} + \vec{h}}-h\gamma_1+A_{\gamma_1\vec{b_1}+\gamma_2\vec{b_2} + \vec{f}} \right)}\zeta_{\gamma_1\vec{b_1}+\gamma_2\vec{b_2}+r_0} 
\end{split}
\end{equation}
and $U_2:\ell^2(\mathcal{V}(\Lambda_{\tiny\varhexagon})) \rightarrow \ell^2(\mathbb{Z}^2, \mathbb{C}^2)$, $U_2 (z)\left(\gamma\right):= \left( \begin{matrix} z(r_0 + \gamma_1 \vec{b}_1+\gamma_2 \vec{b}_2) \ ,z(r_1+\gamma_1 \vec{b}_1+\gamma_2 \vec{b}_2 )\end{matrix} \right)^T.$
The unitary transform is then $A^h(a_{\lambda,\omega,\tiny\varhexagon}) = (U_1 U_2^*W^*)^*H^h_{\lambda,\omega,\tiny\varhexagon}(U_1 U_2^*W^*)$, see also \cite[Lemma $3.3$, $3.5$]{BZ}. 
\end{proof}
\subsection{Reduction of DOS}
We now continue with the derivation of the DOS. For this, we consider a $\Psi$DO representation of (non-random) magnetic matrices.
To start, we observe the following expansion of the regularized trace of the resolvent of the random operators in terms of the deterministic one. Recall that we write $H^h:=H^h_{\lambda=0,\omega}$ for the non-random DML.
\begin{lemm}
The resolvent of the discrete random Schr\"odinger operator $H^h_{\lambda,\omega}$ satisfies 
\begin{equation}
\begin{split}
\label{eq:cond}
 \widetilde{\operatorname{tr}}_{\Lambda}\left(\left(H^h_{\lambda,\omega}-z\right)^{-1}\right)
&= \sum_{k=0}^2 \tfrac{(-\lambda \mathbb E(V)D_z)^k }{ k!} \widetilde{\operatorname{tr}}_{\Lambda}\left(\left(H^h-z\right)^{-1}\right) \\
 &\quad + \tfrac{\lambda^2}{2}\operatorname{Var}(V) D_{z}  \sum_{r \in W_{\Lambda}}\left(  \operatorname{tr} \left( \indic_{\left\{ r \right\}}\left(H^h-z\right)^{-1}\right)  \right)^2 \\
 &\quad + \mathcal O\left(\lambda^3 \left\Vert(H^h-z)^{-1} \right\Vert^3\left\Vert (H^h_{\lambda,\omega}-z)^{-1} \right\Vert\right).
\end{split}
\end{equation}
\end{lemm}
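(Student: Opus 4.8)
The plan is to combine a finite-order resolvent expansion in the coupling $\lambda$ with the i.i.d.\ moment structure of the potential. Write $H^h_{\lambda,\omega}=H^h+\lambda V_\omega$, where $V_\omega$ is the diagonal multiplication operator by the random potential, bounded uniformly in $\omega$ because $\nu$ has compact support; abbreviate $R_0:=(H^h-z)^{-1}$ and $R:=(H^h_{\lambda,\omega}-z)^{-1}$. Iterating the second resolvent identity $R=R_0-\lambda R_0V_\omega R$ three times gives
\begin{equation*}
R=R_0-\lambda R_0V_\omega R_0+\lambda^2 R_0V_\omega R_0V_\omega R_0-\lambda^3R_0V_\omega R_0V_\omega R_0V_\omega R.
\end{equation*}
Since $\|V_\omega\|\le\const$ uniformly in $\omega$, the last term has operator norm $\lesssim\lambda^3\|R_0\|^3\|R\|$, and because $|\widetilde{\operatorname{tr}}_\Lambda(A)|\lesssim\|A\|$ by \eqref{discretetrace}, applying $\widetilde{\operatorname{tr}}_\Lambda$ reproduces exactly the stated remainder $\mathcal O(\lambda^3\|R_0\|^3\|R\|)$. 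It then remains to compute the regularized trace, and its expectation, of the three explicit terms.

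By \eqref{eq:Birkhoff} the regularized trace is a.s.\ non-random and equals its expectation, so I may interchange $\widetilde{\operatorname{tr}}_\Lambda$ with $\mathbb E$ and reduce all computations to the fundamental domain $W_\Lambda$. The zeroth-order term is $\widetilde{\operatorname{tr}}_\Lambda(R_0)$. For the first-order term, $R_0$ is deterministic and $\mathbb E(V_\omega)=\mathbb E(V)\,\mathbf 1$ since the potential has the common law $\nu$ at every vertex, so
\begin{equation*}
\mathbb E\,\widetilde{\operatorname{tr}}_\Lambda(R_0V_\omega R_0)=\mathbb E(V)\,\widetilde{\operatorname{tr}}_\Lambda(R_0^2)=\mathbb E(V)\,D_z\,\widetilde{\operatorname{tr}}_\Lambda(R_0),
\end{equation*}
where I use $R_0^2=D_zR_0$ and that the holomorphic derivative $D_z$ passes through the trace; with the prefactor $-\lambda$ this is the $k=1$ summand.

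The second-order term is the crux. Expanding the diagonal kernel of $R_0V_\omega R_0V_\omega R_0$ over vertices and using the i.i.d.\ identity $\mathbb E[V_\omega(w)V_\omega(w')]=\mathbb E(V)^2+\delta_{w,w'}\operatorname{Var}(V)$ splits the expectation into two pieces. The mean piece carries $\mathbb E(V)^2$ over all pairs and reassembles the full kernel of $R_0^3$, producing $\mathbb E(V)^2\,\widetilde{\operatorname{tr}}_\Lambda(R_0^3)=\tfrac12\mathbb E(V)^2D_z^2\,\widetilde{\operatorname{tr}}_\Lambda(R_0)$, which with the $\lambda^2$ prefactor is exactly the $k=2$ summand of $\sum_{k=0}^2\tfrac{(-\lambda\mathbb E(V)D_z)^k}{k!}\widetilde{\operatorname{tr}}_\Lambda(R_0)$. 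The fluctuation piece is supported on $w=w'$ and carries $\operatorname{Var}(V)$; rewriting $R_0V_\omega R_0V_\omega R_0$ as $V_\omega R_0V_\omega R_0^2$ by cyclicity of the regularized trace, the coincidence of the two potential insertions pins them to a single vertex $r\in W_\Lambda$, yielding
\begin{equation*}
\operatorname{Var}(V)\sum_{r\in W_\Lambda}\operatorname{tr}(\indic_{\{r\}}R_0)\,\operatorname{tr}(\indic_{\{r\}}R_0^2)=\tfrac12\operatorname{Var}(V)\,D_z\sum_{r\in W_\Lambda}\big(\operatorname{tr}(\indic_{\{r\}}R_0)\big)^2,
\end{equation*}
using $\operatorname{tr}(\indic_{\{r\}}R_0^2)=D_z\operatorname{tr}(\indic_{\{r\}}R_0)$ and the product rule. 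Restoring the $\lambda^2$ prefactor of the second-order term gives precisely the stated $\tfrac{\lambda^2}{2}\operatorname{Var}(V)D_z\sum_{r\in W_\Lambda}\big(\operatorname{tr}\indic_{\{r\}}R_0\big)^2$ contribution.

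The obstacles are analytic and concern the regularized trace. One must justify that $\widetilde{\operatorname{tr}}_\Lambda$ is cyclic on the products at hand and that $\mathbb E$, the volume limit in \eqref{discretetrace}, and $D_z$ may all be exchanged. This rests on the magnetic covariance \eqref{eq:commreld}: for $\lambda=0$ the operator $H^h$ commutes with the magnetic translations $T^h_\gamma$, so the diagonals of $R_0,R_0^2,R_0^3$ and of their products with $V_\omega$ are periodic with respect to $T^h_\gamma$, and the volume average in \eqref{discretetrace} collapses to the fundamental-domain sum of \eqref{eq:Birkhoff} normalized by $|\vec b_1\wedge\vec b_2|$. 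Absolute convergence of the vertex sums — needed for the rearrangements and for cyclicity — follows from the exponential off-diagonal decay of $R_0$ for $z\notin\Sigma(H^h)$. The second point requiring care is the sublattice bookkeeping on $\Lambda_{\tinyvarhexagon}$: because the potential is i.i.d.\ over all vertices, the Kronecker $\delta_{w,w'}$ localizes the fluctuation term onto individual vertices $r\in W_\Lambda$, which is exactly what produces $\sum_{r\in W_\Lambda}$ together with the per-vertex projections $\indic_{\{r\}}$. I expect the cyclicity and rearrangement of these conditionally defined per-unit-volume traces to be the principal technical obstacle.
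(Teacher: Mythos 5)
Your proposal is correct and follows essentially the same route as the paper: a third-order resolvent expansion, Birkhoff/expectation reduction to the fundamental domain, the i.i.d.\ splitting $\mathbb E[V(w)V(w')]=\mathbb E(V)^2+\delta_{w,w'}\operatorname{Var}(V)$, and the identities $\widetilde{\operatorname{tr}}_\Lambda(R_0^{k+1})=\tfrac{1}{k!}D_z^k\,\widetilde{\operatorname{tr}}_\Lambda(R_0)$. The one step you flag as an obstacle — cyclicity of the regularized trace on the product $R_0V_\omega R_0V_\omega R_0$ — is exactly what the paper carries out by hand, using the magnetic covariance of the diagonal kernel, $(H^h-z)^{-1}[\gamma,\gamma]=(H^h-z)^{-1}[T_\nu\gamma,T_\nu\gamma]$, to re-index the double vertex sum and pin the fluctuation term to single vertices $r\in W_\Lambda$.
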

\begin{proof}
The resolvent identity then yields a second-order approximation in the disorder parameter $\lambda$  
\begin{equation}
\label{eq:powerseries1}
\begin{split} 
\left(H^h_{\lambda,\omega}-z\right)^{-1}
&=\left(H^h-z\right)^{-1}-\lambda  \left(H^h-z\right)^{-1} V_\omega \left(H^h-z\right)^{-1} \\
&\quad+ \lambda^2 \left(H^h-z\right)^{-1} V_{\omega} \left(H^h-z\right)^{-1}V_{\omega} \left(H^h-z\right)^{-1} \\
&\quad + \mathcal O\left(\lambda^3 \left\Vert\left(H^h-z\right)^{-1} \right\Vert^3 \left\Vert\left(H^h_{\lambda,\omega}-z\right)^{-1} \right\Vert\right).
\end{split}
\end{equation}
We study second-order approximations in $\lambda$ since this is the leading-order level at which the stochastic nature of the perturbation enters. Taking regularized traces in \eqref{eq:powerseries1} yields
\begin{equation}
\begin{split}
\label{eq:new}
 \widetilde{\operatorname{tr}}_{\Lambda}\left(\left(H^h_{\lambda,\omega}-z\right)^{-1}\right)
&=(1-\lambda \mathbb E(V) D_z) \widetilde{\operatorname{tr}}_{\Lambda}\left(\left(H^h-z\right)^{-1}\right) \\
&\quad + \lambda^2\widetilde{\operatorname{tr}}_{\Lambda} \left((H^h-z)^{-1} V_{\omega} (H^h-z)^{-1}V_{\omega} (H^h-z)^{-1} \right)\\
&\quad + \mathcal O\left(\lambda^3 \left\Vert(H^h-z)^{-1} \right\Vert^3\left\Vert (H^h_{\lambda,\omega}-z)^{-1} \right\Vert\right).
\end{split}
\end{equation}
Interchanging derivatives and regularized traces is easily justified by \eqref{eq:Birkhoff}.
Equation \eqref{eq:new} can be rewritten, by separating (independent) potentials on different vertices from the squares of potentials such that
\begin{equation}
\begin{split}
&\widetilde{\operatorname{tr}}_{\Lambda}\left(\left(H^h-z\right)^{-1} V_{\omega} \left(H^h-z\right)^{-1} V_{\omega} \left(H^h-z\right)^{-1} \right)\\
&= | \vec{b}_1 \wedge \vec{b}_2 |^{-1} \ \mathbb E \operatorname{tr} \left( \indic_{W_{\Lambda}}(H^h-z)^{-1} V_{\omega} (H^h-z)^{-1}V_{\omega} (H^h-z)^{-1} \right) \\
&=| \vec{b}_1 \wedge \vec{b}_2 |^{-1}\ \mathbb E(V)^2 \operatorname{tr} \left( \indic_{W_{\Lambda}} (H^h-z)^{-3} \right) \\
& \quad +  | \vec{b}_1 \wedge \vec{b}_2 |^{-1} \ \operatorname{Var}(V) \sum_{r \in W_{\Lambda}} \operatorname{tr} \left( \indic_{\left\{ r \right\}} (H^h-z)^{-2} \right)   \operatorname{tr} \left( \indic_{\left\{ r \right\}} (H^h-z)^{-1} \right).\label{eq:inverse}
\end{split}
\end{equation}
Here, we used since $(H^h-z)^{-1}[\gamma,\gamma] = (H^h-z)^{-1}[T_{\nu}\gamma,T_{\nu}\gamma]$, cf. \eqref{MagTra} and \eqref{eq:Schwartz}
\begin{equation}
\begin{split}
&\sum_{x_1,x_2 \in W_{\Lambda},\gamma \in \ZZ^2} (H^h-z)^{-1}[x_1,T_{-\gamma}x_2 ](H^h-z)^{-1}[T_{-\gamma}x_2,T_{-\gamma}x_2] (H^h-z)^{-1}[T_{-\gamma}x_2,x_1] \\
&=\sum_{x_1,x_2 \in W_{\Lambda},\gamma \in \ZZ^2}  (H^h-z)^{-1}[T_{\gamma}x_1,x_2] (H^h-z)^{-1}[x_2,x_2] (H^h-z)^{-1}[x_2,T_{\gamma}x_1]  \\
&=\sum_{r \in W_{\Lambda},v \in \Lambda}  (H^h-z)^{-1}[r,r] (H^h-z)^{-1}[r,v] (H^h-z)^{-1}[v,r] \\
&= \sum_{r \in W_{\Lambda}} \operatorname{tr} \left( \indic_{\left\{ r \right\}} (H^h-z)^{-2} \right)   \operatorname{tr} \left( \indic_{\left\{ r \right\}} (H^h-z)^{-1} \right).
\end{split}
\end{equation}
Inserting this into \eqref{eq:new} yields \eqref{eq:inverse}.
\end{proof}
We now continue expressing the regularized traces of discrete Schr\"odinger operators in terms of pseudodifferential operators.
For vectors $\vec{e}_1:=(1,0)$ and $\vec{e}_2:=(0,1)$, the identity \eqref{commrel} reduces to
\begin{equation}
\label{eq:commrel2}
 \tau^{-h}_{\vec{e}_1}\tau^{-h}_{\vec{e}_2}= e^{-i h}\tau^{-h}_{\vec{e}_2} \tau^{-h}_{\vec{e}_1}.
\end{equation}
This is a version of the canonical commutation relation.
In semiclassical Weyl quantization, the same commutation relation is satisfied by
\begin{equation}
\operatorname{Op}_{h}^{\rm{w}}\left(e^{i x}\right) \operatorname{Op}_{h}^{\rm{w}} \left(e^{i\xi}\right)=e^{-i h}\operatorname{Op}_{h}^{\rm{w}} \left(e^{i \xi}\right)\operatorname{Op}_{h}^{\rm{w}}\left(e^{i x}\right).
\end{equation}
Rather than analyzing directly the discrete operators $H^h:=H^h_{\lambda=0,\omega}$ or 
$A^h(a):=A^h(a_{\lambda=0,\omega})$, we use a pseudodifferential representation that we obtain from the following $*$-homomorphism $\Theta : \mathscr S ( \ZZ^2; \mathbb C^{n \times n} ) \to \mathcal L\left(L^2(\mathbb{R};\CC^{n \times n})\right)$:
\begin{equation}
\begin{split}
 \label{star}
\Theta ( f) :=\operatorname{Op}^{\rm{w}} _{h}(\widehat{f}(x, \xi))= \sum_{\gamma \in \mathbb{Z}^2} f(\gamma)\operatorname{Op}^{\rm{w}} _{h}\left((x,\xi)\mapsto e^{i\langle \gamma, (x,\xi) \rangle}\right) \nonumber \\
\text{such that }\Theta ( f\#_hg)  = \Theta ( f ) \circ \Theta ( g).  
\end{split}
\end{equation}
See \cite[Sec.$6$]{HS1} for details of this construction.
Here, $ \mathscr S ( \ZZ^2; \mathbb C^{n \times n} )$ are the $\mathbb C^{n \times n} $-valued functions that decay faster than any polynomial power on $\ZZ^2$.
We now define a regularized trace $\widetilde{\operatorname{tr}}$ for $\Psi$DOs with periodic symbol such that $\widetilde{\operatorname{tr}}_{\mathbb Z^2}(A^h(f)) = \widetilde{\operatorname{tr}}(\operatorname{Op}_h^{w}(\widehat{f}))$:
\begin{defi}
\label{pdotrace}
Let $\widehat{f} \in C^\infty ( \RR^2; \CC^{n \times n} ) $ be $\mathbb{Z}_{*}^2$ periodic. Then we define the regularized trace
\begin{equation}
\widetilde{\operatorname{tr}}(\operatorname{Op}_h^{w}(\widehat{f})) := 
 \int_{\mathbb{T}^2_*} \operatorname{tr}_{\mathbb{C}^{n }} \widehat{f}(x,\xi)\,  \frac{dx \ d\xi}{ \vert \mathbb T_*^2 \vert }.
\end{equation}
\end{defi}

We can express the resolvent of the Hamiltonian in \eqref{eq:tracesII}, by the $C^*$-homomorphism $\Theta$ and the trace identity, in terms of $\operatorname{\Psi DOs}$
\begin{equation}
\begin{split}
\label{eq:effha}
Q^{\rm{w}}_{{\scriptscriptstyle{ \blacksquare}}}( x , h p_x )  &:= \tfrac{1}2 \left( \cos(x)+ \cos(hp_x) \right)\text{ and } \\
Q^{\rm{w}}_{\tinyvarhexagon}( x , h p_x )  &:= \tfrac{1}3 \left(\begin{matrix} 0& 1 + e^{ i x } + e^{ i h p_x } \\
1 + e^{ - i x } + e^{ - i h p_x } & 0 \end{matrix}\right),
\end{split}
\end{equation}
which are the semiclassical Weyl-quantizations of
\begin{align}
\begin{split}
\label{eq:symbolQ}
&Q_{{\scriptscriptstyle{ \blacksquare}}}(x,\xi) :=\widehat{a}_{\scriptscriptstyle{ \blacksquare}}(x,\xi)= \tfrac{\cos(x)+\cos(\xi)}2\\  \text{ and } &Q_{\tinyvarhexagon}(x,\xi):=\widehat{a}_{\tinyvarhexagon}(x,\xi)=\left( \begin{matrix} 0 && \tfrac{1+e^{ix}+ e^{i\xi}}{3} \\ \tfrac{1+e^{-ix}+e^{-i\xi}}{3} &&  0  \end{matrix} \right). 
\end{split}
\end{align}
In particular, the $C^*$-homomorphism $\Theta$ implies
\begin{equation}
\label{eq:pseudo}
 \widetilde{\operatorname{tr}}_{\mathbb Z^2} \left((A^h(a)-z)^{-1}\right) = \widetilde{\operatorname{tr}} \left((Q^{\rm{w}}( x , h p_x )-z)^{-1}\right).
\end{equation}
The trace on the right hand side is well-defined, as $(Q^{\rm{w}}( x , h p_x )-z)^{-1}$ is again a $\Psi$DO with periodic symbol in $\mathcal S$ by the semiclassical Beal's lemma \cite[Theorem $8.3$]{ev-zw}, \cite[Prop.\@5.1]{HS0}.
To conclude, we can express the DOS of $H^h_{\lambda,\omega}$ in terms of pseudodifferential operators \eqref{eq:effha} as follows:
\begin{prop}
\label{fP2Q}
Let $f \in C^5_{\rm{c}}  ( \RR) $ and $ \widetilde f $ be an almost analytic extension \eqref{eq:taylor}, then for $n=1$, in case of the square, and $n=2$, in case of the hexagonal lattice,
\begin{equation}
\label{tracediff}
\begin{split} 
&\widetilde{\operatorname{tr}}_{\Lambda}(f(H^h_{\lambda,\omega})) =  \sum_{k=0}^2 \tfrac{\lambda^k \mathbb E(V)^k }{ \pi| \vec{b}_1 \wedge \vec{b}_2 | k!}  \int_{\mathbb{C}} D_{\overline{z}}\widetilde{f^{(k)}}(z)   \widetilde{\operatorname{tr}} \left(\,  (Q^{\rm{w}} ( x, h p_x ) - z)^{-1}\right) \ d m (z) \\
& \quad - \tfrac{ \operatorname{Var}(V) \lambda^2}{2\pi| \vec{b}_1 \wedge \vec{b}_2 |} \sum_{i=1}^n  \int_{\mathbb{C}}D_{\overline{z}}\widetilde{f'}(z)  \widetilde{\operatorname{tr}} \left((Q^{\rm{w}} ( x, h p_x )-z)^{-1}_{ii}\right) ^2    \ d m (z) + \mathcal O(\Vert f^{(5)} \Vert_{L^{\infty}}\lambda^3).
\end{split} 
\end{equation}
\end{prop}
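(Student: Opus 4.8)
The plan is to feed the resolvent expansion \eqref{eq:cond} into the traced Helffer--Sj\"ostrand formula \eqref{eq:tracesII} and then transport every term to the pseudodifferential side via the two trace identities \eqref{eq:traces} and \eqref{eq:pseudo}. Writing $g(z):=\widetilde{\operatorname{tr}}_\Lambda((H^h-z)^{-1})$, I first note that $g$ is holomorphic off $\RR$, so the $z$-derivatives occurring in \eqref{eq:cond} are genuine complex derivatives and $D_z^k g$ pairs integrably with $D_{\overline z}\widetilde f$ because the latter vanishes to high order on the real axis by \eqref{eq:diff}. Substituting \eqref{eq:cond} splits $\widetilde{\operatorname{tr}}_\Lambda(f(H^h_{\lambda,\omega}))$ into three pieces: the polynomial-in-$\mathbb E(V)$ sum over $k=0,1,2$, the variance piece, and the $\lambda^3$ remainder, which I treat in turn.

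For the first piece I would integrate by parts in $z$. Since $D_z$ and $D_{\overline z}$ commute and $D_{\overline z}\widetilde f$ has compact support, each of the $k$ factors of $D_z$ coming from $(-\lambda\mathbb E(V)D_z)^k/k!$ can be moved off $g$ and onto $\widetilde f$, producing a factor $(-1)^k$ that cancels the sign in \eqref{eq:cond}. The key observation is that $D_z^k\widetilde f$ is itself an almost analytic extension of $f^{(k)}$: it restricts to $f^{(k)}$ on $\RR$ and, by \eqref{eq:bd}, satisfies $D_{\overline z}(D_z^k\widetilde f)=D_z^k(D_{\overline z}\widetilde f)=\mathcal O(|\Im z|^{4-k})$. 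Because the traced Helffer--Sj\"ostrand integral $\tfrac1\pi\int D_{\overline z}(\cdot)\,g\,dm$ is independent of the chosen extension (it recovers $\widetilde{\operatorname{tr}}_\Lambda(f^{(k)}(H^h))$), I may replace $D_{\overline z}(D_z^k\widetilde f)$ by $D_{\overline z}\widetilde{f^{(k)}}$. Finally \eqref{eq:traces} and \eqref{eq:pseudo} turn $g$ into $|\vec b_1\wedge\vec b_2|^{-1}\widetilde{\operatorname{tr}}((Q^{\rm{w}}(x,hp_x)-z)^{-1})$, reproducing the first line of \eqref{tracediff}.

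The variance piece is handled in the same spirit: its single $D_z$ is removed by one integration by parts, turning $D_{\overline z}\widetilde f$ into $D_{\overline z}\widetilde{f'}$ and producing the prefactor $-\tfrac{\lambda^2\operatorname{Var}(V)}{2\pi}$. It then remains to identify $\sum_{r\in W_\Lambda}(\operatorname{tr}(\indic_{\{r\}}(H^h-z)^{-1}))^2$ with $\sum_{i=1}^n\widetilde{\operatorname{tr}}((Q^{\rm{w}}(x,hp_x)-z)^{-1}_{ii})^2$. By Lemma \ref{lem:unitmul} the unitary $U$ is a multiplication (hence diagonal) operator, so $\operatorname{tr}(\indic_{\{r\}}(H^h-z)^{-1})$ equals the $(i,i)$ diagonal entry of $(A^h(a)-z)^{-1}[\gamma,\gamma]$, with the vertex $r\in W_\Lambda$ corresponding to the index $i\in\{1,\dots,n\}$. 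Writing $(A^h(a)-z)^{-1}=A^h(b)$, this entry is $b(0)_{ii}$; on the other hand the $(i,i)$ component of the resolvent $\Psi$DO has symbol whose zeroth Fourier mode is exactly $b(0)_{ii}$, so Definition \ref{pdotrace} gives $\widetilde{\operatorname{tr}}((Q^{\rm{w}}-z)^{-1}_{ii})=b(0)_{ii}$. The two diagonal quantities agree termwise, and together with the $|\vec b_1\wedge\vec b_2|^{-1}$ from \eqref{eq:traces} this yields the second line of \eqref{tracediff}.

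The main work is the remainder estimate, which is also where I expect the only genuine subtlety. The $\lambda^3$ term in \eqref{eq:cond} is bounded by $\lambda^3\|(H^h-z)^{-1}\|^3\|(H^h_{\lambda,\omega}-z)^{-1}\|=\mathcal O(\lambda^3|\Im z|^{-4})$ via the resolvent bound $\|(P-z)^{-1}\|\le|\Im z|^{-1}$. Pairing this with $|D_{\overline z}\widetilde f|=\mathcal O(\|f^{(5)}\|_{L^\infty}|\Im z|^{4})$ from \eqref{eq:diff} — which is precisely where the hypothesis $f\in C^5_{\rm c}$ enters, through the order-$4$ Taylor extension — the integrand stays bounded on the compact support of $D_{\overline z}\widetilde f$, so $\tfrac1\pi\int_{\CC}|D_{\overline z}\widetilde f|\,\mathcal O(\lambda^3|\Im z|^{-4})\,dm(z)=\mathcal O(\|f^{(5)}\|_{L^\infty}\lambda^3)$. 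The bookkeeping to watch is that the extension-independence and all integrations by parts remain legitimate down to the real axis: this holds because $D_{\overline z}$ of every extension used decays like a positive power of $|\Im z|$ (at worst $|\Im z|^{4-k}$), while the resolvent traces blow up only like $|\Im z|^{-(k+1)}$ after $k$ derivatives, so with $n=4$ the products are integrable throughout.
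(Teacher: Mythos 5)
Your proposal follows essentially the same route as the paper: insert the resolvent expansion \eqref{eq:cond} into the traced Helffer--Sj\"ostrand formula \eqref{eq:tracesII}, integrate by parts in $z$ so that $D_z^k$ acting on the resolvent trace becomes $D_{\overline z}\widetilde{f^{(k)}}$, pass to the $\Psi$DO side via \eqref{eq:traces} and \eqref{eq:pseudo}, and absorb the $\mathcal O(\lambda^3|\Im z|^{-4})$ remainder using $D_{\overline z}\widetilde f=\mathcal O(\|f^{(5)}\|_{L^\infty}|\Im z|^4)$. You in fact supply more detail than the paper does (the extension-independence justifying the replacement of $D_z^k\widetilde f$ by $\widetilde{f^{(k)}}$, and the termwise identification of the diagonal resolvent entries), and the argument is correct up to the same $|\vec b_1\wedge\vec b_2|$ bookkeeping already implicit in \eqref{eq:cond}.
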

\begin{proof}
 By inserting \eqref{eq:cond} into the Helffer-Sj\"ostrand formula \eqref{eq:tracesII}, we find
\begin{equation}
\begin{split}
&\widetilde{\operatorname{tr}}_{\Lambda}(f(H^h_{\lambda,\omega})) = \tfrac{1}{\pi | \vec{b}_1 \wedge \vec{b}_2 | } \int_{\mathbb C} D_{\overline{z}}\widetilde{f}(z)\Bigg(\sum_{k=0}^2 \tfrac{(-\lambda \mathbb E(V)D_z)^k }{ k!} \widetilde{\operatorname{tr}}_{\Lambda}\left(\left(H^h-z\right)^{-1}\right) \\
 &+ \tfrac{\lambda^2\operatorname{Var}(V) }{2}D_{z}  \sum_{r \in W_{\Lambda}}\left(  \operatorname{tr} \left(\indic_{\{r\}}\left(H^h-z\right)^{-1}\right)  \right)^2   + \mathcal O\left(\lambda^3 \left\lvert \Im (z) \right\rvert^{-4} \right)\Bigg) \ dm(z). 
 \end{split}
 \end{equation} 
Using $D_{\overline{z}}\widetilde{f} = \mathcal O\left(\vert \Im (z)\vert^{4}\right)$, as in \eqref{eq:diff} for the almost-analytic extension, we can compensate the $\left\lvert \Im (z) \right\rvert^{-4}$ singularity. 
To express the right-hand side in terms of $\Psi$DOs, rather than $H^h$, we use \eqref{eq:traces} and \eqref{eq:pseudo} which upon integration by parts yields \eqref{tracediff}.
\end{proof}

Our main result on the DOS for small magnetic fields is stated in the following Theorem:

\begin{theo}[Semiclassical expansion of DOS]
\label{theo1}
For small magnetic fields $h>0$ and small disorder $\lambda$ the \emph{DOS} satisfies: \\
\emph{Square lattice ($ \blacksquare$):} Let $I$ be an interval $I \subset [-1,-1+\delta)$ or $I \subset (1-\delta,1]$ for some $\delta>0$ sufficiently small\footnote{This interval is located at the bottom/top of the spectrum in Figure \ref{fig:square}.} and $ f \in C^{5}_{\rm{c}} ( I) $, then for functions $g_{{\scriptscriptstyle{ \blacksquare}},n}$ (independent of $\lambda$), defined in \eqref{eq:g} ,
\begin{equation}
\begin{split}
\label{eq:tracef}
&\widetilde \tr_{\Lambda} (f ( H^h_{{\scriptscriptstyle{ \blacksquare}},\lambda,\omega} )) =\tfrac{h}{2\pi} \sum_{ n \in \NN } 
f( z_{n} ( h  ) +\lambda \mathbb E(V)) \\
&\ \ \ \ \ \ \ \ \ \ \ \ \ \ \  \ \ \ \  \ - \tfrac{h  \operatorname{Var}(V) \lambda^2}{4\pi }  \sum_{ n \in \NN } 
\left(\tfrac{f'' ( z_n ( h  ) )}{2\pi} +f' ( z_n ( h  ) )g_{{\scriptscriptstyle{ \blacksquare}},n}(z_n(h),h) \right) \\
&\ \ \ \ \ \ \ \ \ \ \ \ \ \ \  \ \ \ \  \ + \mathcal O ( \left\lVert f \right\rVert_{C^5}(\lambda^3+h^{\infty})) \text{ a.s.} , \ 
\end{split}
\end{equation}
with Landau levels $z_{n} ( h ) = \kappa ( n h , h )-1$ defined, for $n \in \mathbb N$, by a Bohr-Sommerfeld condition
\begin{equation}
\label{eq:g2F}  
\begin{gathered} F_{{\scriptscriptstyle{ \blacksquare}}} ( \kappa ( \zeta, h ) , h ) = \zeta + \mathcal O 
(h^\infty)  , \ \ 
F_{{\scriptscriptstyle{ \blacksquare}}} ( s , h ) \sim   \sum_{j=0}^\infty h^j F_{j,{\scriptscriptstyle{ \blacksquare}}} ( s ) , \ \ F_{j,{\scriptscriptstyle{ \blacksquare}}} \in C^\infty ( \RR ) , \\ 
F_{0,{\scriptscriptstyle{ \blacksquare}}} ( s  ) =  \frac{1}{2\pi} \int_{ \gamma_s } \xi \ dx , \ \ 
\gamma_s = \left\{ ( x, \xi) \in \mathbb{T}^2_*: 2-\cos(x)-\cos(\xi) = 2s \right\}, \ F_{1,{\scriptscriptstyle{ \blacksquare}}} ( s  ) = \frac{1}{2},
\end{gathered} 
\end{equation}
where $ \gamma_s $ is oriented clockwise in the $ ( x, \xi ) $ plane. 

\emph{Hexagonal lattice ($ \varhexagon$):} Let $I$ be an interval $I \subset (-\delta,\delta)$ for some $\delta>0$ sufficiently small\footnote{This interval encloses energies around the Dirac points in Figure \ref{fig:hex}.} and $ f \in C^{5}_{\rm{c}} ( I), $ then for functions $g_{\tinyvarhexagon,n}$, defined in \eqref{eq:g} ,
\begin{equation}
\begin{split}
\label{eq:tracef2}
&\widetilde \tr_{\Lambda} (f (  H^h_{\tinyvarhexagon,\lambda,\omega})) = \tfrac{ h}{\pi \vert \vec{b}_1 \wedge \vec{b}_2 \vert}  \sum_{ n \in \ZZ } 
f( z_{n} ( h  ) +\lambda \mathbb E(V))\\
&\ \ \ \ \ \ \ \ \ \ \ \ \ \  \ \ \ \ \ - \tfrac{h  \operatorname{Var}(V) \lambda^2}{2\pi| \vec{b}_1 \wedge \vec{b}_2 |}  \sum_{ n \in \ZZ } 
\left(\tfrac{f'' ( z_n ( h  ) )}{2\pi} + f' ( z_n ( h  ) )g_{\tinyvarhexagon,n}(z_n(h),h) \right) \\
&\ \ \ \ \ \ \ \ \ \ \ \ \ \  \ \ \ \ \ + \mathcal O ( \left\lVert f \right\rVert_{C^5}(\lambda^3+h^{\infty}))\text{ a.s.},
\end{split}
\end{equation}
with Landau levels $z_{n} ( h ) = \kappa ( n h , h )$ satisfying $ \kappa ( - \zeta, h ) = - \kappa ( \zeta , h )  $, defined, for $n \in \mathbb Z$, by a Bohr-Sommerfeld condition
\begin{equation}
\label{eq:g2F2}  
\begin{gathered} 
F_{\tinyvarhexagon} ( \kappa ( \zeta, h )^2 , h ) = |\zeta| + \mathcal O 
(h^\infty)  , \ \ 
F_{\tinyvarhexagon} ( s , h ) \sim  F_{0,\tinyvarhexagon} ( s ) + \sum_{j=2}^\infty h^j F_{j,\tinyvarhexagon} ( s ) , \ \ F_{j,\tinyvarhexagon} \in C^\infty ( \RR ) , \\ 
F_{0,\tinyvarhexagon} ( s  ) = \frac{1}{4 \pi} \int_{ \gamma_s } \xi \ dx , \ \ 
\gamma_s = \left\{ ( x, \xi) \in \mathbb{T}^2_*: | 1 + e^{ix } + e^{i\xi} |^2 = 9s \right\}, \ \ F_{j,\tinyvarhexagon} ( 0 ) = 0 , 
\end{gathered} 
\end{equation}
where $ \gamma_s $ is oriented clockwise in the $ ( x, \xi ) $ plane. 
\end{theo}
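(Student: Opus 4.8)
The plan is to take Proposition~\ref{fP2Q} as the starting point, since it already expresses $\widetilde{\tr}_{\Lambda}(f(H^h_{\lambda,\omega}))$ as a finite sum of Helffer--Sj\"ostrand integrals of the regularized resolvent trace $\widetilde{\tr}((Q^{\mathrm w}(x,hp_x)-z)^{-1})$ of the effective one-dimensional operators in~\eqref{eq:effha}, with $h$ in the role of the semiclassical parameter. All of the theorem's content then lies in a semiclassical spectral analysis of the scalar operator $Q^{\mathrm w}_{\blacksquare}$ and the matrix operator $Q^{\mathrm w}_{\tinyvarhexagon}$ near the energies of interest. I would organize the argument into two parts: first a microlocal normal form producing the Bohr--Sommerfeld levels $z_n(h)$, and then a residue computation that converts the Helffer--Sj\"ostrand integrals into the stated sums over those levels.

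For the square lattice the principal symbol $Q_{\blacksquare}(x,\xi)=\tfrac12(\cos x+\cos\xi)$ attains a nondegenerate minimum $-1$ at $(\pi,\pi)$, so that writing $2-\cos x-\cos\xi$ the level sets $\gamma_s$ of~\eqref{eq:g2F} are smooth closed curves for small $s>0$. A semiclassical Birkhoff normal form in the spirit of~\cite{HS20} conjugates $Q^{\mathrm w}_{\blacksquare}+1$ microlocally to a smooth function of the harmonic oscillator $\tfrac12((hp_x)^2+x^2)$, so that near the band edge the spectrum is a single sequence of simple eigenvalues fixed by $F_{\blacksquare}(\kappa,h)=nh$; the leading action $F_{0,\blacksquare}(s)=\tfrac1{2\pi}\int_{\gamma_s}\xi\,dx$ is the enclosed phase-space area and the subprincipal term $F_{1,\blacksquare}=\tfrac12$ is the Maslov half-integer shift of an elliptic well. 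For the hexagonal lattice $Q_{\tinyvarhexagon}$ is off-diagonal with entry $\tfrac13(1+e^{ix}+e^{i\xi})$, which vanishes at the two Dirac points; there the elliptic normal form fails and I would instead use the conical structure supplied by Theorem~\ref{thm:Dirac}. A Grushin reduction near each conical point (again following~\cite{HS20}) replaces the matrix operator by an effective scalar magnetic Dirac operator whose square is of harmonic-well type, with action $F_{0,\tinyvarhexagon}(s)=\tfrac1{4\pi}\int_{\gamma_s}\xi\,dx$. Squaring is precisely what produces the $\kappa^2$ in~\eqref{eq:g2F2} and the odd symmetry $\kappa(-\zeta,h)=-\kappa(\zeta,h)$, while the vanishing of the $h^1$ term (the expansion begins $F_{0,\tinyvarhexagon}+h^2F_{2,\tinyvarhexagon}+\cdots$, with $F_{j,\tinyvarhexagon}(0)=0$) reflects the absence of a Maslov shift at a conical, as opposed to elliptic, critical point and reproduces the unshifted relativistic Landau levels.

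With the levels in hand I would show that on the energy window in question the spectrum of $Q^{\mathrm w}$ equals $\{z_n(h)\}$ up to $\mathcal O(h^\infty)$, with consecutive levels separated by gaps of size $\gtrsim h$ (square) or $\gtrsim\sqrt h$ (hexagonal, near zero); this is the step where I depart from~\cite{HS20} and argue the gaps directly. Consequently, for $z$ with $|\Im z|$ bounded below, $(Q^{\mathrm w}-z)^{-1}$ is meromorphic with simple poles at the $z_n(h)$, and $\widetilde{\tr}((Q^{\mathrm w}-z)^{-1})=\sum_n w_n (z_n(h)-z)^{-1}+\mathcal O(h^\infty)$. The weight $w_n$ is the Landau-level degeneracy per unit phase-space area: since one quantization cell has symplectic area $2\pi h$ and $\vert\mathbb T^2_*\vert=(2\pi)^2$, each level carries weight $\tfrac{h}{2\pi}$; on the square the single well gives $w_n=\tfrac{h}{2\pi}$, whereas on the hexagon the two symmetry-equivalent Dirac points contribute identical level energies, doubling this to $w_n=\tfrac{h}{\pi}$. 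Inserting this into the $k=0$ term of Proposition~\ref{fP2Q} and evaluating each integral by the Helffer--Sj\"ostrand identity $\tfrac1\pi\int_{\mathbb C}D_{\overline z}\widetilde f(z)(z_n(h)-z)^{-1}\,dm(z)=f(z_n(h))$ yields the leading sums $\tfrac{h}{2\pi}\sum_n f(z_n(h))$ of~\eqref{eq:tracef} and $\tfrac{h}{\pi|\vec b_1\wedge\vec b_2|}\sum_n f(z_n(h))$ of~\eqref{eq:tracef2}.

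The disorder corrections are then read off from the remaining terms of Proposition~\ref{fP2Q}. The $k=1$ and $k=2$ terms carry $\widetilde{f'}$ and $\widetilde{f''}$ against the same simple poles, and by the residue identity they contribute $\sum_k\tfrac{(\lambda\mathbb E(V))^k}{k!}f^{(k)}(z_n(h))=f(z_n(h)+\lambda\mathbb E(V))+\mathcal O(\lambda^3)$, explaining the shifted argument. The genuinely stochastic variance term, involving $\widetilde{\tr}((Q^{\mathrm w}-z)^{-1}_{ii})^2$, has a double pole at each $z_n(h)$; its residue against $\widetilde{f'}$ produces an $f''$ contribution whose coefficient is the $\tfrac1{2\pi}$ above, together with $f'$ terms collected into the functions $g_{\cdot,n}$ of~\eqref{eq:g}, giving the bracket $\tfrac{f''}{2\pi}+f'g_{\cdot,n}$. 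The main obstacle, I expect, is the hexagonal case: because the matrix symbol has a genuine eigenvalue crossing at the Dirac point, its scalar eigenvalues are not smooth there, so the entire reduction must be carried out at the level of the Dirac operator via the Grushin problem, and one must verify both that the squared effective operator has the claimed clean expansion---no odd powers of $h$, $F_{j,\tinyvarhexagon}(0)=0$---and that the matrix structure contributes no additional terms at the displayed order while the $\mathcal O(h^\infty)$ gaps between relativistic Landau levels persist.
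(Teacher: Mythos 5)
Your overall route is the paper's: start from Proposition \ref{fP2Q}, quantize the effective wells by a Bohr--Sommerfeld rule (reducing the hexagonal case to a scalar elliptic well by squaring and a diagonalization/Grushin step), expand the regularized resolvent trace as a sum of simple poles at the $z_n(h)$ plus a holomorphic remainder, and evaluate the Helffer--Sj\"ostrand integrals by residues; your treatment of the $\lambda\mathbb E(V)$-shift and of the variance term through the double pole is also exactly what the paper does in Step~3. The main place where you assert what the paper computes is the residue weight: you get $w_n=\tfrac h{2\pi}$ per well from a phase-space-density heuristic, whereas the paper's Step~2 sets up the global Grushin problem with translated quasimodes $w_\gamma=r_\gamma u$ and evaluates $\int_{\mathbb T^2_*}\sigma(E^0_+E^0_-)\,\tfrac{dx\,d\xi}{4\pi^2}=\tfrac h{2\pi}\int_\RR w_0w_0^*\,dx$; this computation is what simultaneously fixes the constant (hence the later Hall integers) and produces the holomorphic remainders $g_{\cdot,n}$ of \eqref{eq:g} that your variance term needs. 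That omission is fixable, since you invoke the Grushin machinery elsewhere.

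The one point where your reasoning, taken literally, would fail is the claim that the absence of the $h^1$ term in $F_{\tinyvarhexagon}$ ``reflects the absence of a Maslov shift at a conical, as opposed to elliptic, critical point.'' After squaring, the operator $\operatorname{Op}_h^{\text{w}}\mathcal A\,\operatorname{Op}_h^{\text{w}}\mathcal A^*$ restricted to its lowest band \eqref{eq:nu} \emph{is} an elliptic well, so the Bohr--Sommerfeld rule \eqref{eq:F1} applied to it does carry the Maslov term $\tfrac12$; the vanishing of $F_1$ for $q=1$ is a cancellation between this $\tfrac12$ and the contribution of the subprincipal symbol $\sigma_1(\widetilde\nu)$, which the paper obtains by direct computation (citing \cite{BZ2}), and which for general $q$ does \emph{not} cancel --- see $F_1(s,p,q)$ in Theorem \ref{theol}. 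Arguing that no Maslov index is present would give the wrong subprincipal term; you correctly flag this verification as an obstacle, but the mechanism you propose for resolving it is not the right one.
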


\begin{figure}
\includegraphics[height=5 cm]{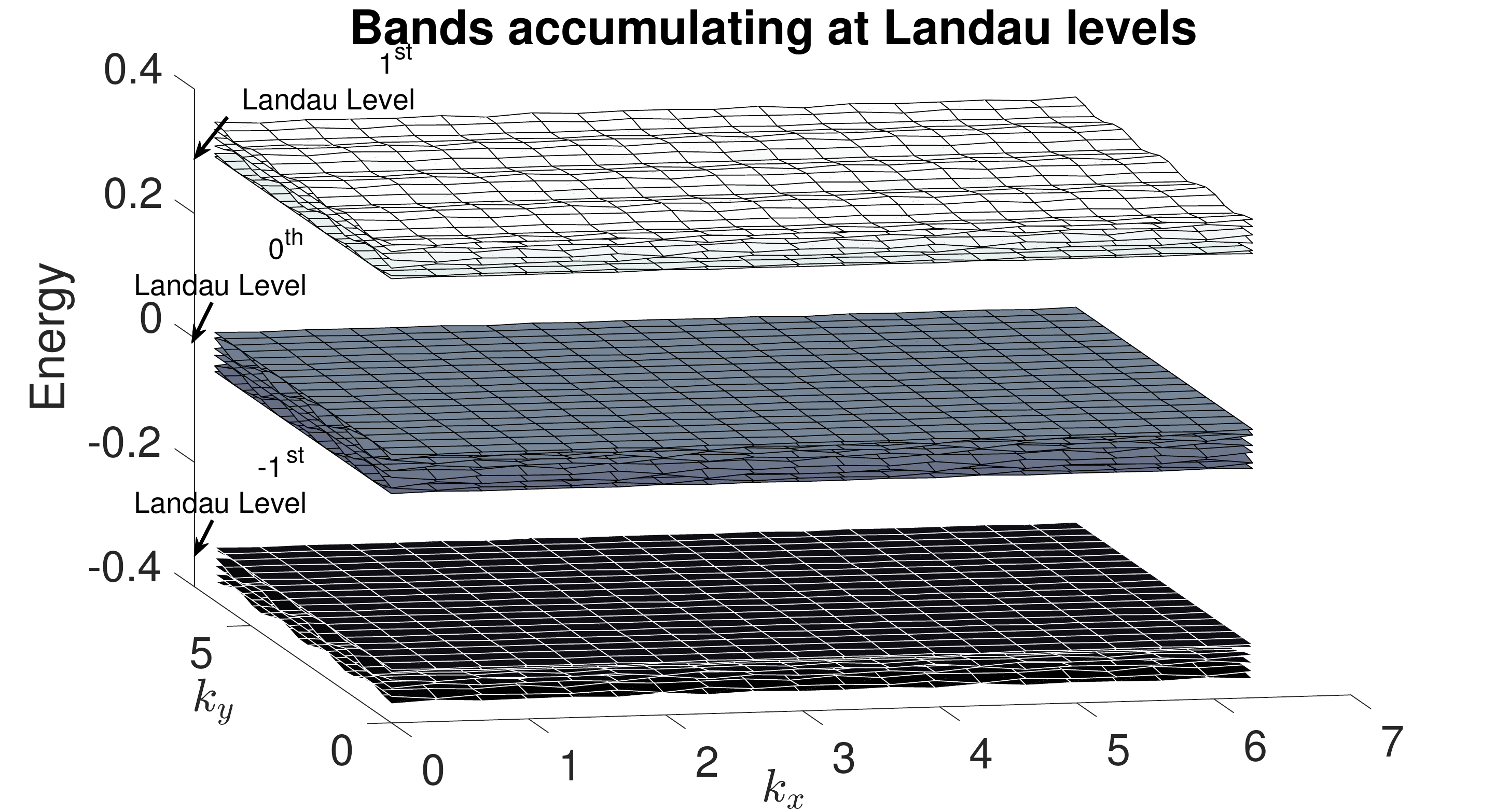} 
\caption{Energy bands for magnetic flux $h=2\pi \frac{4}{30}$ on $\Lambda_{\varhexagon}$ close to the zero energy level. Bands concentrate around certain energies which are precisely the Landau levels defined in Theorem \ref{theo1}.}
\label{fig:HC}
 \end{figure} 

The proof of Theorem \ref{theo1} is given at the end of this article in Section \ref{sec:proofs}.

\begin{rem}
The different prefactor $h/2\pi$ for the square lattice compared with $h/\pi$ for the hexagonal lattice is due to the two-fold degeneracy of quasimodes on the hexagonal lattice (two Dirac cones and therefore two potential wells), cf.\@ Fig. \ref{fig:hex}.

In particular, for functions $f$ whose first and second derivative vanishes at the Landau levels, the randomness only causes a shift of the Landau levels by $\lambda \mathbb E(V).$ This can be thought of as a \emph{semiclassical universality result} for the integrated density of states, if one takes $f$ to be (a smooth approximation of) an indicator function.
By this we mean that the leading-order contribution in the semiclassical parameter $\lambda>0$ in the second line of \eqref{eq:tracef2} vanishes.
\end{rem}

We start by showing that for small enough magnetic fields \emph{without disorder} there exist spectral gaps between the \emph{Landau levels} stated in Theorem \ref{theo1}, cf. Figure \ref{fig:HC}. The presence of spectral gaps is crucial for the study of the quantum Hall effect, as the Hall conductivity remains unchanged as long as the Fermi energy stays inside a spectral gap. 

From the Bohr-Sommerfeld condition stated in Theorem \ref{theo1} in the absence of disorder, i.e.\@ $\lambda \equiv 0,$ we obtain to leading-order approximative Landau levels $z^{(1)}(h)$
\begin{equation}
\begin{split}
\label{BSC}
F_{0,{\scriptscriptstyle{ \blacksquare}}}\vert_{I}\left(z^{(1)}_{{\scriptscriptstyle{ \blacksquare}},n} ( h )\right)=nh,  \text{ and  }
F_{0,\tinyvarhexagon}\vert_{I} \left(z^{(1)}_{\tinyvarhexagon,n} ( h )\right) =|n|h,
\end{split}
\end{equation}
where $F_0$ is the respective normalized phase space area in the Brillouin zone as stated in \eqref{eq:g2F} and \eqref{eq:g2F2}, and $I$ is the respective region of interest, i.e.\@ the respective interval defined in Theorem \ref{theo1}. 
While approximate Landau levels $z^{(1)}_{{\scriptscriptstyle{ \blacksquare}},n} ( h )$ for the square lattice are uniquely defined by the first of the equations in \eqref{BSC}, there are two solutions for the hexagonal lattice (because of the upper and lower cone, see Figure \ref{fig:hex}):
Let us recall from Theorem \ref{theo1} that the asymptotic expansion yields
\begin{equation}
\begin{split}
 F_{{\scriptscriptstyle{ \blacksquare}}} (  z_{{\scriptscriptstyle{ \blacksquare}},n} ( h ) ,h) &= F_{0,{\scriptscriptstyle{ \blacksquare}}} (  z_{{\scriptscriptstyle{ \blacksquare}},n} ( h ) ) + 
\mathcal O ( h^2 z_{{\scriptscriptstyle{ \blacksquare}},n} ( h ) ,h) =nh + \mathcal O ( h^\infty ) ,\\
 F_{\tinyvarhexagon} ( z_{\tinyvarhexagon,n}  ( h ) ^2 ) &= F_{0,\tinyvarhexagon} ( z_{\tinyvarhexagon,n}  ( h ) ^2 ) + 
\mathcal O ( h^2  z_{\tinyvarhexagon,n}  ( h )^2 ) = | n | h + \mathcal O ( h^\infty ) ,\\
\end{split}
\end{equation}
which gives for the leading-order approximations \eqref{BSC} of Landau levels
\begin{equation}
\begin{split}
 z_{{\scriptscriptstyle{ \blacksquare}},n}   ( h )    &=  z_{{\scriptscriptstyle{ \blacksquare}},n}^{(1)}  ( h) 
+ \mathcal O (n h^3) + \mathcal O (h^\infty ) \\
 z_{\tinyvarhexagon,n}  ( h )  ^2  &= z_{\tinyvarhexagon,n}^{(1)}  ( h)^2 
+ \mathcal O ( |n| h^3) + \mathcal O (h^\infty ).
\end{split}
\end{equation}
Hence, by Taylor expansion, Landau levels are to leading order given by
\begin{equation}
\begin{split}
\label{eq:zone}
z_{{\scriptscriptstyle{ \blacksquare}},n}  ( h ) &= z_{{\scriptscriptstyle{ \blacksquare}},n}^{(1) } ( h ) + \mathcal O \left(  nh^{ 3} \right)  \text{ and } z_{\tinyvarhexagon,0} ( h ) = 0+\mathcal O(h^{\infty}) \\
z_{\tinyvarhexagon,n} ( h ) &= z_{\tinyvarhexagon,n}^{(1)} ( h ) + \mathcal O \left( { |n|^{\frac12}} {h^{ \frac52} } \right),     n \neq 0. 
\end{split}
\end{equation}

To make these expressions more concrete, we approximate the cross-section for the square lattice by using that
\[\frac{\cos(x)+\cos(\xi)}{2}+1 = \frac{(x-\pi)^2+(\xi-\pi)^2}{4} + \mathcal O(x^3+\xi^3). \]
Thus, $F_{0,{\scriptscriptstyle{ \blacksquare}}} (s)=2s + \mathcal O(s^2)$ which yields for the Landau levels
\[z_{{\scriptscriptstyle{ \blacksquare}},n}^{(1)}   ( h )  = \frac{(n-\tfrac{1}{2})h}{2} + \mathcal O(n^2h^2), \quad n \in \mathbb N.  \] 
For the hexagonal lattice, we use that 
$ {| 1 + e^{ i x} + e^{ i \xi } |^2}/9 $ vanishes at $( x , \xi ) \in  \mathbb Z^2_* \pm \left( \tfrac{2 \pi } {3} , - \tfrac{ 2 \pi } 3 \right),$
that is, at the Dirac points, see Figure \ref{fig:hex}. 

In small neighbourhoods of $ \pm ( \tfrac{2 \pi } {3} , - \tfrac{ 2 \pi } 3 )  $ we can make a symplectic (and thus area-preserving) change of variables
\[  y = a ( x + \xi ) , \ \ \eta = b \left( \xi - x \pm \tfrac{4 \pi } 3 \right) , \ \ 
2 ab = 1, \]
and find that
\begin{equation}
\label{eq:norfo}
\begin{split}  1 + e^{ i x } + e^{ i \xi } & = 
%\frac{ \sqrt 3 }{2} a \eta - i \frac12 b
%y + \mathcal O ( y^2 + \eta^2 ) = 
c ( \eta \mp i y )  + \mathcal O ( y^2 + \eta^2 ) , \\
1 + e^{ -i x } + e^{ - i \xi } & = c ( \eta \pm i y )  + \mathcal O ( y^2 + \eta^2 ), 
\end{split}
\end{equation}
where $ c = 3^{\frac14} 2^{-\frac12} $ by choosing
$ a = \pm 2^{-\frac12} 3^{-\frac14} $ and $ b = \pm 2^{-\frac12} 3^{\frac14} 
$. 
We thus conclude that for a Fermi velocity $v_F:= \sqrt{2}c/3 = 3^{-3/4}$
\[ z_{\tinyvarhexagon,n}^{(1)}(h) =v_F \sgn(n) \sqrt{\vert n \vert h} + \mathcal O(\vert n\vert h), \quad n \in \mathbb Z. \]

\begin{prop}[Spectral gaps between Landau levels]
\label{prop:SG}
For small $h>0$, the intersection of the region of interest $I$, in Theorem \ref{theo1}, with the spectrum of $H^h:=H^h_{\lambda=0,\omega}$, $\Sigma(H^h) \cap I,$ is contained in disjoint intervals defined by constants $C_{\scriptscriptstyle{ \blacksquare},n},C_{\tinyvarhexagon,n}>0$
\begin{equation}
\begin{split}
\label{eq:Bands}
B_{{\scriptscriptstyle{\blacksquare}},n}(h)&:=[z_{{\scriptscriptstyle{ \blacksquare}},n}^{(1)}(h)-C_{{\scriptscriptstyle{ \blacksquare}},n}h^{3},z_{{\scriptscriptstyle{ \blacksquare}},n}^{(1) }(h)+C_{{\scriptscriptstyle{ \blacksquare}},n} h^{3}], \ n  \in [1,...,N_{{\scriptscriptstyle{ \blacksquare}}}(h)] \\
B_{\tinyvarhexagon,n}(h)&:=[z_{\tinyvarhexagon,n}^{(1) }(h)-C_{\tinyvarhexagon,n}h^{\frac{5}{2}},z_{\tinyvarhexagon,n}^{(1) }(h)+C_{\tinyvarhexagon,n} h^{\frac{5}{2}}], \ n  \in [-N_{\tinyvarhexagon}(h),...,N_{\tinyvarhexagon}(h)].
\end{split}
\end{equation}
Moreover, numbers $N(h)$ have the property that $\lim_{h \downarrow 0} N(h)=\infty.$ 
\end{prop}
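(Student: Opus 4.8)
The plan is to deduce the band confinement from the Bohr--Sommerfeld quantization underlying Theorem \ref{theo1}, while isolating the \emph{genuine} spectral gaps, as opposed to mere smallness of the density of states, in a separate resolvent estimate. First I would pass to the pseudodifferential picture: by Lemma \ref{lem:unitmul} the operator $H^h$ is unitarily equivalent to the magnetic matrix $A^h(a)$, and by \eqref{star} the $*$-homomorphism $\Theta$ intertwines $A^h(a)$ with the $\Psi$DO $Q^{\rm{w}}(x,hp_x)$ of \eqref{eq:effha}. Since $A^h$ and $\Theta$ are both faithful $*$-representations of the algebra $\mathcal C_h$ of Remark \ref{C*remark}, they share a common spectrum, so $\Sigma(H^h)\cap I=\Sigma(Q^{\rm{w}}(x,hp_x))\cap I$ and it suffices to localise the spectrum of the $\Psi$DO inside the region of interest.

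The core step is to show that, for every $z\in I$ lying outside the $\mathcal O(h^\infty)$-neighbourhoods of the quantized values $z_n(h)$ from \eqref{eq:g2F} and \eqref{eq:g2F2}, the operator $Q^{\rm{w}}(x,hp_x)-z$ is invertible with $h$-controlled inverse, so that $z$ lies in the resolvent set. For the square lattice the principal symbol $Q_{{\scriptscriptstyle{ \blacksquare}}}$ of \eqref{eq:symbolQ} takes values in $I$ only near its non-degenerate extremum (the minimum at $(\pi,\pi)$ or the maximum at $(0,0)$, according to the location of $I$); away from that set $Q_{{\scriptscriptstyle{ \blacksquare}}}-z$ is elliptic and invertible modulo $\mathcal O(h^\infty)$ by the $\Psi$DO calculus, whereas near the extremum a symplectic normal form reduces $Q^{\rm{w}}$ to a harmonic oscillator whose resolvent is controlled by $\mathrm{dist}(z,\Sigma)^{-1}$. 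For the hexagonal lattice the same scheme applies with the extremum replaced by the two Dirac points $\pm(\tfrac{2\pi}3,-\tfrac{2\pi}3)$, where by \eqref{eq:norfo} the symbol linearises to a magnetic Weyl operator with relativistic Landau spectrum. Gluing these microlocal inverses by a Grushin (effective-Hamiltonian) reduction in the spirit of \cite{HS20} would yield a parametrix with $\mathcal O(h^\infty)$ error, confining $\Sigma(Q^{\rm{w}})\cap I$ to $\mathcal O(h^\infty)$-neighbourhoods of the $z_n(h)$.

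I would then convert this confinement into the stated bands. Recentring from $z_n(h)$ to the leading-order levels $z^{(1)}_n(h)$ costs $\mathcal O(nh^3)$ on the square lattice and $\mathcal O(|n|^{1/2}h^{5/2})$ on the hexagonal lattice by \eqref{eq:zone}, so choosing $C_{{\scriptscriptstyle{ \blacksquare}},n}\gtrsim n$ and $C_{\tinyvarhexagon,n}\gtrsim |n|^{1/2}$ gives $\Sigma(H^h)\cap I\subset\bigcup_n B_n(h)$. Disjointness would follow by comparing the band widths with the Landau-level spacing: on the square lattice consecutive levels are separated by $\sim h/2$, dwarfing the width $\sim h^3$, while on the hexagonal lattice the spacing near index $n$ is $\sim\sqrt{h/|n|}$, which exceeds the width $\sim|n|^{1/2}h^{5/2}$ throughout the admissible range $|n|\le N_{\tinyvarhexagon}(h)\sim h^{-1}\ll h^{-2}$. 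Finally, $N(h)\to\infty$ follows by counting the $z^{(1)}_n(h)$ inside the fixed interval $I$, the explicit formulae giving $N_{{\scriptscriptstyle{ \blacksquare}}}(h)\sim\delta/h$ and $N_{\tinyvarhexagon}(h)\sim\delta^2/(v_F^2h)$, both tending to infinity as $h\downarrow 0$.

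The hard part will be the resolvent estimate of the second step. The density-of-states expansion of Theorem \ref{theo1} only shows that the spectral measure is $\mathcal O(h^\infty)$ on the gaps, which does not by itself exclude spectrum there; a genuine gap requires the Grushin construction near the well or Dirac point together with uniform control of the gluing error across both the rational and irrational flux regimes. This quantitative semiclassical input is precisely what is developed in Section \ref{sec:proofs}.
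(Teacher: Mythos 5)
Your proposal is correct and follows essentially the same route as the paper: the paper likewise reduces to the isospectral $\Psi$DO, places the Landau-level contributions of the DOS inside the bands, and then excludes the residual $\mathcal O(h^\infty)$ spectrum in the gaps by exactly the resolvent comparison you describe, namely invertibility of $Q^{\rm w}(x,hp_x)-z$ whenever $z$ is at distance $>h^{N_0}$ from the point spectrum of a localized model operator (the content of \cite[Prop.\@ 5.2]{BZ}, appearing as \eqref{eq:comparison} in Section \ref{sec:proofs}). You also correctly identify the key point that the DOS expansion alone does not yield genuine gaps, which is precisely why the paper invokes that semiclassical comparison rather than Theorem \ref{theo1} directly.
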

\begin{proof}
Since the density of states measure is supported exactly where spectrum is, we conclude that the contribution to the DOS from the Landau levels, i.e.\@ the first term on the right hand side of \eqref{eq:tracef} and \eqref{eq:tracef2} is contained in closed \emph{Landau bands}
\begin{equation}
\begin{split}
\label{BLandau}
B_{{\scriptscriptstyle{ \blacksquare}},n}(h)&:=\left[z_{{\scriptscriptstyle{ \blacksquare}},n}^{(1)}(h)-C_{{\scriptscriptstyle{ \blacksquare}},n}h^{3},z_{{\scriptscriptstyle{ \blacksquare}},n}^{(1) }(h)+C_{{\scriptscriptstyle{ \blacksquare}},n} h^{3}\right], \ n \in \mathbb N\\
B_{\tinyvarhexagon,n}(h)&:=\left[z_{\tinyvarhexagon,n}^{(1) }(h)-C_{\tinyvarhexagon,n}h^{\frac{5}{2}},z_{\tinyvarhexagon,n}^{(1) }(h)+C_{\tinyvarhexagon,n}h^{\frac{5}{2}}\right], \ n \in \mathbb Z.
\end{split}
\end{equation}
It remains to exclude spectrum of $\mathcal O(h^{\infty})$-size, see the error bounds in \eqref{eq:tracef} and \eqref{eq:tracef2}, outside intervals $B_n$, possibly after modifying constants $C_n.$ This can be shown, using semiclassical techniques as in \cite[Prop.\@$5.2$]{BZ}. To be precise, the Proposition in \cite{BZ} states that there exists an operator $Q^{\operatorname{w}}_0(x,hp_x)$ whose point spectrum for the hexagonal lattice around zero coincides with the Landau levels, such that if for $z \in \operatorname{nbhd}(0)$, and some fixed $N_0,$
\[ d(z,\Sigma(Q_0^{\operatorname{w}}(x,hp_x)))>h^{N_0} \]
then the operator $Q_{\tinyvarhexagon}^{\operatorname{w}}(x,hp_x)$, that is isospectral to $H^h_{\tinyvarhexagon}$, cf. \cite{BGHKS}[Theo. $6.2$], is also invertible for such $z$. Hence, $H^h_{\tinyvarhexagon}$ does not possess \emph{any spectrum} between the Landau bands. The same argument applies to the square lattice in a neighbourhood of $\pm 1.$
\end{proof}

The preceding Proposition implies that under small disorder, the closed Landau bands in the region of interest will broaden but are still non-overlapping since the decomposition $H^h_{\lambda,\omega} = H^h + \lambda V_{\omega}$ implies
\begin{equation}
\label{eq:TI}
 \Sigma(H^h_{\lambda,\omega}) \subset \left\{z \in \mathbb R;  d(z, \Sigma(H^h)) \le \lambda \left\lVert V \right\rVert_{\infty} \right\}.
 \end{equation}
It follows from Proposition \ref{prop:SG} and \eqref{eq:TI} that for sufficiently weak magnetic fields $h>0$ and small disorder $\lambda \in (0,\lambda_0(h))$ there exist for $H^h_{\lambda,\omega}$ finitely many \emph{(disorder-broadened)} disjoint intervals $\mathcal B_{n,\lambda}(h) \supset B_n(h) $ with $n \in \left\{1,..,N_{\scriptscriptstyle\blacksquare,\lambda}(h) \right\}$, for the square lattice, or with $n\in \left\{-N_{\scriptscriptstyle\tinyvarhexagon,\lambda}(h) ,..,N_{\scriptscriptstyle\tinyvarhexagon,\lambda}(h) \right\}$, in case of the hexagonal lattice, such that 
\begin{equation}
\label{eq:DOLL}
\Sigma(H^h_{ \lambda,\omega}) \subset \mathcal \cup_n \mathcal B_{n,\lambda}(h) \ \ \text{ for all }  \lambda \in (0,\lambda_0(h)),
\end{equation}
where the union of $n$ is taken over the respective sets.

\medskip
Moreover, we assume without loss of generality that the disorder-broadened Landau bands are nested, i.e.\@ for $  \nu  \le \lambda $ we have $\mathcal B_{n,\nu}(h)  \subset \mathcal B_{n,\lambda}(h).$

\section{Quantum Hall effect}
\subsection{The QHE without disorder}
We start by studying the Quantum Hall effect in the absence of disorder using the DOS stated in Theorem \ref{theo1} (we assume $\hbar \in \mathbb R \backslash \mathbb Q$ in the following paragraph). We take St\v{r}eda's formula \cite{S82} as the definition of the Hall conductivity:
\begin{defi}[St\v{r}eda formula]
For (possibly random) Schr\"odinger operators $H^h_{\lambda,\omega}$ with Fermi energy $\mu$ inside a gap $d(\mu,\Sigma(H^h_{\lambda,\omega}))>0$ a.s.\@ we define the Hall conductivity by the St\v{r}eda formula
\begin{equation}
\label{eq:express1}
 c_H(H^h_{\lambda,\omega},\mu) := \vert \vec{b}_1 \wedge \vec{b}_2 \vert D_h \widetilde{\operatorname{tr}}_{\Lambda} \left(\indic_{(-\infty, \mu]}(H^h_{\lambda,\omega}) \right). 
 \end{equation}
 \end{defi}
 
The DOS is differentiable, since by \eqref{eq:Birkhoff} the right-hand side of
\[ \widetilde{\operatorname{tr}}_{\Lambda}(\indic_{I}(H^h_{\lambda,\omega })) = \frac{\mathbb E \operatorname{tr} \indic_{W_{\Lambda}} \indic_{I}(H^h_{\lambda,\omega})}{ \vert \vec{b}_1 \wedge \vec{b}_2 \vert} \]
is differentiable.
This follows from holomorphic functional calculus
\[  \indic_{I}(H^h_{\lambda,\omega}) = (2\pi i)^{-1} \oint_{I} (z-H^h_{\lambda,\omega})^{-1} \ dz, \]
as $H^h_{\lambda,\omega}$ depends analytically on $h$, i.e. $h \mapsto \indic_{I}(H^h_{\lambda,\omega})$ is differentiable as long as $\partial I$ is in a spectral gap. Thus, $h \mapsto \widetilde{\operatorname{tr}}_{\Lambda}(\indic_{I}(H^h_{\lambda,\omega }))$ is differentiable as well.

On $\ell^2(\mathbb Z^2)$ we define the rotation algebra $\mathcal A_{\hbar}$ as the operator norm closure
\begin{equation}
\label{eq:irratalg}
\mathcal A_{\hbar} := \overline{\left\{ T \in \mathcal L( \ell^2(\mathbb Z^2; \CC^n)); \exists k \in \mathbb N, \ c_{\gamma} \in \mathbb C: T=\sum_{\vert \gamma \vert \le k} c_{\gamma} \tau^h_{\gamma} \right\}}^{\Vert \bullet \Vert}. 
 \end{equation}
Magnetic matrices introduced in Definition \ref{def:MM} form a $*$-representation of the irrational rotation algebra.We then focus on the subalgebra $\mathcal A_{\hbar}^{\infty} \subset \mathcal  A_{\hbar} $ of magnetic matrices with rapidly decaying symbols, i.e. with coefficients  in \eqref{eq:irratalg} that satisfy $(c_{\gamma}) \in \mathscr S(\mathbb Z^2; \mathbb C)$. The set $\mathcal A_{\hbar}^{\infty} $ is still a locally convex algebra equipped with standard seminorms inducing decay faster than any polynomial power $\vert (c_{\gamma}) \vert_i:= \sup_{\gamma \in \mathbb Z^2} \left\vert (1+\vert \gamma\vert)^{i} c_{\gamma} \right\vert_{\mathbb C^{n \times n}}.$ Moreover, the inverse of a magnetic matrix $A^h(a) \in \mathcal A_{-\hbar}^{\infty}$ is again a magnetic matrix \cite[Prop.\@ 5.1]{HS0}, i.e.\@ we have for $z \notin \Sigma(A^h(a))$ that $(A^h(a)-z)^{-1} \in \mathcal A_{-\hbar}^{\infty}$, again.\footnote{Equation \eqref{eq:commrel2} shows that magnetic matrices satisfy the canonical commutation relation with $-h$ rather than $h$.}

The smooth subalgebra $\mathcal A_{\hbar}^{\infty}$ is stable under holomorphic functional calculus \cite[Ch.3 App.\@C]{C94} which implies that Fermi projections of $A^h(a)$, are again elements of $\mathcal A_{-\hbar}^{\infty}$, as long as $\mu \notin \Sigma(A^h(a))$
\[ \indic_{(-\infty, \mu]}(A^h(a)) = (2\pi i)^{-1} \oint_{\Sigma(A^h(a))} (z-A^h(a))^{-1}   \ dz \in \mathcal A_{-\hbar}^{\infty}. \]
The irrational rotation algebra $\mathcal A_{\hbar}^{\infty}$ possesses a unique normalized trace\footnote{since the weak closure of $A_{\hbar}$ is a (hyperfinite) type $\Pi_1$ factor.} \cite[Prop.\@ $2.3$,$2.4$]{Sh94} which therefore agrees with the trace $\widetilde{\operatorname{tr}}$ we use in this article.
The $K_0$ group of the irrational rotation algebra is given by $K_0(\mathcal A_{\hbar}) = \mathbb Z+\hbar \ \mathbb Z $ \cite{PV1,PV2}. Moreover, there exists a distinguished projection \cite{R81}, the so-called \emph{Powers-Rieffel} projection $\operatorname{P}_R$, which together with the identity generate the $K_0$ group. 
The inclusion of $K_0$ groups of the dense subalgebra $\mathcal A_{\hbar}^{\infty}$ into the one of $\mathcal A_{\hbar}$ is an isomorphism \cite[App.\@ $3$, Prop.\@ 2a]{C85} which implies that the above results remain true for $\mathcal A_{\hbar}^{\infty}$ as well.
 
This implies that for any projection $P \in \mathcal A^{\infty}_{\hbar}$
\begin{equation}
\label{eq:Rieffel}
\widetilde{\operatorname{tr}}_{\ZZ^2}( P ) = \gamma_1\widetilde{\operatorname{tr}}_{\ZZ^2}( \operatorname{id} ) + \gamma_2 \widetilde{\operatorname{tr}}_{\ZZ^2}( \operatorname{P}_R ) = \gamma_1+\gamma_2 \hbar.
\end{equation}
In the language of noncommutative geometry our trace $\tau_0:=\widetilde{\operatorname{tr}}_{\ZZ^2}$ is called the \emph{$0$-cocycle}. For the quantum Hall effect the \emph{$2$-cocycle} $\tau_2$ with $a_0,a_1,a_2 \in \mathcal A^{\infty}_{\hbar}$ is of particular importance
\begin{equation}
\label{eq:cocycle}
\tau_2(a_0,a_1,a_2) := \tau_0(a_0(\delta_1(a_1)\delta_2(a_2)-\delta_2(a_1) \delta_1(a_2))) 
\end{equation}
with derivations
\begin{equation}
\label{eq:derv}
\delta_1(\tau_{\gamma}^h):= i \gamma_1 \tau_{\gamma}^h \text{ and } \delta_2(\tau_{\gamma}^h):=i \gamma_2  \tau_{\gamma}^h.  
\end{equation}
In particular, we write $\Theta(a_0):=\tau_2(a_0,a_0,a_0)$ and will revisit $\Theta$ in the Kubo-Chern formula for the Hall conductance.
It follows then from \cite[Cor.\@ 16 in Ch.\@ III Sec.\@ 3]{C94} (see also \cite[p. 359]{C94}) that for any $a_0 \in K_0(\mathcal A_{\hbar}^{\infty})$ one has 
\begin{equation}
\label{eq:gamma2}
\Theta(a_0)= 2\pi i\gamma_2
\end{equation}
where $\gamma_2 \in \mathbb Z$ coincides with the eponymous integer in \eqref{eq:Rieffel}.

The semiclassical description of the DOS in Theorem \ref{theo1} implies together with the results from the previous paragraph, the following Proposition\footnote{We gauge the Hall conductivity for the hexagonal lattice in such a way that a full band has Hall conductivity zero.}:
\begin{prop}[Quantum Hall effect]
\label{QHEprop}
Let $h>0$ be small enough and consider zero disorder, i.e.\@ $\lambda=0$. The Hall conductivity is then in the spectral gaps between closed Landau bands \eqref{eq:Bands} for the discrete Schr\"odinger operators $H^h$ given by
\begin{equation}
\begin{split}
c_H(H^h(a_{{\scriptscriptstyle{ \blacksquare}}}),\mu) &= \tfrac{n}{2\pi}, \ \mu\text{ between }B_{{\scriptscriptstyle{ \blacksquare}},n}\ \& \ B_{{\scriptscriptstyle{ \blacksquare}},n+1}\text{ with }n \in \left\{1,..,N_{\scriptscriptstyle\blacksquare}(h) \right\} \text{ and }  \\
c_H(H^h(a_{\tinyvarhexagon}),\mu) &= \begin{cases} & \tfrac{2n+1}{2\pi}, \ \mu\text{ between }B_{\tinyvarhexagon,n}\ \& \ B_{\tinyvarhexagon,n+1}\text{ with }0 \le n \le N_{\scriptscriptstyle\tinyvarhexagon}(h) \\
&\frac{2n-1}{2\pi}, \ \mu\text{ between }B_{\tinyvarhexagon,n-1}\ \& \ B_{\tinyvarhexagon,n}\text{ with }0 \ge n \ge -N_{\scriptscriptstyle\tinyvarhexagon}(h).
\end{cases}
\end{split}
\end{equation}
\end{prop}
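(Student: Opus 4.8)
The plan is to reduce the computation of $c_H$ to the determination of a single integer invariant attached to the Fermi projection, and then to pin down that integer using the density-of-states expansion of Theorem \ref{theo1}. Throughout we work at $\lambda=0$ and fix $\mu$ in one of the spectral gaps furnished by Proposition \ref{prop:SG}, so that $d(\mu,\Sigma(H^h))>0$ and the St\v{r}eda formula \eqref{eq:express1} applies. I would first record that the Fermi projection $P_\mu:=\indic_{(-\infty,\mu]}(A^h(a))$ lies in the smooth subalgebra $\mathcal A_{-\hbar}^{\infty}$, since this algebra is stable under holomorphic functional calculus and $\mu$ sits in a gap. By the trace identity \eqref{eq:Rieffel} there are integers $\gamma_1,\gamma_2$ (the coefficients of the class $[P_\mu]\in K_0(\mathcal A_{-\hbar})$ in the basis $\{[\operatorname{id}],[\operatorname{P}_R]\}$) with $\widetilde{\operatorname{tr}}_{\ZZ^2}(P_\mu)=\gamma_1+\gamma_2\hbar$. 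Because $[P_\mu]$ is locally constant in $h$ as long as $\mu$ stays in the gap (continuity of the projection together with discreteness of $K_0$), the integers $\gamma_1,\gamma_2$ do not depend on $h$ locally. Combining the reduction \eqref{eq:traces} with \eqref{eq:express1} and $\hbar=h/2\pi$ then gives, in a fixed gap,
\begin{equation*}
c_H(H^h,\mu) = |\vec{b}_1 \wedge \vec{b}_2|\, D_h\!\left( |\vec{b}_1 \wedge \vec{b}_2|^{-1}(\gamma_1+\gamma_2\hbar) \right) = \frac{\gamma_2}{2\pi},
\end{equation*}
valid on both lattices and consistent with the Kubo--Chern identification $\Theta(P_\mu)=2\pi i\gamma_2$ in \eqref{eq:gamma2}. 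It then remains only to compute the integer $\gamma_2$ in each gap.

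Next I would determine the jump of $\gamma_2$ as $\mu$ passes through a single Landau band. The jump in $\widetilde{\operatorname{tr}}_{\ZZ^2}(P_\mu)=|\vec{b}_1 \wedge \vec{b}_2|\,\widetilde{\operatorname{tr}}_{\Lambda}(\indic_{(-\infty,\mu]}(H^h))$ across $B_n$ equals $|\vec{b}_1 \wedge \vec{b}_2|$ times the mass the DOS assigns to that band. By Theorem \ref{theo1}, taking $f$ a smooth approximation of $\indic_{B_n}$ supported in the region of interest, this mass is $\tfrac{h}{2\pi}+\mathcal O(h^\infty)$ on the square lattice and $\tfrac{h}{\pi|\vec{b}_1 \wedge \vec{b}_2|}+\mathcal O(h^\infty)$ on the hexagonal lattice, since exactly one Landau level contributes. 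Hence the jump in $\gamma_1+\gamma_2\hbar$ across $B_n$ equals $\hbar+\mathcal O(h^\infty)$ (square) or $2\hbar+\mathcal O(h^\infty)$ (hexagonal). As the left-hand side is an \emph{exact} affine function of $\hbar$ with integer coefficients, letting $h\downarrow 0$ forces the jump in $\gamma_1$ to vanish and the jump in $\gamma_2$ to be $1$ on the square and $2$ on the hexagonal lattice.

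Finally I would fix an anchor and propagate by these jumps. On the square lattice the interval $I$ abuts the bottom of the spectrum, so for $\mu$ below all bands one has $P_\mu=0$ and $\gamma_2=0$; crossing $B_{{\scriptscriptstyle{ \blacksquare}},1},\dots,B_{{\scriptscriptstyle{ \blacksquare}},n}$ increments $\gamma_2$ by one each time, giving $\gamma_2=n$ and $c_H=\tfrac{n}{2\pi}$. On the hexagonal lattice a naive count from the bottom is unavailable, so I would instead use the chiral symmetry: at $\lambda=0$ the sublattice involution $\Pi=\operatorname{diag}(1,-1)$ on $\CC^2$ satisfies $\Pi H^h_{\tinyvarhexagon}\Pi=-H^h_{\tinyvarhexagon}$, and since $\Pi$ acts fibrewise and preserves $\widetilde{\operatorname{tr}}$, one gets $\widetilde{\operatorname{tr}}_{\ZZ^2}(P_\mu)+\widetilde{\operatorname{tr}}_{\ZZ^2}(P_{-\mu})=\widetilde{\operatorname{tr}}_{\ZZ^2}(\operatorname{id})=2$ for $\pm\mu$ in gaps, whence $\gamma_2(\mu)=-\gamma_2(-\mu)$ (this is the gauge in the footnote that a full band carries $\gamma_2=0$). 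Applying this to the two gaps adjacent to $B_{\tinyvarhexagon,0}$ together with the jump $+2$ across $B_{\tinyvarhexagon,0}$ yields $\gamma_2=\pm1$ there, and propagating the $+2$ jumps across $B_{\tinyvarhexagon,1},B_{\tinyvarhexagon,2},\dots$ upward and $B_{\tinyvarhexagon,-1},B_{\tinyvarhexagon,-2},\dots$ downward gives $\gamma_2=2n+1$ between $B_{\tinyvarhexagon,n}$ and $B_{\tinyvarhexagon,n+1}$ for $n\ge 0$ and $\gamma_2=2n-1$ between $B_{\tinyvarhexagon,n-1}$ and $B_{\tinyvarhexagon,n}$ for $n\le 0$, which are the claimed values.

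I expect the main obstacle to be the bookkeeping that legitimately extracts integers from an asymptotic expansion with an $\mathcal O(h^\infty)$ remainder: one must exploit that $\widetilde{\operatorname{tr}}_{\ZZ^2}(P_\mu)$ is \emph{exactly} of the form $\gamma_1+\gamma_2\hbar$, so the smooth error cannot perturb the integer, and that the $K_0$-class stays genuinely locally constant in $h$ across the relevant flux range. A secondary delicate point is the hexagonal anchoring, where the infinitely many bands below $\mu$ rule out the direct count used for the square lattice and make the chiral symmetry indispensable for fixing the reference value of $\gamma_2$.
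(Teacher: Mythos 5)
Your proposal is correct and follows the paper's strategy: St\v{r}eda's formula plus the trace quantization \eqref{eq:Rieffel} reduce everything to the integer $\gamma_2$, which is then extracted from the DOS expansion of Theorem \ref{theo1} using that $\gamma_1,\gamma_2$ are locally constant in $h$ so the $\mathcal O(h^\infty)$ remainder cannot shift them. The one place you genuinely diverge is the normalization on the hexagonal lattice: the paper fixes the additive constant by evaluating $\widetilde{\operatorname{tr}}_{\Lambda}(\indic_{[0,\mu]}(H^h_{\tinyvarhexagon}))$ together with the gauge convention of the footnote, whereas you derive it from the sublattice (chiral) symmetry $\Pi H^h_{\tinyvarhexagon}\Pi=-H^h_{\tinyvarhexagon}$, which gives the exact identity $\widetilde{\operatorname{tr}}_{\ZZ^2}(P_\mu)+\widetilde{\operatorname{tr}}_{\ZZ^2}(P_{-\mu})=2$ and hence $\gamma_2(\mu)=-\gamma_2(-\mu)$. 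This is a clean way to handle the zeroth Landau band: since $0$ lies inside $B_{\tinyvarhexagon,0}$, the projection $\indic_{[0,\mu]}$ is not itself covered by the St\v{r}eda/gap hypothesis, and your symmetry argument makes precise why that band contributes a jump of $2$ split as $\pm 1$ about zero energy. Your jump-plus-anchor bookkeeping (counting $\Delta\gamma_2=1$ per square band from $P_\mu=0$ below the spectrum, and $\Delta\gamma_2=2$ per hexagonal band propagated from the chiral anchor) is equivalent to the paper's direct count of Landau levels below $\mu$, and both rest on the same two inputs.
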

\begin{proof}
We just have to find the integer-valued coefficients in \eqref{eq:Rieffel} which we can obtain from the semiclassical expressions for the DOS in Theorem \ref{theo1}. Since Theorem \ref{theo1} does not allow us immediately to study spectral projections $\indic_{I}(H^h_{\lambda,\omega})$ we use smooth cut-off functions $\widetilde{\indic}_I(H^h_{\lambda,\omega})$ that coincide with the indicator function in the Landau bands and decay to zero in the spectral gaps (the DOS is supported on the spectrum, only). 
Theorem \ref{theo1} implies that for Fermi energies $\mu$ between Landau bands
\begin{equation}
\begin{split}
&\widetilde \tr_{\Lambda} (\indic_{(-\infty,\mu]} ( H^h_{{\scriptscriptstyle{ \blacksquare}}} )) =\tfrac{h}{2\pi} \sum_{ n \in \NN } 
\indic_{(-\infty,\mu]}( z_{n} ( h  ))+ \mathcal O(h^{\infty})  \\
&\widetilde \tr_{\Lambda} (\indic_{[0,\mu]} ( H^h_{\tinyvarhexagon} )) =\tfrac{h}{\pi \vert \vec{b}_1\wedge \vec{b}_2 \vert} \sum_{ n \in \ZZ } 
\indic_{[0,\mu]}( z_{n} ( h  ))+ \mathcal O(h^{\infty}). 
\end{split}
\end{equation}

Since the Hall conductivity is constant in spectral gaps and continuous in the magnetic field, the $\mathcal O(h^{\infty})$ error term in Theorem \ref{theo1} does not contribute to \eqref{eq:Rieffel}. We therefore find in \eqref{eq:Rieffel} that $\gamma_1=0$ and
\begin{equation}
\begin{split}
\gamma_{2,{\scriptscriptstyle{ \blacksquare}}} &= n, \quad \mu\text{ between }B_{{\scriptscriptstyle{ \blacksquare}},n}\ \& \ B_{{\scriptscriptstyle{ \blacksquare}},n+1}\text{ with }n \in \left\{1,..,N_{\scriptscriptstyle\blacksquare}(h) \right\}  \\
\gamma_{2,\tinyvarhexagon} &= \begin{cases}  &2n+1, \quad \mu\text{ between }B_{\tinyvarhexagon,n}\ \& \ B_{\tinyvarhexagon,n+1}\text{ with }0 \le n \le N_{\scriptscriptstyle\tinyvarhexagon}(h) \\
& 2n-1, \quad \mu\text{ between }B_{\tinyvarhexagon,n-1}\ \& \ B_{\tinyvarhexagon,n}\text{ with }0 \ge n \ge -N_{\scriptscriptstyle\tinyvarhexagon}(h).
\end{cases}
\end{split}
\end{equation}

\end{proof}

Let us recall how the Hall conductivity relates to the geometric framework of condensed matter physics \cite{B84}, see also \cite{S83}, following the construction in \cite[p.\@237+238]{C94}:
We study the algebra $\Omega^*:=\mathcal A^{\infty}_{\hbar} \otimes \wedge^* \mathbb C^2$. Using derivations \eqref{eq:derv}, we can define the differentials
\begin{equation}
\begin{split}
&d(a \otimes \alpha)  := \delta_1 (a) \vec{e}_1 \wedge \alpha + \delta_2(a)  \vec{e}_2 \wedge \alpha  \\
&d\left( a_1 \otimes  \vec{e}_1 +  a_2 \otimes  \vec{e}_2 \right) = (\delta_1(a_2)-\delta_2(a_1)) \otimes  \vec{e}_1 \wedge  \vec{e}_2.
\end{split}
\end{equation}
For forms of top degree there is the trace $\int:\Omega^{*2} \rightarrow \mathbb C$ given by $\int a \otimes ( \vec{e}_1 \wedge  \vec{e}_2) =a_{00}.$
Let $p \in\mathcal A^{\infty}_{\hbar}$ be a projection with module $M^{\infty}:=p\mathcal A^{\infty}_{\hbar}.$
For $m \in M^{\infty}$ and $a \in \mathcal A^{\infty}_{\hbar}$ we define connections \emph{(Berry connections)} $\nabla_i: M^{\infty} \rightarrow M^{\infty}$
\[ \nabla_i(\xi a)= \nabla_i(\xi)a + \xi \ \delta_i(a) :=p \ \delta_i(\xi) \ a +\xi \ \delta_i(a), \ i \in \left\{1,2 \right\}.\]
The curvature tensor \emph{(Berry curvature)}, is then defined as $R:= [\nabla_1,\nabla_2]\otimes ( \vec{e}_1 \wedge  \vec{e}_2).$

The first Chern number \emph{(Berry phase)} is an invariant of the module, independent of the connection, defined by $\operatorname{Ch}(p):= (2\pi i)^{-1}\int R = (2\pi i)^{-1}\Theta(p).$ 

With this vocabulary at hand, we now come to an equivalent second definition of the Hall conductivity:

\begin{defi}[Kubo-Chern formula]
\label{KC}
Let $\mu$ be an energy in an a.s.\@ spectral gap of $A^h(a_{\lambda,\omega})$ with associated spectral projection $P_A:=\indic_{(-\infty,\mu]}(A^h(a_{\lambda,\omega}))$, then the conductivity tensor $(\sigma_{jk})_{jk} \in \mathbb C^{2 \times 2}$ satisfies
\[ \sigma_{jk} := -i  \ \widetilde{\operatorname{tr}}_{\ZZ^2}\left( P_A[[P_A,x_{j}],[P_A,x_{k}]]   \right) =-i \mathbb E \left[ \Theta(P_A) \right]. \]
\end{defi}

The following Proposition states that the definitions of the Hall conductivity by the Kubo-Chern and St\v{r}eda formula yield the same result and are the same for all equivalent versions of the (random) DML:
\begin{prop}
\label{prop:cK}
Let $I$ be an interval such that $\partial I$ is in an a.s.\@ spectral gap of $A^h(a_{\lambda,\omega})$ and let $P_A:=\indic_{I}(A^h(a_{\lambda,\omega}))$, then the St\v{r}eda formula agrees with the off-diagonal conductivity in the Kubo-Chern formula
\[ D_h \widetilde{\operatorname{tr}}_{\ZZ^2}(P_A)   =  -i  \ \widetilde{\operatorname{tr}}_{\ZZ^2}\left( P_A[[P_A,x_1],[P_A,x_2]]   \right) = -i \Theta(P_A) .\]
Moreover, let $P_{H^h_{\lambda,\omega}}(I):=\indic_{I}(H^h_{\lambda,\omega})$ be the Fermi projection of $H^h_{\lambda,\omega}$, the Kubo-Chern formulas of projections coincide for $X_i(\gamma_1\vec{b}_1+\gamma_2 \vec{b}_2+r_j):=\gamma_i$
\begin{equation}
\begin{split}
 \widetilde{\operatorname{tr}}_{\Lambda} \left(  P_{H}[[P_{H},X_1],[P_{H},X_2]]  \right)
  &=| \vec{b}_1 \wedge \vec{b}_2 |^{-1} \ \widetilde{\operatorname{tr}}_{\ZZ^2} \left(  P_{A}[[P_{A},x_1],[P_{A},x_2]]  \right).
  \end{split}
  \end{equation}

\end{prop}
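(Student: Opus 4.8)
The plan is to prove the two displayed identities by separate mechanisms: the first line splits into an \emph{algebraic} identity (Kubo--Chern equals $-i\Theta(P_A)$) and a \emph{$K_0$-theoretic} identity (the Štreda derivative equals the same quantity), while the second line is obtained by transporting the Kubo--Chern expression through the unitary conjugation of Lemma \ref{lem:unitmul}. For the algebraic identity I would first record that commutators with the position operators implement the derivations \eqref{eq:derv}: a direct computation on the generators gives $[x_j,\tau^h_\gamma]=\gamma_j\tau^h_\gamma$, so that $[\,\cdot\,,x_j]$ acts on the smooth subalgebra $\mathcal A^\infty_{-\hbar}$ as $-i\delta_j$, equivalently $[P_A,x_j]=i\,\delta_j(P_A)$. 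These commutators are bounded because $P_A\in\mathcal A^\infty_{-\hbar}$ has a rapidly decaying symbol, so the weighting of its kernel by $\gamma_j$ stays summable. Substituting into the double commutator yields
\[
P_A\big[[P_A,x_1],[P_A,x_2]\big]=-\,P_A\big(\delta_1(P_A)\delta_2(P_A)-\delta_2(P_A)\delta_1(P_A)\big),
\]
whose regularized trace is exactly $-\tau_2(P_A,P_A,P_A)=-\Theta(P_A)$ by \eqref{eq:cocycle}, and the prefactor $-i$ gives the middle-to-right equality.

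For the Štreda side I would use that, by \eqref{eq:Rieffel}, the trace of the projection is affine in the deformation parameter, $\widetilde{\operatorname{tr}}_{\ZZ^2}(P_A)=\gamma_1+\gamma_2\hbar$, with $(\gamma_1,\gamma_2)\in\ZZ^2$ the class of $[P_A]$ in $K_0(\mathcal A_{-\hbar})$. The key point is that these integers are \emph{locally constant} in $h$: the Fermi projection depends continuously—indeed analytically, by the holomorphic functional calculus argument recorded after \eqref{eq:express1}—on $h$ so long as $\mu$ stays in the spectral gap, and a norm-continuous path of projections cannot change its $K_0$-class since $K_0$ is discrete and the inclusion $K_0(\mathcal A^\infty_{-\hbar})\hookrightarrow K_0(\mathcal A_{-\hbar})$ is an isomorphism. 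Hence $D_h\widetilde{\operatorname{tr}}_{\ZZ^2}(P_A)=\gamma_2\,D_h\hbar$, and the identification $\gamma_2=\Theta(P_A)/(2\pi i)$ from \eqref{eq:gamma2} reproduces $-i\Theta(P_A)$ in the normalization of the cited Connes results, closing the first line.

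To reach the second line I would invoke Lemma \ref{lem:unitmul}, which furnishes a multiplication-operator unitary $U$ (assembled from the diagonal phases $\zeta_v$, the gauge factor $W$, and the reindexing $U_2$) with $H^h_{\lambda,\omega}=U\,A^h(a_{\lambda,\omega})\,U^*$, hence $P_H=U P_A U^*$. Because $U$ respects the site labelling $\gamma_1\vec b_1+\gamma_2\vec b_2+r_j\mapsto(\gamma,j)$, it intertwines the position operators, $U^*X_iU=x_i$: the phases $\zeta_v$ and $W$ are diagonal and commute with $X_i$, while $U_2$ carries the coordinate $\gamma_i$ to $x_i$ on both $\CC^n$-components. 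Therefore $P_H[[P_H,X_1],[P_H,X_2]]=U\,P_A[[P_A,x_1],[P_A,x_2]]\,U^*$, and since $U$ is a multiplication operator the two regularized traces differ only by the volume normalization $|\vec b_1\wedge\vec b_2|^{-1}$, exactly as in \eqref{eq:traces}.

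The hard part will be the local-constancy step: one must combine continuity of the Fermi projection in $h$ with the discreteness of $K_0$ and the smooth-subalgebra isomorphism to exclude jumps, and then reconcile the normalization of the Connes pairing \eqref{eq:gamma2} so that the bare derivative $D_h$ reproduces $-i\Theta$ and not a rescaling of it. A secondary technical nuisance is verifying $U^*X_iU=x_i$ on the hexagonal lattice despite the recursively defined $\zeta_v$, where the two-component structure and the precise form of $U_2$ must be tracked carefully.
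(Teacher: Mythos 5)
Your treatment of the second displayed identity coincides with the paper's: both conjugate by the multiplication-operator unitary of Lemma \ref{lem:unitmul}, use that it intertwines the position operators $X_i$ and $x_i$, and absorb the volume factor exactly as in \eqref{eq:traces}. For the first identity you take a genuinely different route: the paper does not prove it internally at all but cites \cite[Theorem 7]{ST} (with a footnote conceding a sign-convention discrepancy), whereas you split it into an algebraic step (the relation $[a,x_j]=i\delta_j(a)$ converts the double commutator into the cyclic cocycle \eqref{eq:cocycle}) plus a gap-labelling step ($\widetilde{\operatorname{tr}}_{\ZZ^2}(P_A)=\gamma_1+\gamma_2\hbar$ with locally constant integers, combined with \eqref{eq:gamma2}). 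This is more self-contained within the paper's own noncommutative framework and is essentially the mechanism underlying the cited theorem, so the strategy is sound in outline.

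As written, however, the first half does not close. (i) Your commutator computation yields $\widetilde{\operatorname{tr}}_{\ZZ^2}\left(P_A[[P_A,x_1],[P_A,x_2]]\right)=-\Theta(P_A)$, so the prefactor $-i$ produces $+i\Theta(P_A)$ rather than $-i\Theta(P_A)$; this may well be absorbed by the sign convention for magnetic matrices (the $e^{-i\frac h2\sigma_{\text{symp}}}$ phase places $P_A$ in $\mathcal A^\infty_{-\hbar}$), but it has to be tracked, not asserted. (ii) The normalizations genuinely fail to match in your setup: $D_h\widetilde{\operatorname{tr}}_{\ZZ^2}(P_A)=\gamma_2\,D_h\hbar=\gamma_2/(2\pi)$, while $-i\Theta(P_A)=2\pi\gamma_2$ by \eqref{eq:gamma2} --- a factor of $4\pi^2$. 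You flag this reconciliation as "the hard part," but it \emph{is} the content of the Streda--Kubo identity rather than a cosmetic adjustment, so leaving it open leaves the first equality unproved. (iii) Local constancy of $(\gamma_1,\gamma_2)$ in $h$ cannot be deduced from "a norm-continuous path of projections," because the algebra $\mathcal A_{\hbar}$ itself varies with $h$; one needs the continuous-field structure of Remark \ref{C*remark} (or the continuity of the trace pairing along the field) to make that step meaningful. None of these defects is fatal to the approach, but each must be supplied; the paper sidesteps all three by invoking \cite[Theorem 7]{ST}.
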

\begin{proof}
The first part of the Proposition, follows from the noncommutative framework and a direct computation can be found in \cite[Theorem $7$]{ST}.\footnote{The different sign compared with \cite[(51)]{ST} is due to a different sign convention that we use for magnetic matrices.} The second part follows as $UH^h_{\lambda,\omega}=A^h(a_{\lambda,\omega})U$ for a unitary multiplication operator $U$, by Lemma \ref{lem:unitmul},
\begin{equation}
\begin{split}
&| \vec{b}_1 \wedge \vec{b}_2 |\widetilde{\operatorname{tr}}_{\Lambda} \left(  P_H[[P_H,X_1],[P_H,X_2]] \right) \\
&=\mathbb E \  \operatorname{tr} \left( \left\langle U^*\delta_0, P_H[[P_H,X_1],[P_H,X_2]]  U^* \delta_0 \right\rangle\right) \\ 
&=\mathbb E \ \operatorname{tr}_{\CC^n} \left( \left\langle \delta_0, P_A[[P_A,x_1],[P_A,x_2]] \delta_0 \right\rangle\right) \\ 
&=\widetilde{\operatorname{tr}}_{\ZZ^2} \left(   P_A[[P_A,x_1],[P_A,x_2]]   \right). 
\end{split}
\end{equation}
\end{proof}
Finally, we shall use a third way of expressing the Hall conductivity using the relative index of projections. This representation is due to Avron, Seiler, and Simon \cite{ASS}. The version used here can be found in \cite[Ch.\@$14.5$]{AW}.
\begin{defi}[Index-theoretic formulation]
Let $P_{\lambda,\omega}$ be an orthogonal projection on $\ell^2(\mathbb Z^2)$ satisfying the covariance relation $\tau_{\gamma}^h P_{\lambda,T_{\gamma}\omega} = P_{\lambda,\omega} \tau_{\gamma}^h$
 with translations \eqref{eq:deftau} such that
\begin{equation}
\label{eq:series}
\sum_{x \in \mathbb Z^2} \vert x \vert \left(\mathbb E \vert P_{\lambda,\omega}[0,x]\vert^3\right)^{1/3} < \infty. 
\end{equation}
Using unitary operators $(U_a\psi)(x):=e^{-i \theta_a(x)} \psi(x)$ with $\theta_a(x):=\operatorname{arg}(x-a) \in (-\pi,\pi],$\footnote{Here we use the obvious identification of $\mathbb R^2$ with $\mathbb C$.} 
the off-diagonal component of the conductivity tensor $\sigma_{1,2}$ is given by the almost sure and $a \in \mathbb T_2^*$ independent value of the relative index
\[ 2\pi \sigma_{1,2}=\operatorname{ind}(P_{\lambda,\omega},U_aP_{\lambda,\omega}U_a^*) = \mathbb E\operatorname{tr}(P_{\lambda,\omega}-U_a P_{\lambda,\omega} U_a^*)^3 \]
and coincides, if $P_{\lambda,\omega}$ is a spectral projection satisfying the conditions of Proposition \ref{prop:cK}, with the value given by the Kubo-Chern formula in Definition \ref{KC}. 
\end{defi}
\begin{rem}
The index theoretic formulation implies that the Hall conductivity is integer-valued \emph{(up to the prefactor $(2\pi)^{-1}$)} under disorder, too. This follows of course also from the Kubo-Chern formula using the approach presented in \cite{BES}. 
\end{rem}
The index theoretic formulation of the Hall conductivity implies that the Hall conductivity is invariant, see Proposition \ref{QHEprop}, under mild disorder in the spectral gaps between closed disorder-broadened Landau bands:
\begin{proof}[Proof of Proposition \ref{QHEran}]
Consider a Fermi level $\mu$ between disorder-broadened Landau bands $\mathcal B_{n,\lambda}$ and $\mathcal B_{n+1,\lambda},$ i.e.\@ $\mu$ is in a spectral gap of $A^h(a_{\lambda,\omega}).$
We need to show that for Fermi projections $P_{\lambda,\omega}:=\indic_{(-\infty,\mu]}(A^h(a_{\lambda,\omega}))$ and $\lambda$ sufficiently close to zero, we have almost sure equality
\begin{equation}
\label{eq:Plambda}
 \operatorname{ind}(P_{\lambda,\omega},U_aP_{\lambda,\omega}U_a^*) = \operatorname{ind}(P_{0,\omega},U_aP_{0,\omega}U_a^*).
 \end{equation}
By the resolvent identity and holomorphic functional calculus we find for the difference
\[P_{\lambda,\omega}-P_{0,\omega} = \frac{\lambda}{2\pi i}  \oint_{(-\infty,\mu]} (A^h(a)-z)^{-1} V  (A^h(a_{\lambda,\omega})-z)^{-1} \ dz \]
which implies that $\lim_{\lambda \downarrow 0} P_{\lambda,\omega}x = P_{0,\omega}x$ by dominated convergence, which can be argued using the usual Combes-Thomas estimate for the pointwise bound.

Let $T_{\lambda,\omega} =P_{\lambda,\omega}-U_a P_{\lambda,\omega}U_a^*$ be the difference operator, we then find
\begin{equation}
\begin{split}
&\left\lvert \operatorname{ind}(P_{\lambda,\omega},U_aP_{\lambda,\omega}U_a^*) - \operatorname{ind}(P_{0,\omega},U_aP_{0,\omega}U_a^*) \right\rvert = \left\vert \operatorname{tr}(T_{\lambda,\omega}^3)-\operatorname{tr}(T_{0,\omega}^3)\right\vert \\
&\le  \left\vert \sum_{\vert \gamma \vert \le n} \operatorname{tr}_{\mathbb C^n}\left\langle \delta_{\gamma}, (T_{\lambda,\omega}^3-T_{0,\omega}^3) \delta_{\gamma} \right\rangle \right\vert + \left\vert \sum_{\vert \gamma \vert >n} \operatorname{tr}_{\mathbb C^n}\left\langle \delta_{\gamma}, (T_{\lambda,\omega}^3-T_{0,\omega}^3) \delta_{\gamma} \right\rangle \right\vert. 
\end{split}
\end{equation}
It suffices to argue that for $\lambda$ small, the difference of indices is less than one almost surely to show \eqref{eq:Plambda}. 
The first term on the right hand side is continuous in $\lambda$ by strong convergence and can therefore (for any fixed threshold $n$) be made arbitrarily small by taking $\lambda$ small enough.
Thus, by H\"older's inequality we find for the second term
\begin{equation}
\begin{split}
\sup_{\lambda \in (0,\lambda_0)} \left\vert \sum_{\vert \gamma \vert >n} \left\langle \delta_{\gamma}, T_{\lambda,\omega}^3 \delta_{\gamma} \right\rangle \right\vert \le \left\lVert T_{\lambda,\omega} \right\rVert_{\mathcal L^3}^2 \left\lVert T_{\lambda,\omega} \delta_{\vert \gamma\vert > n} \right\rVert_{\mathcal L^3}.
\end{split}
\end{equation}
We can then use the elementary identity 
\[\left\vert e^{-i \theta_{\alpha}(x)}-e^{-i\theta_{\alpha}(x+y)} \right\vert = \left\vert e^{-i \theta_{\alpha}(x)}-e^{-i\theta_{\alpha-y}(x)} \right\vert \le \operatorname{min} \left\{ 2, \frac{\vert y \vert}{\sqrt{\vert x-\alpha \vert \vert x+y -\alpha \vert}} \right\},\] see \cite[(14.24)]{AW}, to estimate \cite[Lemma $14.3$ and (14.27)]{AW} 
\begin{equation}
\begin{split}
\label{eq:bulk}
\mathbb E\left\lVert T_{\lambda,\omega} \delta_{\vert \gamma\vert > n} \right\rVert_{\mathcal L^3} 
&\lesssim \sum_{y \in \mathbb Z^2} \mathbb E\left( \sum_{\vert x \vert >n} \left\lvert T_{\lambda,\omega}[x+y,x] \right\vert^3 \right)^{1/3}  \\
&\lesssim \sum_{y \in \mathbb Z^2} \left(\sum_{\vert x \vert >n} \mathbb E \left\lvert P_{\lambda,\omega}[x+y,x]\right\vert^3 \left\lvert e^{-i \theta_{\alpha}(x+y)}-e^{-i\theta_{\alpha}(x)} \right\vert^3 \right)^{1/3} \\
&\lesssim \sum_{y \in \mathbb Z^2} \left( \mathbb E \left\lvert P_{\lambda,\omega}[y,0]\right\vert^3 \right)^{1/3}  \left(\sum_{\vert x \vert >n} \left\lvert e^{-i \theta_{\alpha}(x+y)}-e^{-i\theta_{\alpha}(x)} \right\vert^3 \right)^{1/3}<\infty.
\end{split}
\end{equation}
The standard Combes-Thomas estimate implies that \eqref{eq:series} is uniformly bounded for $\lambda \in (0,\lambda_0)$. This implies that the summand in \eqref{eq:bulk} is uniformly bounded and by the dominated convergence theorem, this expression goes to zero as $n \rightarrow \infty.$ 
\end{proof}

\section{The metal/insulator transition}
\subsection{Measures of transport}
\label{sec:mit}
For our discussion of metal/insulator transitions, we first recall the definition of transport coefficients stated in \cite{GK3}. Even though the results in that article are stated for non-magnetic Schr\"odinger operators, the results still apply to(discrete) magnetic Schr\"odinger operators as the authors state in the beginning of Section $4$ in \cite{GKS}.
Dynamical properties are studied using weighted norms
\[ M^h_{\lambda,\omega}(p,\zeta,t) = \left\lVert \langle x \rangle^{p/2} e^{-itH^h_{\lambda, \omega}} \zeta(H^h_{\lambda, \omega}) \delta_0 \right\rVert_{\mathcal L^2}^2  \]
where $\zeta \in C_{c,+}^{\infty}(\mathbb R)$ localizes to a fixed energy window. In particular, we say that at energies $E$, $H^h_{\lambda,\omega}$ exhibits Hilbert-Schmidt localization if there is an open interval $I\ni E$ such that for all $\zeta \in C_{c,+}^{\infty}(I)$ and all $p > 0$
\[ \mathbb E \left[ \sup_{t \in \mathbb R} M^h_{\lambda,\omega}(p, \zeta,t) \right]< \infty. \]
The union of all such energies comprises the set $\Sigma^{h,\text{loc}}_{\lambda}$. We also define expected time-C\'esaro averages
\[ M^h_{\lambda}(p,\zeta,T) = \frac{1}{T} \int_0^{\infty} \mathbb E \left(M^h_{\lambda,\omega}(p,\zeta,t)\right) e^{-t/T} \ dt. \]
The \emph{(lower) transport exponent} is defined by 
\[ \beta^h_{\lambda}(p,\zeta) = \liminf_{T \rightarrow \infty} \frac{\log_{+} M^h_{\lambda}(p,\zeta,T)}{p \log(T)} , \text{ for } p > 0, \zeta \in C_{c,+}^{\infty}(\mathbb R)\]
and from this one defines the \emph{$p$-th local transport exponent} 
\[\beta^h_{\lambda}(p,E) =\inf_{ I \ni E} \sup_{\zeta \in C_{c,+}^{\infty}(I)} \beta^h_{\lambda}(p,\zeta) \in [0,1]. \]
The local lower transport exponent is then defined as $\beta^h_{\lambda}(E):= \sup_{p>0}\beta^h_{\lambda}(p,E).$
The exponent $\beta^h_{\lambda}(E)$ is a measure of transport at energy $E.$
This coefficient allows us to define two complementary regions, the (relatively open) \emph{region of dynamical localization} or \emph{insulator region}
\begin{equation}
\label{eq:DL}
\Sigma_{\lambda}^{h,\text{DL}} = \left\{ E \in \mathbb R; \beta^h_{ \lambda}(E)=0 \right\}
\end{equation}
that coincides with $\Sigma^{h,\text{loc}}_{\lambda}$ \cite[Theorem 2.8]{GK3}, and the (relatively closed) \emph{region of dynamical delocalization} or \emph{metallic transport region} 
\begin{equation}    
\label{eq:DD}
\Sigma_{\lambda}^{h,\text{DD}} = \left\{ E \in \mathbb R; \beta^h_{ \lambda}(E)>0 \right\}.
\end{equation}
An energy $E$ at which the transport coefficient $\beta^h_{\lambda}$ jumps from zero to a non-zero value is called a \emph{mobility edge}.
\begin{rem}
 \cite[Theorem 2.10]{GK3} implies that in two dimensions, the random Schr\"odinger operator $H_{\lambda,\omega}^h$ has the property that for all $E \in \mathbb R$ for which the transport exponent is positive $\beta^h_{\lambda}(E)>0$, the coefficient satisfies already $\beta^h_{\lambda}(E)>1/4.$ 
\end{rem}
Fix $\varepsilon>0$ and let $K$ be the multiplication operator by $\langle x \rangle^{1+\varepsilon}$. The random measure of $H^h_{\lambda,\omega}$ is defined for Borel sets $B \subset \mathbb R$ by $\mu_{\lambda,\omega}(B):= \left\lVert K^{-1} \indic_{B}(H^h_{\lambda,\omega}) \right\rVert_{\mathcal L^2}^2$, is supported on the spectrum of $H^h_{\lambda,\omega}$, such that $\mu_{\lambda,\omega}(B)< \infty$ if $B\subset \Sigma(H^h_{\lambda,\omega})$ is bounded.

Whenever the multiscale analysis in \cite{GK6}, which applies to magnetic Schr\"odinger operators, as explained in the beginning of their Section $2$, applies to energies in the region of dynamical localization, this has a strong implication on the eigenfunctions that the authors call \emph{summable uniform decay of eigenfunction correlations} ({SUDEC}), see \cite[Cor.\@ $3$]{GK6}, which we recall in the following Definition:
\begin{defi}[\textbf{SUDEC}]
For a bounded interval $I$ with $\overline{I} \subset  \Sigma_{\lambda}^{h,\text{DL}}(H^h_{\lambda,\omega})$, we say that $H^h_{\lambda,\omega}$ exhibits \emph{{SUDEC}} in $I$ if the spectrum of $H^h_{\lambda,\omega}$ is a.s. pure point and for each eigenvalue $E_{n, \omega,\lambda} \in I$ there is an ONB $(\phi_{n,j,\lambda,\omega})_{j \in \left\{1,..., \nu_{n,\lambda,\omega} \right\}}$ of the finite-dimensional eigenspace $\operatorname{ker}\left(H^h_{\lambda,\omega}-E_{n, \omega,\lambda}\right)$ such that for any $\xi \in (0,1)$ there is $C_{I,\lambda,\omega,\xi}>0$ such that  
\begin{equation}
\left\lVert \phi_{n,i,\lambda,\omega}(x) \right\rVert \left\lVert \phi_{n,j,\lambda,\omega}(y) \right\rVert \le C_{I,\xi,\omega,\lambda} \sqrt{\alpha_{n,i,\lambda,\omega}} \sqrt{\alpha_{n,j,\lambda,\omega}} \langle x \rangle^{1+\varepsilon}\langle y \rangle^{1+\varepsilon} e^{-\vert x-y\vert^{\xi}}.
\end{equation}
Moreover, $\sum_{n \in \mathbb N, j \in \left\{1,2,...,\nu_{n,\lambda,\omega} \right\}} \alpha_{n,j, \lambda,\omega} = \mu_{\lambda,\omega}(I). $
\end{defi}
It follows from standard arguments that the operator $H^h_{\lambda,\omega}$, and equivalently $A^h(a_{\lambda,\omega})$ satisfy SUDEC in the regime of dynamical localization.
\subsection{Dynamical delocalization}
\label{sec:DLoc}
We now turn to the proof of Theorem \ref{Deloc} showing that between disjoint disorder-broadened Landau bands there exists a mobility edge.

We study covariant projections that satisfy the following condition: 
\begin{defi}[\textbf{P}]
\label{def:P}
A covariant projection on $\ell^2(\mathbb Z^2; \mathbb C^{n})$ is said to satisfy condition $\emph{{(P)}}$ if for constants $\xi \in (0,1)$, $k>0$, and $K_P< \infty$ the following bound holds
\[\Vert P[0,x]  \Vert=\Vert  \langle \delta_0, P \delta_x \rangle \Vert \le K_P \langle x \rangle^k e^{-\vert x \vert^{\xi}}. \] 
\end{defi}
Clearly, for covariant eigenprojections $P_{\lambda,\omega}:=\indic_{E_{n,\omega,\lambda}}(A^h(a_{\lambda,\omega}))$ on a single energy, ({SUDEC}) implies $({P})$ with $k=1+\varepsilon$ and $
K_P:= C_{I,\xi,\omega,\lambda} \sum_{i=1}^{\nu_{n,\lambda,\omega} }\alpha_{n,i, \lambda,\omega} .$

The index formulation of the Hall conductivity implies immediately by the cyclicity of the trace that if $P$ is a covariant finite-rank projection satisfying \eqref{eq:series} then
\begin{equation}
 \operatorname{ind}(P_{\lambda,\omega},U_aP_{\lambda,\omega}U_a^*) = \operatorname{tr} \left(P_{\lambda,\omega}-U_aP_{\lambda,\omega}U_a^*\right)=0.
 \end{equation}
Moreover, for two orthogonal covariant projections satisfying sufficient decay properties one finds that \cite[Sec.\@E Lem.\@$12$]{BES} for $\Theta$ as in Definition \ref{KC}
\begin{equation}
\label{eq:additivity}
 \Theta(P+Q) = \Theta(P)+\Theta(Q). 
 \end{equation}
\begin{lemm}
Let $P$ be a covariant projection satisfying condition \emph{{(P)}}. Then the quantity $\Theta(P)$ is finite and is bounded for any $\xi \in (0,1)$ by a finite constant $C_{\xi, \kappa}>0$
\[ \Vert \mathbb E \langle \delta_0, P[[P,x_1],[P,x_2]] \delta_0 \rangle \Vert \le K_P C_{\xi, \kappa}.\]
\end{lemm}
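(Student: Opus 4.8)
The plan is to turn $\Theta(P)=\widetilde{\operatorname{tr}}_{\mathbb Z^2}(P[[P,x_1],[P,x_2]])$ into an explicit lattice sum and to show that the stretched-exponential decay supplied by condition \emph{(P)} makes this sum absolutely convergent. First I would compute the kernels entering the double commutator. Since the position operators act diagonally, $x_j[y,z]=y_j\delta_{y,z}$, one has $[P,x_j][y,z]=(z_j-y_j)P[y,z]$. Iterating this and evaluating $P[[P,x_1],[P,x_2]]$ on the diagonal at the origin, the contributions proportional to the endpoint coordinates cancel and one is left with the discrete Chern--character integrand
\begin{equation*}
\langle \delta_0, P[[P,x_1],[P,x_2]]\delta_0\rangle=\sum_{w,z\in\mathbb Z^2}\sigma_{\text{symp}}(w,z)\,P[0,w]\,P[w,z]\,P[z,0],
\end{equation*}
where $\sigma_{\text{symp}}(w,z)=w_1z_2-w_2z_1$ is the symplectic area spanned by $w$ and $z$. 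Thus the whole statement reduces to the absolute summability of this triple product after taking $\mathbb E$.

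For the estimate I would first bound the weight: writing $\sigma_{\text{symp}}(w,z)=w_1(z_2-w_2)-w_2(z_1-w_1)$ gives $|\sigma_{\text{symp}}(w,z)|\le|w|\,|z-w|$, so that
\begin{equation*}
\Vert \mathbb E\langle \delta_0, P[[P,x_1],[P,x_2]]\delta_0\rangle\Vert\le\sum_{w,z\in\mathbb Z^2}|w|\,|z-w|\;\mathbb E(\Vert P[0,w]\Vert\,\Vert P[w,z]\Vert\,\Vert P[z,0]\Vert).
\end{equation*}
For the kernels I would invoke covariance: the relation $\tau_\gamma^hP_{\lambda,T_\gamma^\Omega\omega}=P_{\lambda,\omega}\tau_\gamma^h$ together with $|u^h(\gamma)|=1$ shows that $\Vert P[a,b]\Vert$ is equal in distribution to $\Vert P[0,b-a]\Vert$, so condition \emph{(P)} applies to each bond, giving decay $K_P\langle\cdot\rangle^{k}e^{-|\cdot|^{\xi}}$ in the separations $|w|$ and $|z-w|$ that carry the polynomial weight, while the remaining factor is estimated trivially by $\Vert P[z,0]\Vert\le\Vert P\Vert\le1$. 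Substituting $u=z-w$ then decouples the two directions into convergent one-dimensional sums of the form $\sum_{v}|v|\langle v\rangle^{k}e^{-|v|^{\xi}}$; since for any $\xi\in(0,1)$ the stretched exponential beats every polynomial, these converge and collapse into a finite constant $C_{\xi,\kappa}$ depending only on $\xi$ and the polynomial order $k$. This produces the bound $\Vert\mathbb E\langle\delta_0,P[[P,x_1],[P,x_2]]\delta_0\rangle\Vert\le K_P\,C_{\xi,\kappa}$ and, a fortiori, the finiteness of $\Theta(P)$.

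The main obstacle is to make these formal manipulations rigorous, since $x_1,x_2$ are unbounded and $P$ is only trace class per unit volume rather than trace class. I would first truncate the position operators to boxes $|x|\le R$, for which the commutators are bounded and the kernel identity of the first step holds exactly, and then let $R\to\infty$: condition \emph{(P)} guarantees both that the truncation error vanishes and, crucially, that the triple sum is dominated by the $R$-independent absolutely convergent majorant constructed above, so that dominated convergence simultaneously justifies interchanging $\mathbb E$ with the lattice sum and passing to the limit. A secondary point to record is that the same covariance which renders $\Vert P[a,b]\Vert$ stationary also makes the single diagonal matrix element at the origin representative of the regularised trace, so that the pointwise estimate indeed controls $\Theta(P)$ as it enters the Kubo--Chern formula of Definition \ref{KC}.
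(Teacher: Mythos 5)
Your argument is correct, but it follows a genuinely different route from the paper. You expand the cocycle explicitly into the standard triple lattice sum
\(\sum_{w,z}\sigma_{\text{symp}}(w,z)\,P[0,w]P[w,z]P[z,0]\),
bound the symplectic weight by \(|w|\,|z-w|\), and then sum using condition (P) on two of the three bonds while estimating the third by \(\Vert P\Vert\le 1\); this is the classical Chern-character computation (consistent with the summability hypothesis \eqref{eq:series} used elsewhere in the paper). The paper instead never expands the commutators: it rewrites the matrix element as the inner product \(\langle [[x_1,P],P]\delta_0,[x_2,P]\delta_0\rangle\), applies Cauchy--Schwarz in \(\ell^2\) and in \(\mathbb E\), and reduces everything to the second moments \(\mathbb E\Vert x_jP\delta_0\Vert^2=\sum_x|x_j|^2\,\mathbb E\Vert P[0,x]\Vert^2\), which condition (P) controls directly. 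What the paper's argument buys is that only kernels anchored at the origin, \(P[0,x]\), ever appear, so condition (P) applies verbatim; your route requires bounding the off-origin kernel \(P[w,z]\), and the covariance transfer \(\Vert P_\omega[w,z]\Vert=\Vert P_{T_w^\Omega\omega}[0,z-w]\Vert\) is only immediate if the constant \(K_P\) in (P) is uniform in \(\omega\) (which the paper's own proof implicitly assumes when it pulls \(K_P\) through \(\mathbb E\), but which fails literally for the SUDEC-derived \(K_P=C_{I,\xi,\omega,\lambda}\sum_i\alpha_{n,i,\lambda,\omega}\); in that case one would patch your argument with H\"older on third moments, exactly as in \eqref{eq:series}). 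Note also that both arguments actually produce a bound proportional to \(K_P^2\) rather than the stated \(K_P\) -- the paper's displayed chain ends in \(K_P^2C_{\xi,\kappa}^2\) as well -- but since the lemma is only used qualitatively for finiteness of \(\Theta(P)\), this is immaterial. Your closing remarks on truncating the unbounded position operators and on covariance making the diagonal entry at the origin representative of the regularized trace are sound and slightly more careful than the paper's presentation.
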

\begin{proof}
Condition ${(P)}$ implies the following bound
\begin{equation}
\begin{split}
\Vert \mathbb E \langle \delta_0, &P[[P,x_1],[P,x_2]] \delta_0 \rangle_{\ell^2}  \Vert_{\CC^n}  = \Vert \langle \mathbb E \langle[[x_1,P], P]\delta_0, [x_2,P] \delta_0 \rangle_{\ell^2} \Vert_{\CC^n} \\
 &\le   \sqrt{ \mathbb E \Vert [[x_1,P], P] \delta_0 \Vert_{\ell^2}^2} \sqrt{\mathbb E \Vert x_2 P \delta_0 \Vert_{\ell^2}^2} \lesssim   \sqrt{ \mathbb E \Vert x_1 P \delta_0 \Vert_{\ell^2}^2} \sqrt{\mathbb E \Vert x_2 P \delta_0 \Vert_{\ell^2}^2} \\ 
  &\lesssim    \mathbb E \Vert x_1 P \delta_0 \Vert_{\ell^2}^2 +  \mathbb E \Vert x_2 P \delta_0 \Vert_{\ell^2}^2  \lesssim \sum_{x \in \mathbb Z^2} \left\lVert x \right\rVert_{\CC^n}^2 \mathbb E \Vert \langle \delta_0, P \delta_x \rangle_{\ell^2} \Vert^2 \\
 & \lesssim K_P^2 \sum_{x \in \mathbb Z^2} \Vert x \Vert_{\CC^n}^{2(1+k)} e^{-2\Vert x \Vert^{\xi}} \lesssim K_P^2 C_{\xi,\kappa}^2.
 \end{split}
\end{equation}
\end{proof}

%\begin{prop}
%\label{prop:SDEC}
%Discrete random Schr\"odinger operators $H^h_{\lambda,\omega}$ satisfy the conditions of the multiscale analysis in \cite{GK}. In particular, they satisfy \emph{{(SUDEC)}} in regions of strong dynamical localization.
%\end{prop}
%\begin{proof}
%The Simon-Lieb inequality {(SLI)} follows for the discrete operators directly from the resolvent identity, see also \cite[Sec.\@ $5.3$]{K07} and the \emph{geometric resolvent identity} discussed there. The exponential decay inequality {(EDI)} is of similar flavor and straightforward in the discrete case, as discussed for the Anderson model in \cite[Proof of Lemm.\@ 3.1]{DK}. Since potentials at different vertices are independent, the independence at distance {(IAD)} assumption is clearly satisfied.  The average number of eigenvalues estimate {(NE)} and Wegner estimate {(W)} are similar to the non-magnetic Anderson model, see \cite[Sec.\@ $5.5$]{K07} for detailed discussions. The strong generalized eigenfunction expansion {(SGEE)} follows also immediately from the Combes-Thomas estimate and a short proof is stated in Lemma \ref{SGEE}.
%\end{proof}
We can now finish the proof of Theorem \ref{Deloc}:
\begin{proof}[Proof of Theorem \ref{Deloc}]
Let us assume that $H^h_{\lambda,\omega}$ would have only spectrum belonging to the region of dynamical localization. For an interval $I=[\lambda_1,\lambda_2]$ where $\lambda_1$ is in one spectral gap between disorder-broadened Landau bands and $\lambda_2$ in another such gap, it follows for $\mathcal E_{\lambda,\omega}$ the set of eigenvalues of $H^h_{\lambda,\omega}$ in $I$ and $\mathcal E_{\lambda,\omega} = \bigcup_{m \in \mathbb N} \mathcal M_m $ with $\mathcal M_m$ a subset of $\mathcal E_{\lambda,\omega}$ of cardinality $\operatorname{min}\left\{m,\operatorname{dim}\left( \operatorname{ran}(\indic_{I}(H^h_{\lambda,\omega})\right) \right\}$ 
\begin{equation}
\begin{split}
\Theta(\indic_I(A^h(a_{\lambda,\omega}))) = \underbrace{\sum_{E_{n,\lambda,\omega} \in \mathcal M_m} \Theta (\indic_{E_{n,\lambda,\omega}}(A^h(a_{\lambda,\omega}))) }_{=0}+ \Theta(\indic_{ \mathcal E_{\lambda,\omega} \backslash \mathcal M_m}(A^h(a_{\lambda,\omega})))
\end{split}
\end{equation}
which vanishes by letting $m \rightarrow \infty$ due to ({SUDEC}) and Definition \ref{def:P}.
Hence, the Hall conductivity must not jump for operators $H^h_{\lambda,\omega}$ which contradicts the findings of Proposition \ref{QHEran}.
\end{proof}
\begin{rem}
To prove delocalization, the type of disorder was in so far irrelevant, as we only assumed the disorder to be small. Other discrete models to which this argument applies are discussed in \cite[Remark $3.13$]{GK}.
\end{rem}

\section{Honeycomb structures with flux close to a rational}
Hitherto, we studied the case of small magnetic flux $h>0$ on both the square and hexagonal lattice. We will now continue by studying small magnetic perturbations of rational magnetic fluxes $2\pi p/q$ for the hexagonal lattice, see \cite{HS0} for a similar analysis in case of Harper's model.

We start by showing the existence of Dirac cones for rational flux $\phi=2\pi p/q$ for $H_{\tinyvarhexagon}^{\phi}$ at energy level $0.$ In the sequel, we write $\phi$ for the magnetic flux and use the variable $h$ to denote small perturbations thereof.
\subsection{Dirac points}
\label{sec:Diracpoints}
For magnetic flux $\phi=2\pi p/q$, $H^{\phi}_{\tinyvarhexagon}$ is a periodic operator. 
Let $ k=(k_1,k_2)\in \mathbb{T}_2^*$, and let $H^{\phi}_{\tinyvarhexagon}( k)$ be the operator $H^{\phi}_{\tinyvarhexagon}$ on $\ell^2(\Lambda)$ subject to the pseudo-periodic condition: 
$$z(\gamma+q \vec{b}_l,  r_j)=e^{ik_l} z(\gamma,  r_j),\ \ j,l=1,2$$ where $\{\vec{b}_1, \vec{b}_2\}$ is the basis vector of $\Lambda$ and $\{r_0,r_1\}$ are the vertices in the fundamental domain $W_{\Lambda}$.

We say that an energy $E$ corresponding to some quasi-momentum $\tilde{ k}$ in the dispersion surface of $H^{\phi}_{\tinyvarhexagon}$ is a \emph{Dirac point}, if in a neighbourhood of such quasi-momentum, for some positive $c>0$, there are two distinct branches of eigenvalues $F_{\pm}(H_{\tinyvarhexagon}^{\phi}({ k}))$ such that
\begin{equation}
\label{def:Diracpoints}
\begin{split}
&F_{\pm}(H_{\tinyvarhexagon}^{\phi}(\tilde{k}))=E \ \ \text{ and }\\
&F_{\pm} (H^{\phi}_{\tinyvarhexagon}( k))- E= \pm c \vert { k}-\tilde{ k} \vert + \mathcal O( \vert { k-\tilde{k}} \vert^2). 
\end{split}
\end{equation}
Next we will present the proof of Theorem \ref{thm:Dirac}.

\begin{figure}[H]
 \includegraphics[height=6cm, width=13cm]{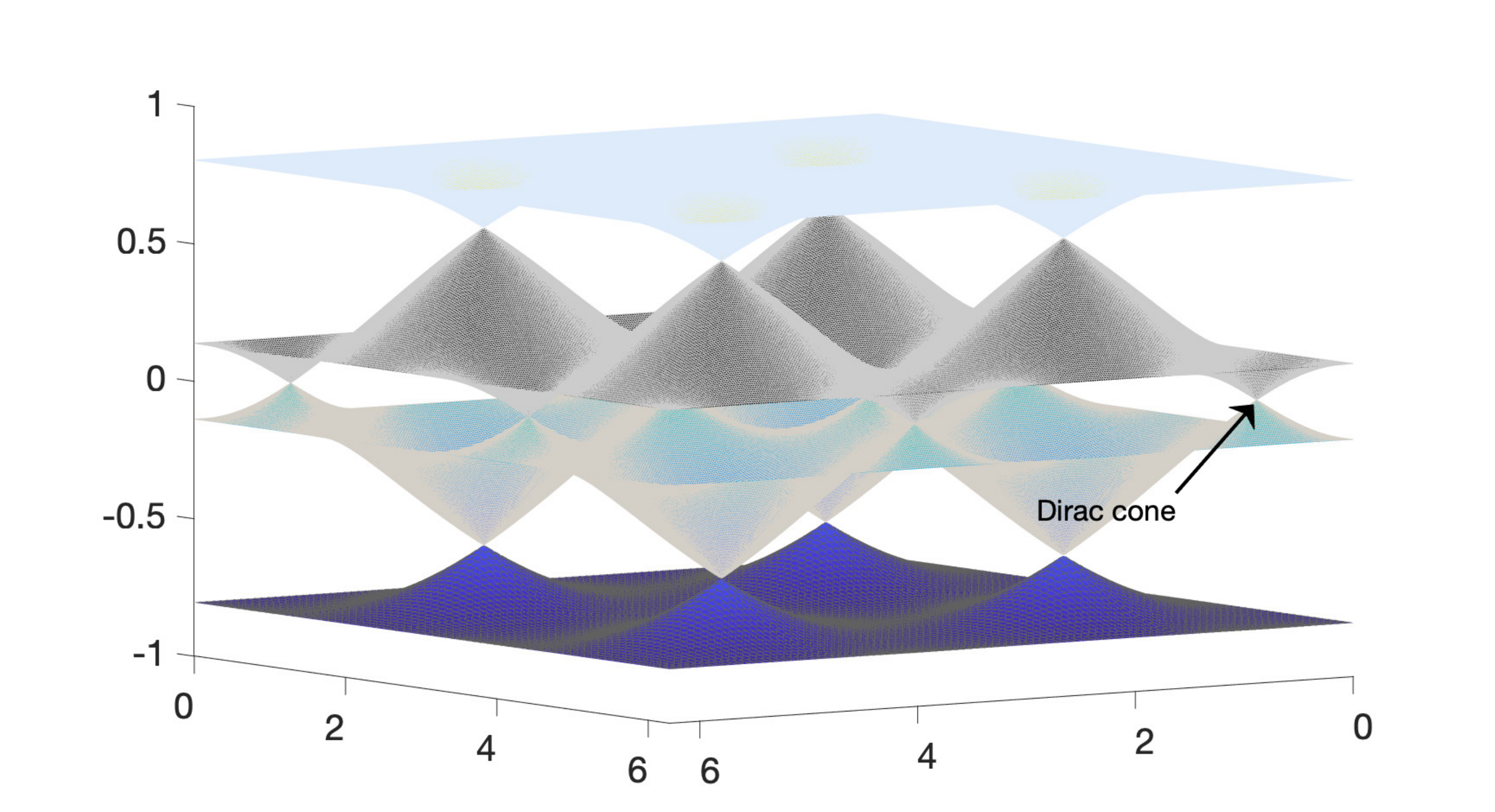}
 \caption{Dispersion surface of $H_{\varhexagon}^{\phi}$. The Dirac cones at energy level zero persist for magnetic flux $\phi=\pi$.}
 \label{fig:touching}
\end{figure}

\subsection*{Proof of Theorem \ref{thm:Dirac}}
The proof is built on some results of \cite{HKL16}. Recall $H^{\phi}_{\tinyvarhexagon}$ is a tight-binding Schr\"odinger operator with flux ${\phi}$ on the hexagonal lattice, 
acting on $\ell^2(\Z^2, \C^2)$.

The Floquet matrix of $H^{\phi}_{\tinyvarhexagon}(k)$ is 
\begin{align}\label{def:MG}
M_{\tinyvarhexagon}( k)=\frac{1}{3}\left(\begin{matrix} 0\ \ \ \ \ \ \ & I_q+e^{ik_1}J_{p,q}+e^{ik_2}K_q\\ I_q+e^{-ik_1} J_{p,q}^*+e^{-ik_2}K_q^*   &0\end{matrix}\right)=:\left(\begin{matrix} 0\ \ &\mathcal{A}\\ \mathcal{A}^* &0\end{matrix}\right),
\end{align}
where $J_{p,q}$ and $K_q$ are $q\times q$ matrices, which are defined as 
\begin{align}\label{def:Jpq}
J_{p,q}=\mathrm{diag}\left(\{e^{i(j-1)\phi}\}_{j=1}^q\right),
\end{align}
and
\begin{align}\label{def:Kq}
(K_q)_{jk}=
\begin{cases}
1\ \ \text{if}\ \ k\equiv j+1 (\mathrm{mod}\ q)\\
0\ \ \text{otherwise.}
\end{cases}
\end{align}
The solutions of the characteristic equation $\det(M_{\tinyvarhexagon}( k)-\lambda)=0$ are the Floquet eigenvalues of $H^{\phi}_{\tinyvarhexagon}( k)$, which we label in increasing order:
$$F_1( k)\leq F_2( k)\leq \cdots \leq F_{2q}( k).$$ 
Take $B_j:=\cup_{ k\in \mathbb{T}_2^*}F_j( k)$, $1\leq j\leq 2q$, to be the $j$-th spectral band of $H^{\phi}_{\tinyvarhexagon}$.
The following was shown in \cite{HKL16}.
\begin{prop}\label{prop:Bk_MG}
We have
\begin{itemize}
\item $\{B_j\}_{j=1}^{2q}$ are non-overlapping. 
\item $B_{q}\cap B_{q+1}=\{0\}$.
\end{itemize}
\end{prop}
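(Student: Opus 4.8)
The plan is to exploit the chiral (off-diagonal) block structure of $M_{\tinyvarhexagon}(k)$ in \eqref{def:MG} and reduce everything to the singular values of the block $\mathcal A=\mathcal A(k)$. Since $M_{\tinyvarhexagon}(k)=\tfrac13\left(\begin{smallmatrix}0&\mathcal A\\ \mathcal A^*&0\end{smallmatrix}\right)$, for $E\neq0$ one has $\det(M_{\tinyvarhexagon}(k)-E)=\det\big(E^2 I_q-\tfrac19\mathcal A\mathcal A^*\big)$, so the $2q$ Floquet eigenvalues are exactly $\pm\tfrac13\sigma_1(k),\dots,\pm\tfrac13\sigma_q(k)$, where $\sigma_1(k)\ge\cdots\ge\sigma_q(k)\ge0$ are the singular values of $\mathcal A(k)$. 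In particular $F_{2q+1-j}=-F_j$, the band picture is symmetric about $0$, and $B_q\subset(-\infty,0]$, $B_{q+1}\subset[0,\infty)$. This already gives the inclusion $B_q\cap B_{q+1}\subseteq\{0\}$ and reduces both assertions to statements about the ordered singular-value bands $\Sigma_j:=\{\sigma_j(k):k\in\mathbb T_2^*\}$.

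To upgrade the inclusion to the equality $B_q\cap B_{q+1}=\{0\}$, I would show that $0$ is attained, i.e.\ that $\mathcal A(\tilde k)$ is singular for some $\tilde k$. Writing $\mathcal A=D+e^{ik_2}K_q$ with $D=\operatorname{diag}\big(1+e^{ik_1}e^{i(j-1)\phi}\big)_{j=1}^q$, the fact that $K_q$ is a single cyclic shift means only the identity and the full $q$-cycle contribute to the determinant, giving $\det(D+cK_q)=\prod_j D_{jj}+(-1)^{q+1}c^{\,q}$. Using that $\{e^{i(j-1)\phi}\}_{j=1}^q$ runs over all $q$-th roots of unity when $p/q$ is in lowest terms, so that $\prod_{j}(1+e^{ik_1}e^{i(j-1)\phi})=1-(-1)^qe^{iqk_1}$, this yields
\begin{equation*}
\det\mathcal A(k)=1-(-1)^q\big(e^{iqk_1}+e^{iqk_2}\big).
\end{equation*}
This vanishes as soon as $e^{iqk_1}+e^{iqk_2}=(-1)^q$, which is solvable by taking the conjugate pair $e^{iqk_2}=\overline{e^{iqk_1}}$ with $\cos(qk_1)=(-1)^q/2$ (so $qk_1=\pm\pi/3$ if $q$ is even, $qk_1=\pm2\pi/3$ if $q$ is odd). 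Hence $\sigma_q(\tilde k)=0$, so $F_q(\tilde k)=F_{q+1}(\tilde k)=0$ and $0\in B_q\cap B_{q+1}$, giving equality.

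For the non-overlapping of $\{B_j\}_{j=1}^{2q}$ it suffices, by the $\pm$ symmetry, to show the $q$ bands $\Sigma_j$ of the Hermitian matrix $\mathcal A(k)\mathcal A(k)^*$ do not overlap. The structural input I would establish is a Chambers-type relation
\begin{equation*}
\det\big(\mu I_q-\mathcal A(k)\mathcal A(k)^*\big)=R(\mu)-g(k),
\end{equation*}
with $R$ a fixed monic polynomial of degree $q$ and $g$ a bounded trigonometric polynomial. To get it, note that $\mathcal A\mathcal A^*$ is a cyclic (period-$q$) Jacobi matrix with diagonal $3+2\cos(k_1+(j-1)\phi)$, off-diagonal moduli $2|\cos(\tfrac12(k_1+j\phi))|$, and gauge-invariant loop phase $\arg\prod_j t_j=\arg\big(e^{iqk_2}(1-(-1)^qe^{-iqk_1})\big)$. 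Conjugation by $K_q$ shows the characteristic polynomial is invariant under $k_1\mapsto k_1+\phi$; with $2\pi$-periodicity and $\gcd(p,q)=1$ this forces $2\pi/q$-periodicity in each $k_\ell$, so only the top harmonics $\cos(qk_\ell)$ survive, and a power-counting argument confines all $k$-dependence to the $\mu^0$-coefficient, producing the displayed form with $g(\mathbb T_2^*)=[g_-,g_+]$. Consequently $\bigcup_k\operatorname{spec}(\mathcal A\mathcal A^*)=R^{-1}([g_-,g_+])$, and non-overlapping is equivalent to this preimage having exactly $q$ connected components.

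The main obstacle is precisely this last equivalence. Real-rootedness of $R-g$ for each $g\in[g_-,g_+]$ is automatic (every fibre matrix is Hermitian), but it is \emph{not} sufficient: an interior critical value of $R$ can merge two would-be bands into one interval. What is genuinely needed is the Chebyshev-type oscillation of the Chambers polynomial $R$ — that its successive critical values leave the open strip $(g_-,g_+)$ in alternating fashion, so $R$ crosses the strip exactly $q$ times and cuts out $q$ separated bands. Establishing this monotonicity/critical-value structure is the technical heart of the statement, and is where I would rely on the analysis of \cite{HKL16}; the single central contact $B_q\cap B_{q+1}=\{0\}$ computed above is exactly the one admissible band touching, consistent with this picture.
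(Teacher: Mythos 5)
The paper does not actually prove this proposition: it is imported verbatim from \cite{HKL16} (``The following was shown in...''). Measured against that, your proposal does strictly more work. Your treatment of the second bullet is complete and correct: chiral symmetry gives $B_q\subset(-\infty,0]$ and $B_{q+1}\subset[0,\infty)$, and your computation $\det\mathcal A(k)=1-(-1)^q\left(e^{iqk_1}+e^{iqk_2}\right)$ --- via the two-permutation expansion of $\det(D+cK_q)$ and the fact that $\{e^{i(j-1)\phi}\}_{j=1}^q$ exhausts the $q$-th roots of unity when $\gcd(p,q)=1$ --- is exactly right; the touching points you find agree with the $\tilde k$ of \eqref{eq:tildek}, which the paper only extracts later from the Chambers formula together with $f_{p,q}(-3)=\max_k g_q(k)=3$. (The paper's logic runs in the opposite direction: it uses the proposition to deduce $-3\in\cup_j\cup_k E_j(k)$ and then locates $\tilde k$; you locate $\tilde k$ directly and deduce the band contact.) For the first bullet, your reduction to a Chambers-type formula for $\mathcal A\mathcal A^*$ is the right mechanism, with one small imprecision: the surviving harmonics are not only $\cos qk_1$ and $\cos qk_2$ but also the cross term $\cos q(k_1-k_2)$ coming from the loop phase, as in \eqref{eq:Chambers_MT}. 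The genuine remaining issue is the one you flag yourself: real-rootedness of $R-g$ does not prevent band overlap, and the separation of the $q$ bands requires that the interior critical values of the Chambers polynomial avoid the open strip $(g_-,g_+)$. You defer that to \cite{HKL16} --- which is precisely the citation the paper itself uses for the entire proposition --- so your write-up is not weaker than the paper's, but it is not self-contained on the one point that is actually hard.
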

The set $S_j:=\{( k, F_j( k)): k\in \mathbb{T}_2^*\}$ is called the $j$-th dispersion surface.

Taking the square of $M_{\tinyvarhexagon}( k)$, we arrive at 
\begin{align}\label{eq:MG^2}
M_{\tinyvarhexagon}^2( k)=\left(\begin{matrix} \mathcal{A}\mathcal{A}^*\ &0\\ 0 & \mathcal{A}^*\mathcal{A}\end{matrix}\right)
=\frac{1}{9}\left(\begin{matrix} 3I_q+M_T( k)\ &0\\ 0 & 3I_q+\widehat{M}_T( k)\end{matrix}\right),
\end{align}
where 
\begin{align}\label{def:MT}
\widehat{M}_T( k)=e^{ik_1}J_{p,q}+e^{-ik_1}J_{p,q}^*+&e^{ik_2}K_q+e^{-ik_2}K_q^* \notag\\
+&e^{i(k_1-k_2)}K_q^* J_{p,q}+e^{-i(k_1-k_2)}J_{p,q}^*K_q,
\end{align}
and for $M_T( k)$ one just exchanges $J_{p,q}$ and $K_q$.
Furthermore, $M_T( k)$ and $\widehat{M}_T( k)$ have the same non-zero eigenvalues.
Let us denote the eigenvalues of $M_T( k)$ by $\{E_j( k)\}_{j=1}^p$, where each $E_j$ is an analytic function in $ k$, note that we do not arrange them in increasing order here.
Clearly we have
\begin{align}\label{eq:det=prod}
\det(M_T( k)-\lambda)=\prod_{j=1}^q (E_j( k)-\lambda).
\end{align}
By \eqref{eq:MG^2}, $M_T(k)+3I_q$ is positive semidefinite, hence $E_j( k)\geq -3$ for $1\leq j\leq q$, and the following holds:
\begin{align}\label{eq:F=E}
\{F_m( k)\}_{m=q+1}^{2q}=\left\lbrace  \frac{1}{3} \sqrt{E_j( k)+3}\right\rbrace_{j=1}^q\ \ \text{and}\ \ 
\{F_m( k)\}_{m=1}^{q}=\left\lbrace - \frac{1}{3}\sqrt{E_j( k)+3}\right\rbrace_{j=1}^q.
\end{align}
By Proposition \ref{prop:Bk_MG}, one concludes that $-3\in \cup_{j=1}^q \cup_{ k\in \mathbb{T}_2^*} E_j( k)$.
Without loss of generality, let 
\begin{align}\label{def:tilde_bm_theta}
E_1(\tilde{ k})=-3.
\end{align}
Since the bands are non-overlapping, $E_1(\tilde{ k})$ must be a single eigenvalue, hence for $2\leq j\leq q$, we have $E_j(\tilde{ k})>-3.$
Now, since $-3$ is the minimal value of $E_1$, we have
\begin{align}\label{eq:partial_E1}
\frac{\partial E_1}{\partial k_m} (\tilde{ k})=0 \ \ \text{ for } m=1,2.
\end{align} 

The following Chambers formula was derived in \cite{HKL16}, see similar formulas in \cite{AEG}.
\begin{prop}
We have
\begin{align}\label{eq:Chambers_MT}
\det(M_T( k)-\lambda)=f_{p,q}(\lambda)+2(-1)^{q+1} (\cos qk_1+\cos qk_2+ (-1)^{q+1}\cos q(k_1-k_2)),
\end{align}
where $f_{p,q}(\lambda)$ is a polynomial in $\lambda$ (independent of $ k$) with leading coefficient $(-1)^q$.
\end{prop}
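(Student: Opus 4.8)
\subsection*{Proof proposal}

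The plan is to prove the Chambers formula by exploiting the clock--shift (magnetic) covariance of $M_T(k)$ together with a degree count, thereby reducing the whole computation to a handful of extreme Fourier coefficients. Throughout write $\zeta := e^{i\phi} = e^{2\pi i p/q}$, which is a primitive $q$-th root of unity since $\gcd(p,q)=1$. The two building blocks obey the finite Weyl relation $K_q J_{p,q} = \zeta\, J_{p,q} K_q$, from which one reads off the conjugation rules $K_q J_{p,q} K_q^* = \zeta J_{p,q}$ and $J_{p,q} K_q J_{p,q}^* = \zeta^{-1} K_q$.

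First I would record two gauge covariances of $M_T(k)$. Conjugating $M_T(k)$ by the shift $K_q$ leaves the pure $K_q$-terms fixed, multiplies the diagonal $J_{p,q}$-term by $\zeta=e^{i\phi}$, and multiplies the mixed $J_{p,q}^*K_q$-term by $e^{-i\phi}$; tracking all six terms shows the net effect is exactly $k_2 \mapsto k_2+\phi$. Conjugating by the clock $J_{p,q}$ implements in the same way $k_1 \mapsto k_1-\phi$. Since conjugation preserves the characteristic polynomial,
\[ \det(M_T(k_1,k_2)-\lambda) = \det(M_T(k_1+\phi,k_2)-\lambda) = \det(M_T(k_1,k_2+\phi)-\lambda), \]
and because $\phi=2\pi p/q$ with $\gcd(p,q)=1$ generates $\tfrac{2\pi}{q}\ZZ$ modulo $2\pi$, the determinant is separately invariant under $k_1\mapsto k_1+\tfrac{2\pi}{q}$ and $k_2\mapsto k_2+\tfrac{2\pi}{q}$.

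Next, each entry of $M_T(k)-\lambda$ is a Laurent polynomial in $e^{ik_1},e^{ik_2}$ in which each variable occurs only to the powers $-1,0,1$; the Leibniz expansion of the $q\times q$ determinant therefore has the form $\det(M_T(k)-\lambda)=\sum_{m,n}c_{mn}(\lambda)\,e^{i(mk_1+nk_2)}$ with $|m|,|n|\le q$. The two periodicities force $c_{mn}=0$ unless $q\mid m$ and $q\mid n$, so only $m,n\in\{-q,0,q\}$ survive, leaving nine candidate coefficients. These I would evaluate directly from the Leibniz sum: a monomial of extreme degree $e^{\pm iqk_1}$ can arise only from the forward/backward full $q$-cycles $\sigma(j)=j\pm 1\pmod q$, while the extreme $e^{\pm iqk_2}$ contributions come from the identity (the product of the diagonal) together with a single full cycle. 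Summing the relevant products and using the root-of-unity identity $\prod_{j=0}^{q-1}(1+z\zeta^j)=1-(-z)^q$ produces the coefficients, and $f_{p,q}(\lambda):=c_{00}(\lambda)$ absorbs everything independent of $k$. Hermiticity of $M_T(k)$ makes the determinant real for real $\lambda$, whence $c_{-m,-n}=\overline{c_{mn}}$, halving the work; the corner coefficients $c_{\pm q,\pm q}$ vanish because a forward cycle never produces a positive power of $e^{ik_2}$. The leading-coefficient claim is immediate, since the $\lambda^q$-term of $\det(M_T(k)-\lambda)$ equals $(-1)^q\lambda^q$ for every $k$ and hence lies in $f_{p,q}$.

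The main obstacle is the sign bookkeeping in this last step. The $q$-cycle carries the Leibniz sign $(-1)^{q+1}$, and the products of the clock phases generate factors $\zeta^{q(q-1)/2}=e^{i\pi p(q-1)}=(-1)^{p(q-1)}$; the formula collapses to the stated clean form only after observing that $\gcd(p,q)=1$ forces $(p-1)(q-1)$ to be even, so that $(-1)^{p(q-1)}=(-1)^{q+1}$. Checking that the surviving phase products assemble precisely into $2(-1)^{q+1}(\cos qk_1+\cos qk_2)+2\cos q(k_1-k_2)$ is the delicate part of the argument; crucially, the periodicity step guarantees a priori that no intermediate harmonic can appear, so only the extreme coefficients must be pinned down.
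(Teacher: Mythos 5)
Your argument is correct. Note that the paper does not actually prove this proposition; it is imported verbatim from \cite{HKL16}, so there is no in-text proof to compare against. Your route (magnetic covariance of $M_T$ under conjugation by $K_q$ and $J_{p,q}$, hence $\tfrac{2\pi}{q}$-periodicity of the characteristic polynomial in each $k_i$, hence Fourier support in $m,n\in\{-q,0,q\}$, followed by direct Leibniz evaluation of the surviving extreme coefficients) is the standard proof of Chambers-type formulas, and every step checks out: $K_qJ_{p,q}=e^{i\phi}J_{p,q}K_q$ does give $K_qM_T(k)K_q^*=M_T(k_1,k_2+\phi)$ and $J_{p,q}M_T(k)J_{p,q}^*=M_T(k_1-\phi,k_2)$; the coefficients of $e^{\pm iqk_1}$, $e^{\pm iqk_2}$ come out as $(-1)^{q+1}$ (using the $q$-cycle sign $(-1)^{q-1}$ and, for the $k_2$-terms, $\zeta^{q(q-1)/2}=(-1)^{p(q-1)}=(-1)^{q+1}$ when $\gcd(p,q)=1$), those of $e^{\pm iq(k_1-k_2)}$ equal $1$, and $c_{\pm q,\pm q}=0$; the $\lambda^q$-term sits entirely in $c_{00}=f_{p,q}$ with coefficient $(-1)^q$. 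Two cosmetic remarks only: for the $e^{\pm iqk_2}$ coefficient it is \emph{only} the identity permutation that contributes (your phrase ``together with a single full cycle'' is misleading there, though one should note that the $e^{-i(k_1-k_2)}K_q^*J_{p,q}$ entries also carry $k_2$-exponent $+1$ and must be excluded via the $\sum m_j=0$ constraint), and the product identity $\prod_j(1+z\zeta^j)=1-(-z)^q$ is not actually needed for the six surviving off-center coefficients --- the only products that arise are $\zeta^{\pm q(q-1)/2}$.
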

Clearly, this proposition yields that 
\begin{equation}
\begin{split}
&\det(M_T(k_1,k_2)-\lambda)=\det(M_T(k_1+\frac{2\pi}{q},k_2)-\lambda)=\det(M_T(k_1,k_2+\frac{2\pi}{q})-\lambda),\ \ \text{ and } \nonumber\\
&\det(M_T(k_1,k_2)-\lambda)=\det(M_T(-k_1,-k_2)-\lambda).
\end{split}
\end{equation}
Hence, we can restrict our attention to 
$$(k_1,k_2)\in \left[0,\frac{\pi}{q}\right)\times \left[-\frac{\pi}{q},\frac{\pi}{q}\right).$$

In the following, we denote 
\begin{align}\label{def:g_q}
2(-1)^q (\cos qk_1+\cos qk_2+ (-1)^{q+1}\cos q(k_1-k_2)):=g_q( k)
\end{align}
for simplicity.
A direct consequence of Chambers' formula \eqref{eq:Chambers_MT} is that
\begin{align}
\cup_{ k\in \mathbb{T}_2^*} \Sigma(M_T( k))=\{\lambda: \min_{ k\in \mathbb{T}_2^*} g_q( k)\leq f_{p,q}(\lambda)\leq \max_{ k\in \mathbb{T}_2^*} g_q( k)\}.
\end{align}
Use the fact that the energy $-3$ is the bottom of the spectrum $\cup_{ k\in \mathbb{T}_2^*} \Sigma(M_T( k))$, we have 
\begin{align}\label{eq:f-3=max}
f_{p,q}(-3)=\max_{ k\in \mathbb{T}_2^*} g_q( k).
\end{align}
Simple computations show that
\begin{align}\label{eq:max=3/2}
\max_{ k\in \mathbb{T}_2^*} g_q( k)=3.
\end{align}
Furthermore, for even $q$, the maximum is attained at 
\begin{align}\label{eq:even_max}
q k\in \{(\pi/3, -\pi/3), (-\pi/3,\pi/3)\}+2\pi \Z^2,
\end{align}
and for odd $q$, the maximum is attained at
\begin{align}\label{eq:odd_max}
q k\in \{(2\pi/3, -2\pi/3), (-2\pi/3, 2\pi/3)\}+2\pi \Z^2.
\end{align}

Plugging $ k=\tilde{ k}$ and $\lambda=-3$ into \eqref{eq:Chambers_MT}, using \eqref{eq:det=prod} and the fact that $E_1(\tilde{ k})=-3$, we have
\begin{align}
0=\prod_{j=1}^q (E_j(\tilde{ k})+3)=\det(M_T(\tilde{ k})+3)=f_{p,q}(-3)-g_q(\tilde{ k}).
\end{align}
Hence we have
\begin{equation}\label{eq:tildek}
\tilde{k}=\left(\frac{\pi}{3q},-\frac{\pi}{3q}\right) \text{ for even } q, \text{ and } \tilde{k}=\left(\frac{2\pi}{3q},-\frac{2\pi}{3q}\right) 
\text{ for odd } q.
\end{equation}

Differentiating \eqref{eq:det=prod} w.r.t. $k_j$, $j=1,2$, and taking \eqref{eq:Chambers_MT} into account, we have
\begin{align}\label{eq:MT_partial_theta}
\begin{cases}
2q (-1)^{q+1} (-\sin qk_1+(-1)^q \sin q(k_1-k_2))=
\sum_{m=1}^q \frac{\partial E_m}{\partial k_1}( k) \prod_{\substack{j=1\\ j\neq m}}^q (E_j( k)-\lambda)  \\
\\
2q (-1)^{q+1} (-\sin qk_2-(-1)^q \sin q(k_1-k_2))=
\sum_{m=1}^q \frac{\partial E_m}{\partial k_2}( k) \prod_{\substack{j=1\\ j\neq m}}^q (E_j( k)-\lambda)
\end{cases}
\end{align}
Differentiating \eqref{eq:MT_partial_theta} again w.r.t. $k_j$, $j=1,2$, we have
\begin{align}\label{eq:MT_Hessian_theta}
\begin{cases}
2q^2 (-1)^{q+1} (-\cos qk_1+(-1)^q \cos q(k_1-k_2))=\sum_{\substack{m, \ell=1\\ m\neq \ell}}^q 
\frac{\partial E_m}{\partial k_1}( k) \frac{\partial E_{\ell}}{\partial k_1}( k)
\prod_{\substack{j=1\\ j\neq m,\ell}}^q 
(E_j( k)-\lambda)\\
\qquad\qquad\qquad\qquad\qquad\qquad\qquad\qquad+\sum_{m=1}^q \frac{\partial^2 E_m}{\partial k_1^2}( k) \prod_{\substack{j=1\\ j\neq m}}^q (E_j( k)-\lambda) \\
\\
2q^2 \cos q(k_1-k_2)=\sum_{\substack{m, \ell=1\\ m\neq \ell}}^q 
\frac{\partial E_m}{\partial k_1}( k) \frac{\partial E_{\ell}}{\partial k_2}( k)
\prod_{\substack{j=1\\ j\neq m,\ell}}^q 
(E_j( k)-\lambda)\\
\qquad\qquad\qquad\qquad\qquad\qquad\qquad\qquad+\sum_{m=1}^q \frac{\partial^2 E_m}{\partial k_1 \partial k_2}( k) \prod_{\substack{j=1\\ j\neq m}}^q (E_j( k)-\lambda)\\
\\
2q^2 (-1)^{q+1} (-\cos qk_2+(-1)^q \cos q(k_1-k_2))=\sum_{\substack{m, \ell=1\\ m\neq \ell}}^q 
\frac{\partial E_m}{\partial k_2}( k) \frac{\partial E_{\ell}}{\partial k_2}( k)
\prod_{\substack{j=1\\ j\neq m,\ell}}^q 
(E_j( k)-\lambda)\\
\qquad\qquad\qquad\qquad\qquad\qquad\qquad\qquad+\sum_{m=1}^q \frac{\partial^2 E_m}{\partial k_2^2}( k) \prod_{\substack{j=1\\ j\neq m}}^q (E_j( k)-\lambda)
\end{cases}
\end{align}
We plug in $ k=\tilde{ k}$ and $\lambda=-3$. Using \eqref{def:tilde_bm_theta} and \eqref{eq:partial_E1}, we have
\begin{align}\label{eq:MT_Hessian_theta_at_-3}
\begin{cases}
2q^2 (-1)^{q+1} (-\cos q\tilde{k}_1+(-1)^q \cos q(\tilde{k}_1-\tilde{k}_2))=\frac{\partial^2 E_1}{\partial k_1^2}(\tilde{ k}) \prod_{j=2}^q (E_j(\tilde{ k})+3) \\
\\
2q^2 \cos q(\tilde{k}_1-\tilde{k}_2)=\frac{\partial^2 E_1}{\partial k_1 \partial k_2}(\tilde{ k}) \prod_{j=2}^q 
(E_j(\tilde{ k})+3)\\
\\
2q^2 (-1)^{q+1} (-\cos q\tilde{k}_2+(-1)^q \cos q(\tilde{k}_1-\tilde{k}_2))=\frac{\partial^2 E_1}{\partial k_2^2}(\tilde{ k}) \prod_{j=2}^q (E_j(\tilde{ k})+3)
\end{cases}
\end{align}
Hence the Hessian matrix
\begin{equation}
\begin{split}
&D^2_{k_1,k_2} E_1(\tilde{ k}) \\
&= \frac{2 q^2 (-1)^q}{\prod_{j=2}^q (E_j(\tilde{ k})+3)} \left(
\begin{matrix}
\cos q\tilde{k}_1-(-1)^q \cos q(\tilde{k}_1-\tilde{k}_2)\ \ &(-1)^q\cos q(\tilde{k}_1-\tilde{k}_2)\\
(-1)^q\cos q(\tilde{k}_1-\tilde{k}_2)   &\cos q\tilde{k}_2-(-1)^q \cos q(\tilde{k}_1-\tilde{k}_2)
\end{matrix}
\right)
 \end{split}
 \end{equation}
Plugging in the values of $\tilde{k}$, see \eqref{eq:tildek}, we see that the Hessian matrix for either case is the same: 
\begin{equation} 
\label{eq:Hessian}
D^2_{k_1,k_2}E_1(\tilde{ k})=\frac{2q^2}{\prod_{j=2}^q (E_j(\tilde{ k})+3)}
\left(
\begin{matrix}
1\ \ \ &-\frac{1}{2}\\
-\frac{1}{2}   &1
\end{matrix}
\right),
\end{equation}
which is a positive definite matrix. By doing symplectic change of variables
\begin{equation}
\begin{split}
 &y(k)= a \left(k_1+k_2\right), \ \eta(k) = b\left(k_2-k_1 + \frac{4\pi}{3q} \right) \text{ if q is odd, and} \\
 &y(k)= a \left(k_1+k_2\right), \ \eta(k) = b\left(k_2-k_1 + \frac{2\pi}{3q} \right) \text{ if q is even, where}\\
 &a = 2^{-1/2}3^{-1/4} \text{ and } \ b= 2^{-1/2}3^{1/4},
 \end{split}
\end{equation}
clearly $\tilde{y}:=y(\tilde{k})=0$ and $\tilde{\eta}:=\eta(\tilde{k})=0$. Let $\widetilde{E}_1(y,\eta):=E_1(k_1,k_2)$.
One then checks that using \eqref{eq:Hessian}
\begin{equation}
\begin{split}
D^2_{y,\eta}\, \widetilde{E}_1(0,0)&= \left( \frac{\partial (k_1,k_2)}{\partial (y,\eta)}(0,0) \right)^T  D^2_{k_1,k_2}\, E_1(\tilde{ k}) \left( \frac{\partial (k_1,k_2)}{\partial (y,\eta)}(0,0) \right) \\
&=\frac{\sqrt{3} q^2}{\prod_{j=2}^q (E_j(\tilde{ k})+3)}\left(\begin{matrix}1 & 0 \\ 0 &1 \end{matrix} \right) \\
&\text{ with } \left( \frac{\partial (k_1,k_2)}{\partial (y,\eta)}(0,0) \right) = \left(\begin{matrix}  2^{-1/2}3^{1/4} & 2^{-1/2}3^{-1/4} \\  2^{-1/2} 3^{1/4} &- 2^{1/2} 3^{-1/4} \end{matrix} \right).
\end{split}
\end{equation}
Thus, we have in new coordinates close to each well
\begin{equation}
\begin{split}
\widetilde{E}_1(y,\eta) = -3 +\frac{\sqrt{3} q^2}{2\prod_{j=2}^q (E_j(\tilde{ k})+3)} \left(y^2+\eta^2\right)  + \mathcal O( \Vert(y,\eta) \Vert^3).
\end{split}
\end{equation}
This yields for the hexagonal lattice using \eqref{eq:F=E} the Dirac cones
\begin{equation}
\begin{split}
\label{eq:well}
F_{q+1}(\tilde{ k}) = \frac{q}{3^{3/4}} \frac{1}{\sqrt{2\prod_{j=2}^q (E_j(\tilde{ k})+3)}} \Vert(y,\eta) \Vert  + \mathcal O( \Vert(y,\eta) \Vert^{2}).
\end{split}
\end{equation}
\qed

\subsection{Semiclassical analysis close to any rational}
In this subsection, we use variables $(x,\xi)$ instead of $k = (k_1,k_2)$ to emphasize the underlying phase space structure. This will generalise magnetic matrices in Def.\ref{Magneticmatrix} and their connection to pseudodifferential operators as in Def.\ref{pdotrace}. For the study of magnetic fluxes $\phi= 2\pi \tfrac{p}{q}+h$ with $\operatorname{gcd}(p,q)=1,$ we use that \cite[Sec.\@1]{HS20} there is a $C^*$-homomorphism mapping scalar-valued $\Psi$DOs with $\Z_{*}^2$-periodic Weyl symbol 
\[ \operatorname{Op}_{\phi}^{\text{w}}(\widehat{a_{\tinyvarhexagon}})= \sum_{\gamma \in \Z^2} a_{\tinyvarhexagon}(\gamma) \operatorname{Op}_{\phi}^{\text{w}} \left((x,\xi) \mapsto e^{i \langle (x,\xi),\gamma \rangle} \right). \]
to matrix-valued $\Psi$DOs\, $\operatorname{Op}_h^{\text{w}}(\widehat{\Phi(a_{\tinyvarhexagon})})$ on $L^2(\mathbb R,\mathbb C^2 \otimes \mathbb C^q)$ with symbols that are the Fourier transform of 
\[ \Phi(a_{\tinyvarhexagon}) = \left(e^{-i\gamma_1\gamma_2 h/2} a_{\tinyvarhexagon}(\gamma) \otimes \left[ \left(J_{p,q}\right)^{\gamma_1} \left(K_q^*\right)^{\gamma_2} \right] \right)_{\gamma \in \mathbb Z^2}\]
with $J_{p,q}$ and $K_q$ as in \eqref{def:Jpq} and \eqref{def:Kq}.
Note that $\gamma_1\gamma_2=0$ for any $a_{\tinyvarhexagon}(\gamma)\neq 0$, hence
\[ \Phi(a_{\tinyvarhexagon}) = \left(a_{\tinyvarhexagon}(\gamma) \otimes \left[ \left(J_{p,q}\right)^{\gamma_1} \left(K_q^*\right)^{\gamma_2} \right] \right)_{\gamma \in \mathbb Z^2}\]
In particular, the $C^*$-homomorphism preserves regularized traces, up to constants,
\begin{equation}
\label{eq:traceseq}
\widetilde{\operatorname{tr}}\left(\operatorname{Op}_{\phi}^{\text{w}}\left(\widehat{a_{\tinyvarhexagon}}\right) \right) = \int_{\mathbb{T}_{*}^2} \operatorname{tr}_{\mathbb{C}^2}\left(\widehat{a_{\tinyvarhexagon}}(x,\xi) \right) \frac{dx \ d\xi}{\vert \mathbb T_{*}^2 \vert}=a_{\tinyvarhexagon}(0)= q^{-1} \  \widetilde{\operatorname{tr}}\left(\operatorname{Op}_h^{\text{w}}\left(\widehat{\Phi(a_{\tinyvarhexagon})}\right)\right)
\end{equation}
and, as follows by combining \cite[Theo. $2.1$]{KL14} with \cite[1.2]{HS20}, also spectra 
\begin{equation}
\label{eq:spectra}
\Sigma(H^{\phi})=\Sigma(\operatorname{Op}_{\phi}^{\text{w}}(\widehat{a_{\tinyvarhexagon}})) = \Sigma\left(\operatorname{Op}_h^{\text{w}}\left(\widehat{\Phi(a_{\tinyvarhexagon})}\right)\right).
\end{equation}

%To use parts of the analysis from \cite{BZ}, we assume the following gap condition:
%\begin{assumption}[Gap condition]
%\label{ass:gap}
%We assume that for magnetic flux $\phi=2\pi p/q$ the two bands that touch at the Dirac energy zero are isolated from the rest of the spectrum of $H^{\phi}_{\tinyvarhexagon}.$
%\end{assumption}

Recall that $M_{\tinyvarhexagon}=\widehat{\Phi(a_{\tinyvarhexagon})}$, see \eqref{def:MG}.
We conclude by \eqref{1},\eqref{eq:traces},\eqref{eq:pseudo}, and \eqref{eq:traceseq} that for $M_{\tinyvarhexagon}^{\rm{w}}(x,hp_x)=\operatorname{Op}_h^{\text{w}}M_{\tinyvarhexagon}$,
\[ \widetilde{\operatorname{tr}}_{\Lambda_{\tinyvarhexagon}}\left((H^{\phi}-z)^{-1}\right) =  \frac{\widetilde{\operatorname{tr}}\left((M_{\tinyvarhexagon}^{\rm{w}}(x,hp_x)-z)^{-1}\right)}{q \vert \vec{b}_1 \wedge \vec{b}_2 \vert }.\]

We are concerned with the analysis of this operator close to the Dirac energy $E=0.$ To analyze the spectrum of $M_{\tinyvarhexagon}^{\rm{w}}(x,hp_x)$ close to energies $E=0$, we want to focus on the two bands touching at $E=0$, first.

The obstruction to do so, is that for rational flux $2\pi \frac{p}{q}$ the two bands touching at $E=0$ may not be isolated from the rest of the spectrum, cf. Fig. \ref{fig:touching}. At first glance, this creates an obstruction to \emph{block-diagonalize} the operator $\operatorname{Op}_h^{\text{w}}M_{\tinyvarhexagon}$ at zero energy to leading order. A way to overcome this issue is explained in the following remark: 

\begin{rem}[Isolating bands touching at Dirac energies]
\label{rem:gap}
We recall that $M_{\tinyvarhexagon}$ vanishes only at points $z_0:=(x_0,\xi_0)$ as defined in \eqref{eq:even_max} or \eqref{eq:odd_max}, respectively. To analyze the operator $\operatorname{Op}_h^{\text{w}}M_{\tinyvarhexagon}$ in a neighbourhood of zero energy, it suffices therefore to consider an auxiliary operator with symbol
\begin{equation}
\widetilde M_{\tinyvarhexagon}(z) :=\chi(z) M_{\tinyvarhexagon}(z) + (1-\chi(z))M_{\tinyvarhexagon}\left(2\varepsilon \frac{(z-z_0)}{\Vert z-z_0 \Vert}\right)
\end{equation}
where $\chi \in C^{\infty}(\RR^2)$ and $\chi(z)=1$ in a neighbourhood of $z_0$ and $0$ outside. The parameter $\varepsilon$ is chosen small enough such that the two eigenvalues of $M_{\tinyvarhexagon}\left(2\varepsilon \frac{(z-z_0)}{\Vert z-z_0 \Vert}\right)$ that belong to the two bands which touch at the Dirac energies are distinct from all remaining eigenvalues of $M_{\tinyvarhexagon}\left(2\varepsilon \frac{(z-z_0)}{\Vert z-z_0 \Vert}\right).$ Such a parameter $\varepsilon>0$ exists since the remaining bands of $M_{\tinyvarhexagon}$ are possible touching the two bands that make up the Dirac cones, but they are not intersecting, cf. Fig. \ref{fig:touching}.

This way, $\operatorname{Op}_h^{\text{w}}(\widetilde M_{\tinyvarhexagon})$ and $\operatorname{Op}_h^{\text{w}}( M_{\tinyvarhexagon})$ coincide microlocally, i.e for any $\chi \in C_c^{\infty}(\operatorname{nbhd}(z_0))$ we have 
\[ \left\Vert \operatorname{Op}_h^{\text{w}}\chi \left(\operatorname{Op}_h^{\text{w}}(\widetilde M_{\tinyvarhexagon})-\operatorname{Op}_h^{\text{w}}( M_{\tinyvarhexagon}) \right)\operatorname{Op}_h^{\text{w}}\chi \right\Vert=\mathcal O(h^{\infty}), \]
see e.g. \cite[Theo. $4.25$]{ev-zw}. For our subsequent analysis, we may therefore just assume without loss of generality that the two touching bands of $M_{\tinyvarhexagon}$ at zero energy are gapped from the rest of its spectrum.
\label{rem:isolate}
\end{rem}

To analyze $\operatorname{Op}_h^{\text{w}}M_{\tinyvarhexagon}$, we recall a few properties about the matrix-valued symbol $M_{\tinyvarhexagon}$ first. Clearly, $\cup_{(x,\xi)  \in \mathbb T_*^2} M_{\tinyvarhexagon}(x,\xi) $ has band spectrum $B_{\ell}=[\gamma_{\ell},\delta_{\ell}]$, $1\leq \ell\leq 2q$, and we denote associated energy eigenvalues by $\mu_{\ell}(x,\xi)$. The $q$-th and $q+1$-st band always touch at the Dirac point, i.e.\@ $\delta_q = \gamma_{q+1}=0$ by Theorem \ref{thm:Dirac}. The phase space coordinates at which the $q$-th and $q+1$-st band touch are denoted by $z_{j}:=(x_{j},\xi_{j}) \in \mathbb T_*^2$, where $j \in \left\{1,..,2q^2 \right\}$, i.e. $\mu_q(z_{j})=\mu_{q+1}(z_{j})=0.$ There are by \eqref{eq:even_max} and \eqref{eq:odd_max} precisely $2q^2$ such points in a single fundamental domain $\mathbb T_*^2.$ For the analysis close to individual conical points, we fix a sufficiently small $\varepsilon>0$ and consider energies $E \in I_{\varepsilon} = (-\varepsilon, \varepsilon)$. We define for such energies the phase space level set $ \Sigma_{j}(E):=\mu_{\ell} \vert_{\text{nbhd}(z_{j})}^{-1}(E) \subset \mathbb T_*^2 $ for $\ell\in\{q, q+1\}$ here, close to a \emph{single} potential well centred at $z_{j}$ and the phase space area $V_{j,\varepsilon} := \bigcup_{E \in I_{\varepsilon}} \Sigma_{j}(E)$ of all energies in the interval $I_{\varepsilon}$. 

Remark \ref{rem:gap} allows us to make two simplifying coordinate changes near the conical points which we discuss now:

 There exists a unitary operator $U$ such that\footnote{We assume here by a simple change of coordinates that the Dirac point is located at $(x,\xi)=0$} \cite[Prop.\@$3.1.1$ \& Cor.\@$3.1.2$]{HS20}
\begin{equation}
\begin{split}
\label{eq:Diracform}
&U^* \operatorname{Op}_h^{\text{w}}M_{\tinyvarhexagon} U =\operatorname{diag}(\underbrace{\operatorname{Op}_h^{\text{w}}M_{\text{D},\tinyvarhexagon}}_{ \in \CC^{2 \times 2}}, \underbrace{\operatorname{Op}_h^{\text{w}}M_{\text{R},\tinyvarhexagon}}_{ \in \CC^{(2q-2) \times  (2q-2)}}) \\
&\text{ where }\operatorname{Op}_h^{\text{w}}M_{\text{D},\tinyvarhexagon}=  \left(\begin{matrix} 0 & \operatorname{Op}_h^{\text{w}}b \\  \operatorname{Op}_h^{\text{w}}b^*& 0 \end{matrix} \right)+\mathcal O(h). 
\end{split}
\end{equation}
The subscript $D$ stands for \emph{Dirac} and $R$ for \emph{rest}, and the symbol $b$ satisfies $b(x,\xi) =\tfrac{v_F}{2}( \xi+ ix) + \mathcal O(\Vert (x,\xi) \Vert^2)$ where the Fermi velocity $v_F$ satisfies by \eqref{eq:F=E} and \eqref{eq:well}
\begin{equation}
\label{eq:Fermivel}
v_F= \frac{q}{3^{3/4}} \frac{1}{3^{q-1}\prod_{j=q+2}^{2q} (F_j(\tilde{ k}))}. 
\end{equation}

For the pseudodifferential operator $\operatorname{Op}_h^{\text{w}}M_{\tinyvarhexagon} = \left( \begin{matrix} \textbf{0} &\operatorname{Op}_h^{\text{w}}\mathcal A \\ \operatorname{Op}_h^{\text{w}}\mathcal A^* & \textbf{0} \end{matrix} \right)$, with $\mathcal A $ as in \eqref{def:MG}, we obtain by squaring the operator
\begin{equation}
\begin{split}
\label{eq:square}
\left(\operatorname{Op}_h^{\text{w}}M_{\tinyvarhexagon}\right)^2 = \left(\begin{matrix}\operatorname{Op}_h^{\text{w}}\mathcal A\operatorname{Op}_h^{\text{w}}\mathcal A^* & \textbf{0} \\
\textbf{0} &\operatorname{Op}_h^{\text{w}}\mathcal A^*\operatorname{Op}_h^{\text{w}}\mathcal A \end{matrix} \right). 
\end{split}
\end{equation}
By supersymmetry it follows that away from $0$ both operators $\operatorname{Op}_h^{\text{w}}\mathcal A\operatorname{Op}_h^{\text{w}}\mathcal A^*$ and $\operatorname{Op}_h^{\text{w}}\mathcal A^*\operatorname{Op}_h^{\text{w}}\mathcal A$ have the same spectrum.
The principal symbols are
\begin{equation}
\begin{split}
\sigma_0 \left(\operatorname{Op}_h^{\text{w}}\mathcal A\operatorname{Op}_h^{\text{w}}\mathcal A^*\right) &= M_T(x,\xi)+3I_q \text{ and } \\
\sigma_0 \left(\operatorname{Op}_h^{\text{w}}\mathcal A^*\operatorname{Op}_h^{\text{w}}\mathcal A\right) &= \widehat{M_T}(x,\xi)+3I_q
\end{split}
\end{equation}
with the notation as in \eqref{eq:MG^2}.
Let $Z(x,\xi)$ now be either $M_T(x,\xi)+3I_q $ or $\widehat{M_T}(x,\xi)+3I_q.$ The lowest eigenvalue of $Z(x,\xi)$ is given by a smooth scalar function $(x,\xi) \mapsto \nu(x,\xi) = \vert \mu_{q+1}(x,\xi) \vert^2$, see Remark \ref{rem:isolate}. Thus, there are analytic unitary matrices $V$ separating the lowest eigenvalue from the rest of the matrix
\begin{equation}
\label{eq:diagonalization}
(V^* ZV)(x,\xi)= \operatorname{diag}(\nu(x,\xi),B(x,\xi)), 
\end{equation}
where by Remark \ref{rem:gap} we may assume that $\inf_{(x,\xi) \in T^*\RR }\vert \Sigma(B(x,\xi))-\nu(x,\xi) \vert>0$ and $B(x,\xi) \in \mathbb C^{(q-1) \times (q-1)}$. 

Thus, as for the Dirac-type operator above, \cite[Prop.\@ $3.1.1$ \& Corr.\@ $3.1.2$]{HS20} imply since the lowest band of $Z$, described by $\nu$, is gapped from the rest of the spectrum, there is a unitary operator $U$ and symbols $\widetilde{\nu},\widetilde{B}$ with asymptotic expansions in $\mathcal S$, such that 
\begin{equation}
\label{eq:nu}
U^* \left(\operatorname{Op}_h^{\text{w}}\mathcal A\operatorname{Op}_h^{\text{w}}\mathcal A^*\right) U = \left(\begin{matrix}\operatorname{Op}_h^{\text{w}}\widetilde{\nu} & \textbf{0} \\ \textbf{0} & \operatorname{Op}_h^{\text{w}}\widetilde{B} \end{matrix} \right) + \mathcal O_{\mathcal L(L^2(\mathbb R))}(h^{\infty}),
\end{equation}
where $\sigma_0(\widetilde{\nu}) = \nu$ and $\sigma_0(\widetilde{B})= B.$

The main result of this section, a semiclassical trace formula close to rational flux, is then stated in the following Theorem:

\begin{theo}[Semiclassical DOS and QHE close to a rational]
\label{theol}
For small $h>0$ sufficiently small, with respect to $p,q$, and magnetic flux $\phi=2\pi\frac{p}{q}+h,$ the \emph{DOS} of $H^{\phi}_{\tinyvarhexagon}$ admits the following expansion: \\
Let $I$ be an interval $I \subset (-\delta,\delta)$ for some $\delta>0$ sufficiently small\footnote{This interval encloses energies around the Dirac points in Figure \ref{fig:hex}.} and $ f \in C^{\alpha}_{\rm{c}} ( I), $ then
\begin{equation}
\begin{split}
\label{eq:TF}
&\widetilde \tr_{\Lambda} (f (  H^{\phi}_{\tinyvarhexagon})) = \tfrac{ qh}{\pi \vert \vec{b}_1 \wedge \vec{b}_2 \vert}  \sum_{ n \in \ZZ } 
f( z_{n} (h,p,q) )+ \mathcal O (\| f \|_{ C^{\alpha} } h^\infty),
\end{split}
\end{equation}
with Landau levels $z_{n} ( h,p,q  ) = \kappa ( n h , h,p,q )$
satisfying $ \kappa ( - \zeta, h,p,q ) = - \kappa ( \zeta , h,p,q )$, defined by a Bohr-Sommerfeld condition
\begin{equation}  
\begin{split} 
\label{eq:BSC}
F ( \kappa ( \zeta,h, p,q )^2, h, p,q ) &= |\zeta| + \mathcal O (h^{\infty}), \ F ( s,h,p,q  ) =  \sum_{j=0}^{\infty} F_j(s,p,q) h^j, \ F_j(0,p,q)=0, \\
\text{ where }&F_0(s,p,q):= \int_{\nu(x,\xi) \in [0,s]} \frac{dx \ d \xi}{4\pi q^2} \text{ and } \\
&F_1(s,p,q):= \frac{1}{2} - \frac{d}{d\zeta} \Big\vert_{\zeta=s} \int_{\nu(x,\xi) \in [0,\zeta]}  \sigma_1(\widetilde{\nu})(x,\xi) \ \frac{dx \ d \xi}{4\pi q^2}. 
 \end{split} 
\end{equation}
With the Fermi velocity $v_F$ defined in \eqref{eq:Fermivel}, $z_n$ satisfies
\begin{equation}
\begin{split}
\label{eq:LLrat}
z_0 &= \mathcal O(h^{\infty}) \text{ and } \\
z_n &= \sgn(n)v_F  \sqrt{\vert n \vert h} + \mathcal O(h) \ , \ n \neq 0.
\end{split}
\end{equation}
In addition, the spectrum of the magnetic Schr\"odinger operator around zero $ \Sigma(H^{\phi}_{\tinyvarhexagon}) \cap I$  is contained in disjoint closed \emph{Landau bands} $B_{\tinyvarhexagon,n}(h,p,q) \ni z_n(h,p,q)$ with spectral gaps
\begin{equation}
\label{eq:moregaps}
d\left(B_{\tinyvarhexagon,n}(h,p,q) ,B_{\tinyvarhexagon,n+1}(h,p,q) \right)\ge C_{n,p,q} h 
\end{equation} 
for some constant $C_{n,p,q} >0.$ The Hall conductivity satisfies for Fermi energies $\mu$
\begin{equation}
\begin{split}
\label{eq:Hallcond}
c_H(H^{\phi}_{\tinyvarhexagon},\mu) &= \begin{cases} & \frac{(2n+1)q}{2\pi}, \ \mu\text{ betw. }\mathcal B_{\tinyvarhexagon,\lambda,n}\text{ and } \mathcal B_{\tinyvarhexagon,\lambda,n+1}\text{ with }0 \le n \le N_{\scriptscriptstyle\tinyvarhexagon}(h,\lambda_0) \\
&\frac{(2n-1)q}{2\pi}, \ \mu\text{ betw. }\mathcal B_{\tinyvarhexagon,\lambda,n-1}\text{ and } \mathcal B_{\tinyvarhexagon,\lambda,n}\text{ with }0 \ge n \ge -N_{\scriptscriptstyle\tinyvarhexagon}(h,\lambda_0).
\end{cases}
\end{split}
\end{equation}
An illustration of the Hall conductivities is given in Figure \ref{fig:HC2}.

\end{theo}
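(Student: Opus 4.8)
The plan is to follow the architecture of the proof of Theorem \ref{theo1}, but with the zero-flux periodic operator replaced by the rational-flux operator $H^{\phi}_{\tinyvarhexagon}$ and with the perturbation $h$ playing the role of the semiclassical parameter. The first move is to pass, via the $C^*$-homomorphism and the trace/spectrum identities \eqref{eq:traceseq}--\eqref{eq:spectra}, from $H^{\phi}_{\tinyvarhexagon}$ to the matrix-valued $\Psi$DO $\operatorname{Op}_h^{\text{w}}M_{\tinyvarhexagon}$ on $L^2(\mathbb R;\mathbb C^2\otimes\mathbb C^q)$. By Remark \ref{rem:gap} I may assume the two bands touching at $E=0$ (whose conical structure is exactly Theorem \ref{thm:Dirac}) are gapped from the remaining $2q-2$ bands, so that for $f\in C_c^{\alpha}(I)$ with $I$ a small neighbourhood of $0$ only the Dirac block contributes. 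I then reduce to a scalar model: squaring as in \eqref{eq:square} and applying the microlocal diagonalization \eqref{eq:diagonalization}--\eqref{eq:nu}, the relevant part of $(\operatorname{Op}_h^{\text{w}}M_{\tinyvarhexagon})^2$ near $E=0$ is the scalar operator $\operatorname{Op}_h^{\text{w}}\widetilde\nu$, whose principal symbol $\nu=|\mu_{q+1}|^2$ is a multi-well potential with $2q^2$ non-degenerate minima at the conical points of \eqref{eq:even_max}--\eqref{eq:odd_max}, Hessian fixed by \eqref{eq:Hessian} and Fermi velocity \eqref{eq:Fermivel}. Near zero the energies of $\operatorname{Op}_h^{\text{w}}M_{\tinyvarhexagon}$ are thus $\pm$ the square roots of the low-lying eigenvalues of $\operatorname{Op}_h^{\text{w}}\widetilde\nu$.

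Next I invoke Bohr--Sommerfeld quantization for a scalar $\Psi$DO with a non-degenerate well in the form used in \cite{HS20}. This produces the quantization function $F(s,h,p,q)\sim\sum_j F_j(s,p,q)h^j$ with leading term the normalized phase-space area $F_0(s)=\tfrac{1}{4\pi q^2}\int_{\nu\le s}dx\,d\xi$ and subprincipal correction $F_1$ involving $\sigma_1(\widetilde\nu)$, precisely as in \eqref{eq:BSC}. The eigenvalues of $\operatorname{Op}_h^{\text{w}}\widetilde\nu$ near $0$ are the $\kappa(nh,h,p,q)^2$ solving $F(\kappa^2,h,p,q)=|n|h+\mathcal O(h^\infty)$; taking signed square roots gives the Landau levels $z_n=\kappa(nh,h,p,q)$, with the odd symmetry $\kappa(-\zeta)=-\kappa(\zeta)$ inherited from the off-diagonal (chiral) structure of $M_{\tinyvarhexagon}$. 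The harmonic approximation at each well, using the Hessian \eqref{eq:Hessian} and \eqref{eq:Fermivel}, then yields the leading asymptotics \eqref{eq:LLrat}.

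The three conclusions of the theorem now follow by transcribing the corresponding arguments from the small-field case. The DOS formula \eqref{eq:TF} is obtained by inserting this spectral description into the Helffer--Sj\"ostrand representation of $\widetilde\tr_{\Lambda}(f(H^{\phi}_{\tinyvarhexagon}))$, exactly as in Proposition \ref{fP2Q}; each of the $2q^2$ wells contributes the same quantized spectrum, and combining this multiplicity with the trace normalization of Definition \ref{pdotrace} and the factor $q^{-1}$ in \eqref{eq:traceseq} produces the global prefactor $\tfrac{qh}{\pi\vert\vec{b}_1\wedge\vec{b}_2\vert}$, the $\mathcal O(h^\infty)$ remainder coming from the Bohr--Sommerfeld error. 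The spectral gaps \eqref{eq:moregaps} follow as in Proposition \ref{prop:SG}: since the levels are $\mathcal O(h^\infty)$-accurate the spectrum in $I$ concentrates in bands of width $\mathcal O(h)$ about the $z_n$, the spacing $z_{n+1}-z_n\gtrsim\sqrt h$ separates consecutive bands by at least $C_{n,p,q}h$, and the invertibility argument of \cite{BZ} via the auxiliary operator $Q_0^{\text{w}}$ excludes residual $\mathcal O(h^\infty)$ spectrum in between. Finally, the Hall conductivity \eqref{eq:Hallcond} comes from the same noncommutative-geometry computation as in Propositions \ref{QHEprop} and \ref{prop:cK}: the St\v{r}eda formula \eqref{eq:express1} with the $K_0$-labelling \eqref{eq:Rieffel}--\eqref{eq:gamma2}, differentiating in $h$ the IDS read off from \eqref{eq:TF} between bands isolates the integer $\gamma_2$, the extra factor $q$ is precisely the $q$ carried by \eqref{eq:traceseq}, and the $2n\pm1$ pattern encodes the zero mode shared between the upper and lower cones, gauged so that a full band carries zero conductivity.

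I expect the principal difficulty to be the exact bookkeeping that yields the prefactor $\tfrac{qh}{\pi\vert\vec{b}_1\wedge\vec{b}_2\vert}$ and the factor $q$ in the Hall conductivity: one must correctly combine the single-well Bohr--Sommerfeld density, the $2q^2$-fold multiplicity of conical points, the normalization $\vert\mathbb T_*^2\vert$ of Definition \ref{pdotrace}, and the $q^{-1}$ of \eqref{eq:traceseq}, while simultaneously verifying that the \emph{rest} block $\operatorname{Op}_h^{\text{w}}M_{R,\tinyvarhexagon}$ contributes no spectrum in $I$ even though the remaining bands may touch, but do not cross, the Dirac bands. This last point is what Remark \ref{rem:gap} secures only after the microlocal surgery, and its effect on the trace must be tracked carefully to guarantee the stated $\mathcal O(h^\infty)$ error.
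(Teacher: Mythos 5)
Your architecture coincides with the paper's: Theorem \ref{theol} is proved jointly with Theorem \ref{theo1} by reducing $H^{\phi}_{\tinyvarhexagon}$ to $\operatorname{Op}_h^{\text{w}}M_{\tinyvarhexagon}$ via \eqref{eq:traceseq}--\eqref{eq:spectra}, isolating the Dirac block by Remark \ref{rem:gap}, squaring and diagonalizing down to the scalar symbol $\widetilde\nu$ as in \eqref{eq:nu}, applying the Bohr--Sommerfeld rule well by well to get \eqref{eq:BSC} and \eqref{eq:LLrat}, and reading off \eqref{eq:Hallcond} from the $K_0$-labelling \eqref{eq:Rieffel}. However, two steps that the paper carries out explicitly are absent from your outline, and both would block the argument as written.

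First, the weight that each Landau level carries in the DOS is not a consequence of the Bohr--Sommerfeld spectrum plus counting wells and normalizations. One must prove that near each $z_n$ the regularized trace of the resolvent has the precise residue structure $\tfrac{hq^2}{\pi}(z-z_n)^{-1}+\text{holomorphic}+\mathcal O(h^{\infty})$, i.e.\@ \eqref{eq:leadorder}. The paper extracts this from the Grushin problem \eqref{eq:E}: the Schur complement, the approximations \eqref{eq:approx}--\eqref{eq:Epm}, and the explicit symbol computation $\int_{\mathbb T_*^2}\sigma(E_+^0E_-^0)\,\tfrac{dx\,d\xi}{4\pi^2}=\tfrac{h}{2\pi}\int_{\RR} w_0(x)w_0(x)^*\,dx$ are what turn ``the spectrum concentrates at $z_n$ with $2q^2$-fold well degeneracy'' into a quantitative coefficient; ``inserting the spectral description into Helffer--Sj\"ostrand'' skips exactly this. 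Second, Theorem \ref{theol} is stated for $f\in C^{\alpha}_{\rm{c}}(I)$ only, whereas Proposition \ref{fP2Q} and the almost-analytic calculus \eqref{eq:taylor}--\eqref{eq:diff} require $f\in C^{5}$; applied ``exactly'' to a H\"older $f$, the extension $D_{\overline z}\widetilde f$ does not decay fast enough near the real axis to absorb the resolvent singularity. The paper bridges this by mollifying, $f_h:=f*\psi_{h^{M_0}}$, using that $I$ contains only $\mathcal O(h^{-1})$ Landau levels so that $h\sum_{n}\vert f(z_n)-f_h(z_n)\vert=\mathcal O(\Vert f\Vert_{C^{\alpha}}h^{\alpha M_0})$ as in \eqref{eq:est}, and then taking $M_0$ arbitrarily large; this regularization is precisely why the error in \eqref{eq:TF} is measured in $\Vert f\Vert_{C^{\alpha}}$. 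With these two insertions your plan matches the paper's proof.
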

\begin{rem}[Dynamical delocalization]\label{rem:DLrational}
In particular, using the results from Subsection \ref{sec:DLoc}, we conclude from \eqref{eq:moregaps} that for sufficiently weak disorder, such that the (disorder-broadened) Landau bands remain non-overlapping, there exists at least one mobility edge inside each Landau band at which delocalization occurs.
\end{rem}

\begin{figure}
\label{fig:bf}
\captionsetup[subfigure]{oneside,margin={0.5cm,0cm}}
    \begin{subfigure}[t]{0.45\textwidth}
         \begin{flushleft}
        \includegraphics[height=1.6in,width=2.5in]{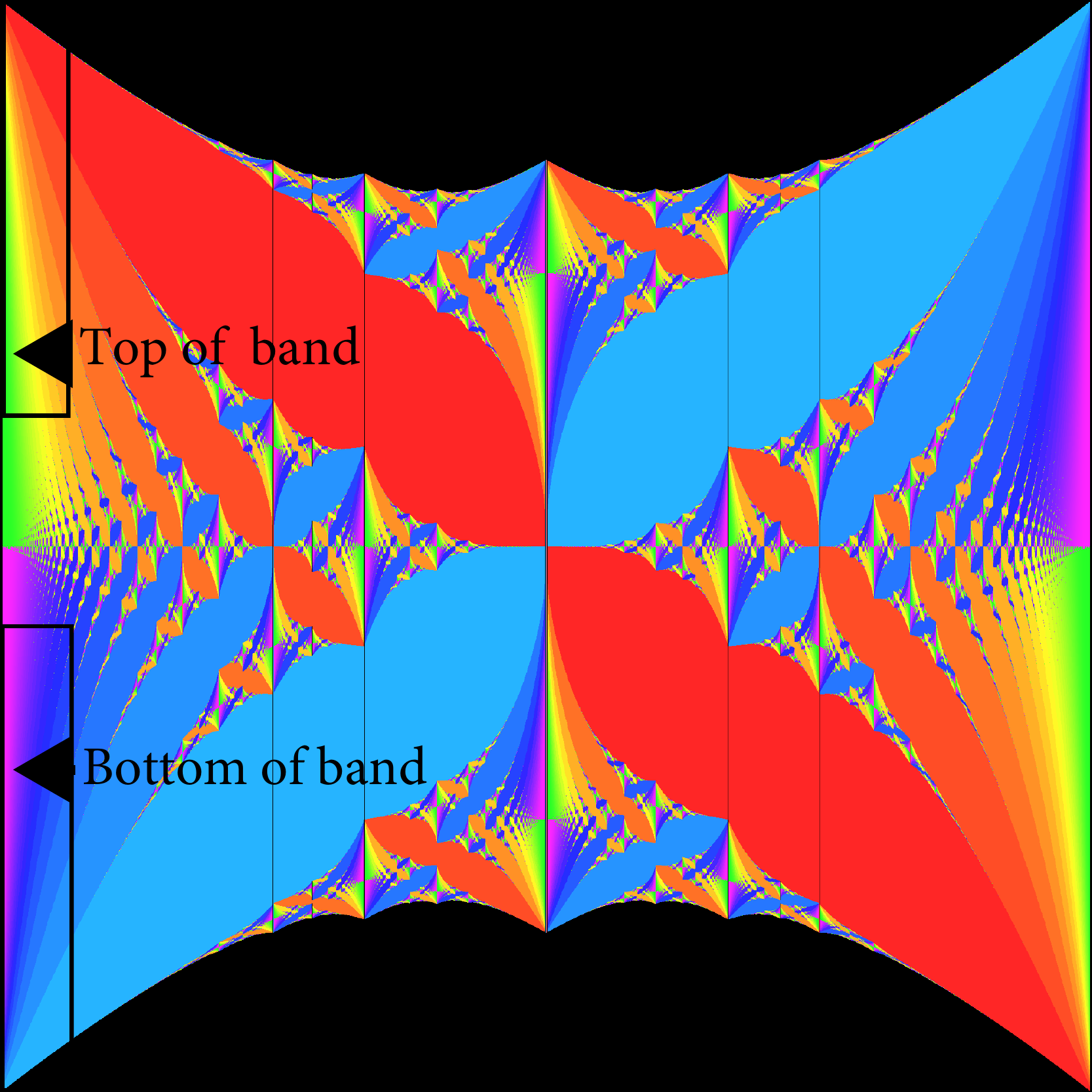}
        \caption{The square lattice $\Lambda_{\blacksquare}.$ The Hall conductivity on the lower and upper spectral edge that is computed in this paper, in the regime of small magnetic flux, is located on the strip below/above the respective arrow. The energy on the vertical axis covers the full range of the operator.}
        \end{flushleft}
        \label{fig:1}
    \end{subfigure}%
    ~ 
    \begin{subfigure}[t]{0.45\textwidth}
    \begin{flushright}
        \includegraphics[height=1.6in,width=2.5in]{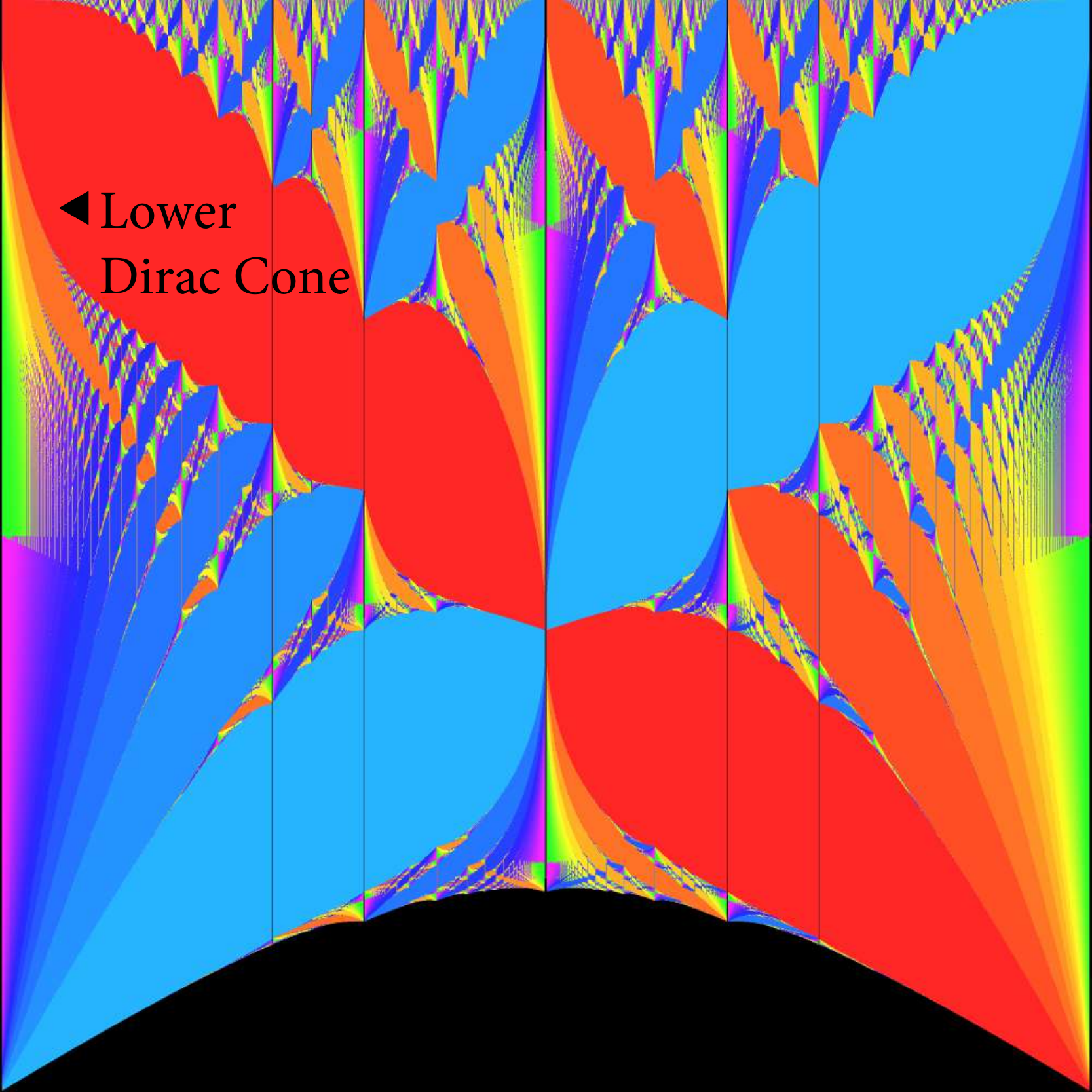}
        \caption{The hexagonal lattice $\Lambda_{\hexagon}$ (lower band, only). The Hall conductivity on the lower Dirac cone that is computed in this paper is located on the strip to the left and above the arrow. The energy scale on the vertical axis covers the interval $[-1,0]$.}
        \end{flushright}
          \end{subfigure}
    \caption{Hall conductivity (coloured) as a function of magnetic flux $h \in [0,2\pi]$ (horizontal axis) and energy (vertical axis). Dark regions do not carry spectrum. Different colours represent different conductivities. \label{fig:HC2}}
\end{figure}

\section{Proofs}
\label{sec:proofs}
We now state the proof of Theorems \ref{theo1} and \ref{theol} with several references to details that are already discussed in \cite{BZ,HS0}.
\begin{proof}[Proof of Thm.\@ \ref{theo1} \& Thm.\@ \ref{theol}] 
\label{fullproof}

\smallsection{Step $1$: Quasimodes and Landau levels}
Quasimodes and Landau levels are constructed as eigenfunctions and eigenvalues to \emph{localized operators}, i.e.\@ operators that coincide microlocally, up to a constant shift of the spectrum, with $\Psi$DOs \eqref{eq:effha} in a neighbourhood of a single potential well. 
For the square lattice, such a localized operator with discrete spectrum at the bottom of the potential well, see Fig. \ref{fig:square}, is defined by the Weyl symbol 
 \begin{equation}
\begin{gathered} 
\label{eq:Q0}
Q^{0}_{{\scriptscriptstyle{ \blacksquare}}}( x, \xi ) := Q_{{\scriptscriptstyle{ \blacksquare}}} ( x, \xi) + 2 -\chi_{{\scriptscriptstyle{ \blacksquare}}}(x,\xi), \text{ where }\\ 
\chi_{{\scriptscriptstyle{ \blacksquare}}} \in C^\infty_{\rm{c}}  ( \RR^2 ; [ 0, 1 ] ) ,  \ \ 
\chi_{{\scriptscriptstyle{ \blacksquare}}} (x,\xi) = \left\{ \begin{array}{ll} 1, &\left\lVert (x,\xi)-(\pi,\pi) \right\rVert_\infty < \frac1{10}  , \\
0, & \left\lVert (x,\xi)-(\pi,\pi)\right\rVert_{\infty} >  \frac1{5}. \end{array} \right.  
\end{gathered}
\end{equation}
Thus, $\operatorname{Op}_h^{\text{w}}Q^{0}_{{\scriptscriptstyle{ \blacksquare}}}-z$ is elliptic \cite[Sec.\@ 4.7]{ev-zw} for $z$ in a small neighbourhood of zero and $(x,\xi) \notin \text{nbhd}(\pi,\pi)$ where the neighbourhood depends on $z.$

On the hexagonal lattice such a localized operator with discrete spectrum close to zero energy, the energy level of the Dirac points, see Fig. \ref{fig:hex}, is defined by the symbol
\begin{equation}
\begin{split} 
\label{eq:M0}
&M_{\tinyvarhexagon}^0( x, \xi ) := M_{\tinyvarhexagon} ( x, \xi) + 
\left(\begin{matrix} (\chi_{\tinyvarhexagon} ( x, \xi)-1) I_q& \textbf{0}  \\
 \textbf{0}  & (1 - \chi_{\tinyvarhexagon}   ( x, \xi))I_q \end{matrix}\right),\\
&\chi_{ \tinyvarhexagon} \in C^\infty_{\rm{c}}  ( \RR^2 ; [ 0, 1 ] ), \chi_{ \tinyvarhexagon}(z)=\chi_{ \tinyvarhexagon} (-z),
\end{split}
\end{equation}
where $\chi_{\tinyvarhexagon} (x,\xi ) =1$ on all $\cup_{j \in \left\{1,..,2q^2 \right\}} V_{j,\delta}$ for some $\delta>0$ sufficiently small and vanishes outside of $\mathbb T_*^2.$

Next, we argue that the spectrum of both $\operatorname{Op}_h^{\text{w}}Q^{0}_{\scriptscriptstyle{ \blacksquare}}$ and $\operatorname{Op}_h^{\text{w}}M_{\tinyvarhexagon}^{0}$ is indeed contained in discrete intervals around zero. To do so, we define another pair of symbols
\begin{equation}
\begin{split}
Q^{1}_{{\scriptscriptstyle{ \blacksquare}}}( x, \xi )&:= Q_{{\scriptscriptstyle{ \blacksquare}}} ( x, \xi) + 2 \text{ and } 
M_{\tinyvarhexagon}^1 ( x, \xi ):= M_{\tinyvarhexagon} ( x, \xi) + \operatorname{diag}(-I_q,I_q).
\end{split}
\end{equation}
The two associated operators with upper index $1$ are invertible close to zero and we have
\begin{equation}
\begin{split}
\operatorname{Op}_h^{\text{w}}Q^{0}_{{\scriptscriptstyle{ \blacksquare}}}-z &=\left(\operatorname{Op}_h^{\text{w}}Q^{1}_{{\scriptscriptstyle{ \blacksquare}}} -z \right) \left(\operatorname{id}+K_{{\scriptscriptstyle{ \blacksquare}}}(z)\right)\text{ and } \\
 \operatorname{Op}_h^{\text{w}}M_{\tinyvarhexagon}^{0}-z &=\left(\operatorname{Op}_h^{\text{w}}M_{\tinyvarhexagon}^{1} -z \right) \left(\operatorname{id}+K_{\tinyvarhexagon}(z)\right)
\end{split}
\end{equation} 
for some compact operators
\begin{equation}
\begin{split}
K_{{\scriptscriptstyle{ \blacksquare}}}(z) &=\left( \operatorname{Op}_h^{\text{w}}Q^{1}_{{\scriptscriptstyle{ \blacksquare}}}-z \right)^{-1} \chi_0^{\rm{w}}  \text{ for }z \notin \Sigma( \operatorname{Op}_h^{\text{w}}Q^{1}_{{\scriptscriptstyle{ \blacksquare}}})\text{ and }\\
K_{\tinyvarhexagon}(z) &=\left( \operatorname{Op}_h^{\text{w}}M_{\tinyvarhexagon}^{1}-z \right)^{-1} \operatorname{diag}(\chi_0^{\rm{w}},-\chi_0^{\rm{w}}) \text{ for } z \notin \Sigma(\operatorname{Op}_h^{\text{w}}M_{\tinyvarhexagon}^{1}).
\end{split}
\end{equation} 
By analytic Fredholm theory \cite[Theorem D.4]{ev-zw} this implies the discreteness of the spectrum of $Q^{0}_{{\scriptscriptstyle{ \blacksquare}}}$ and $M_{\tinyvarhexagon}^{0}$ close to zero.
Thus, there exists a family of eigenvalues and orthonormal eigenfunctions such that 
\begin{equation}
\begin{split}
\label{eq:zeros}
\left(\operatorname{Op}_h^{\text{w}} Q^{0}_{{\scriptscriptstyle{ \blacksquare}}}-\kappa_{{\scriptscriptstyle{ \blacksquare}}}(nh,h) \right) u_{n, {\scriptscriptstyle{ \blacksquare}}} &= 0\text{ and } 
\left(\operatorname{Op}_h^{\text{w}} M^{0}_{\tinyvarhexagon}-\kappa_{\tinyvarhexagon}(nh,h)\right) u_{n, \tinyvarhexagon} =0.
\end{split}
\end{equation}
Localized operators with upper index $0$ have the property that their spectra for energies close to zero stay close to the spectra of operators $\operatorname{Op}_h^{\text{w}}Q_{{\scriptscriptstyle{ \blacksquare}}}$ and $\operatorname{Op}_h^{\text{w}}M_{\tinyvarhexagon} $, respectively. In fact, an immediate adaptation of the proof of \cite[Lemma $5.2$]{BZ} shows that after possibly shrinking the energy window around zero to some $\varepsilon_1$ with $ 0 < \varepsilon_{1} < \varepsilon $ and $ z \in [0, \varepsilon_{1} ] - i [ -1,1] $ such that $d \left( z ,\Sigma\left( \operatorname{Op}_h^{\text{w}}Q^{0}_{{\scriptscriptstyle{ \blacksquare}}}\right)\right) > h^{n},$ for some arbitrary but fixed $n \in \mathbb N$, there is $h_0$ such that for $h \in (0,h_0) $, 
\begin{equation}
\label{eq:comparison}
 ( \operatorname{Op}_h^{\text{w}}Q_{{\scriptscriptstyle{ \blacksquare}}}- z )^{-1} = \mathcal O_{ L^2 \to L^2 } (  d ( z , \Sigma(  \operatorname{Op}_h^{\text{w}}Q^{0}_{\scriptscriptstyle{ \blacksquare}}))^{-1})
 \end{equation}
and the analogous result is true for $M^{\rm{w}}_{\tinyvarhexagon}$ as well.

Since $\operatorname{Op}_h^{\text{w}}M_{\tinyvarhexagon}$ and $\operatorname{Op}_h^{\text{w}}M_{\tinyvarhexagon}^0$ in $\cup_{j \in \left\{1,..,2q^2 \right\}} V_{j,\delta}$ coincide microlocally we infer from \eqref{eq:zeros} that
\begin{equation}
\begin{split}
\left(\operatorname{Op}_h^{\text{w}} M_{\tinyvarhexagon}-\kappa_{\tinyvarhexagon}(nh,h)\right) u_{n, \tinyvarhexagon} =\mathcal O(h^{\infty}).
\end{split}
\end{equation}
Thus, one has to find all such microlocal solutions with $\WF_h(u_{n, \tinyvarhexagon} ) \subset \cup_{j \in \left\{1,..,2q^2 \right\}} V_{j,\delta}.$
Microlocal solutions $\left(\operatorname{Op}_h^{\text{w}} M_{\tinyvarhexagon}-z\right)u = \mathcal O(h^{\infty})$ for $z \ge c \sqrt{h}$ are in one-to-one correspondence with microlocal solutions $v \in\WF_h(u_{n, \tinyvarhexagon} ) \subset \cup_{j \in \left\{1,..,2q^2 \right\}}V_{j,\delta} $ such that by \eqref{eq:square}
\begin{equation}
\begin{split}
\label{eq:quassquare}
&\left(\operatorname{Op}_h^{\text{w}}\mathcal A\operatorname{Op}_h^{\text{w}}\mathcal A^*-\lambda  \right)v = \mathcal O(h^{\infty}) \\
&z= \pm \lambda, \ u:=(u_1,u_2) := \left(v,z^{-1} \operatorname{Op}_h^{\text{w}}\mathcal A^*v\right) \in \mathbb C^{2q}.
\end{split}
\end{equation}

Since $0$ is in the spectrum of $H^h_{\tinyvarhexagon}$ for all $h \in [0,2\pi]$ \cite[Lemma $5.1$]{BHJ17}, we have that $0 \in \Sigma(\operatorname{Op}_h^{\text{w}}M_{\tinyvarhexagon})$ for all $h$ by \eqref{eq:spectra}. Invoking now \eqref{eq:comparison} for the hexagonal lattice, implies that there exists an eigenvalue $\mathcal O(h^{\infty})$ to the localized operator $\operatorname{Op}_h^{\text{w}}M_{\tinyvarhexagon}^{0}$. 

We can now apply the following Bohr-Sommerfeld condition \cite{hr,HS20,CdV}:

Let $H:T^*\RR \rightarrow \RR$ be a classical symbol with expansion $H \sim \sum_{i=0}^{\infty} H_i h^i$ 
Moreover, we assume the principal symbol $H_0$ to satisfy
\begin{enumerate}
\item \label{eq:conditions1}$ H_0(z) = 0\text{ and } (D^2H_0)(z)>0,$
\item The set $\{\nu \in \mathbb R^2: H_0(\nu) < \delta \}$  is compact and connected for some $\delta>0$ sufficiently small.
\item \label{eq:conditions3}$H_0$ is strictly positive and does not possess any other critical points, apart from $z$ in a sufficiently small nbhd of $z.$
\end{enumerate}
Then, the spectrum of $\operatorname{Op}^{\text{w}}_h(H)$ close to zero is given by the Bohr-Sommerfeld condition
\[ F(E,h) = \sum_{j=0}^{\infty} F_j (E) h^j = nh\]
where the leading-order term is the \emph{Bohr-Sommerfeld term}
\[ F_0(E) = \frac{1}{2\pi} \int_{\left\{ H_0 \le E \right\}} dx \ d\xi \]
and the subprincipal term $F_1$ includes the \emph{Maslov correction} and the contribution from the subprincipal symbol $H_1$
\begin{equation}
\label{eq:F1}
F_1(E) = \frac{1}{2}- \frac{1}{2\pi} \frac{d}{ds} \Big \vert_{s=E} \int_{\left\{ H_0 \le s \right\}} H_1(x,\xi) \ dx \ d\xi.
\end{equation}
Expressions for higher-order terms $F_j$ with $j \ge 2$ can be found in \cite{CdV}.

This immediately yields the Bohr-Sommerfeld condition for the square lattice \eqref{eq:g2F}, by applying it to the microlocally equivalent symbol $Q^{0}_{{\scriptscriptstyle{ \blacksquare}}}$ in \eqref{eq:Q0}, since the subprincipal is zero and therefore $F_1(E)=\frac{1}{2}.$

In case of the hexagonal lattice, we use that by \eqref{eq:quassquare} and \eqref{eq:nu} it suffices to study the quasimodes to the symbol $\widetilde{\nu}.$ Clearly, $\widetilde{\nu}$ satisfies both assumptions (1) and (3) of the Bohr-Sommerfeld condition. 

By using cut-off functions $\chi_{j,\tinyvarhexagon}$ that are localized to neighbourhoods $V_{j,\delta}$ of a single well, the localized symbol
\[\widetilde{\nu}_j(x,\xi) := \widetilde{\nu}(x,\xi) + (1-\chi_{j,\tinyvarhexagon})(x,\xi) \]
satisfies then all three conditions of the Bohr-Sommerfeld rule which yields \eqref{eq:BSC}.

When $q=1$ and $\mathcal A$ is scalar, a direct computation of \eqref{eq:F1} shows that $F_1 = 0$ \cite{BZ2}. This yields the Bohr-Sommerfeld condition stated in Theorem \ref{theo1}.

Finally for the analysis close to rationals, the asymptotics of Landau levels \eqref{eq:LLrat} and the presence of gaps \eqref{eq:moregaps} follow immediately from both \eqref{eq:Diracform} and \eqref{eq:Fermivel}, and the explicit spectral analysis of the $2D$-magnetic Dirac operator, cf. \cite{HS20}[Prop $3.6.1$ and $(3.6.22)$].

\smallsection{Step $2$: The Grushin problem}
To prove the trace formulae, we fix one Landau level and take $ z_1 $ and $ \epsilon_0 $ with
\begin{equation}  
\label{eq:specrah}     \{ \kappa ( n h, h ) \}_{ n }  \cap [ z_1 - 
2 \epsilon_0 h , z_1 +  2 \epsilon_0 h ]  = \{ \kappa ( n_1 h, h ) \} , \ \ 
n_1 = n_1 ( z_1 , h ).
\end{equation}  
Since symbols $Q_{\scriptscriptstyle{ \blacksquare}}$ and $M_{{\tinyvarhexagon}}$ are $2\pi$-periodic, they possess infinitely many potential wells. Therefore, we introduce a translation operator $r_\gamma u ( x ) := e^{ \frac i h \gamma_2 x } u ( x - \gamma_1 )$ to define translations of the quasimodes $w_{\gamma}:=r_{\gamma} u$ for $\gamma \in \mathbb Z_{*}^2.$ 
We then define operators $ R_+ : L^2 ( \RR , \CC^{m} ) \to \ell^2 ( \mathbb Z^2_*; \CC^{n} ) $ and
$ R_- : \ell^2 ( \mathbb Z^2_*; \CC^{n} ) \to L^2 ( \RR , \CC^{m} ) $ by
\begin{equation}
\begin{split}
\left( R_+ u_{+} \right) ( \gamma ) & := \int_{\RR} \overline{u_{+}(x)}\ {}^{t} w_{\gamma}(x) \ dx \in \CC^{n} ,  
\ \ 
R_- u_- ( x ) 
:= \sum_{ \gamma \in \mathbb Z^2_* }
  w_\gamma ( x )  u_- ( \gamma ) , 
\end{split}
\end{equation}
where 
\begin{itemize}
\item $n=m=1$ for the square lattice and
\item $n=2q^2$, $m=2q$ on the hexagonal lattice close to the flux $2 \pi p/q$, in which case 
\[  u_- ( \gamma ) = \left(\begin{matrix}   u_-^{1} ( \gamma) \hdots  u_-^{2q^2} (\gamma ) \end{matrix}\right)^t 
\in \CC^{2q^2} \text{ and } w_\gamma ( x ) = \left( w_\gamma^1 \hdots w_\gamma^{2q^2} \right) \in 
\CC^{2q \times 2q^2}.\]
\end{itemize}
This way, the following Grushin problem \cite[Prop.\@ $5.4$]{BZ} is well-posed for $ z \in (z_1 - \epsilon_0 h , z_1 + \epsilon_0 h ) 
+ i ( - 1, 1 ) $, where $\mathcal P(z):=\operatorname{Op}_h^{\text{w}}Q_{{\scriptscriptstyle{ \blacksquare}}}- z$ for the square and $\mathcal P(z):=\operatorname{Op}_h^{\text{w}}M_{\tinyvarhexagon}-z$ for the hexagonal lattice,
\begin{equation}
\label{eq:E}\left(\begin{matrix} \mathcal P(z)  & R_- \\
R_+ & 0 \end{matrix}\right)^{-1}=: \left(\begin{matrix} E ( z, h ) & E_{+} (z, h ) \\
E_- ( z, h ) & E_{-+} ( z , h ) \end{matrix}\right). \end{equation}
Schur's complement formula implies that
\[\mathcal P(z)^{-1} = E(z,h) - E_{+}(z,h)E_{\pm}(z,h)E_{-}(z,h)\]
where $E_{+},E_{\pm}$, and $E_{-}$ can be approximated by
\begin{equation}
\label{eq:approx}
  E_+^0  := R_- , \ \  E_-^0  := R_+  , \ \ 
E_{\pm}^0 = ( z - \kappa ( h n_1, n_1 ) ) \delta_{\gamma,0} . 
\end{equation}
Here, $E_{\pm}(\gamma) = E_{\pm}^0(\gamma)+\mathcal O(h^{\infty} \langle \gamma \rangle^{-\infty})$ for $\vert \Im(z) \vert>h^m$, for some fixed $m$, and 
\begin{equation}
\begin{split}
\label{eq:Epm}
 &E_+ ( z, h ) v_+ ( x ) = \sum_{ \gamma \in \mathbb Z^2_*}    r_\gamma  W_0  ( x ) v_+ ( \gamma ) , \ \ W_0 = w_0  + e_0  , \ \ 
e_0 = \mathcal O ( h^\infty)_{\mathscr S } , \\
&( E_- ( z , h ) v )( \gamma ) =  \langle v , r_\gamma W_- \rangle, \ \ W_- = w_0 + f_0 , \ \ f_0 \in \mathcal O ( h^\infty )_{\mathscr S }
\end{split}
\end{equation}
where the estimates follow as in \cite[Proof of Prop.\@ 5.4]{BZ}.
Moreover, we define the function $G(z,h) :=  \int_{\mathbb T_{*}^2} \sigma(E(z,h)))(x,\xi) \frac{dx \ d\xi}{\vert \mathbb{T}_*^2 \vert}$
which is holomorphic in $z \in (z_1-\varepsilon_0 h , z_1+\varepsilon_0 h )+i(-1,1)$ \cite[(6.1)]{BZ}.

To study 
\[ J(z,h)=\int_{\mathbb T_*^2} \operatorname{tr}_{\CC^m} \sigma \left( E_{+} E_{\pm} E_{-} \right)(x,\xi) \frac{dx \ d\xi}{\vert \mathbb T_*^2 \vert}\]
we define, for fixed $M$, the approximation $J_0$ for 
\[ z \in (z_1 - \epsilon_0 h , z_1 + \epsilon_0 h ) 
+ i ( - 1, 1 ), n = n_1 ( z_1 , h ),\text{ and }| \Im z | > h^{M }\] 
by using approximations \eqref{eq:approx}
\begin{equation}
\label{eq:J0}
J_0(z,h)=\int_{\mathbb T_*^2} (z-\kappa(n_1h,h))^{-1} \operatorname{tr}_{\CC^m} \sigma \left( E_{+}^0 E_{-}^0 \right)(x,\xi) \frac{dx \ d\xi}{\vert \mathbb T_*^2 \vert}.
\end{equation}
Estimates \eqref{eq:Epm} imply then that $J(z,h)=J_0(z,h) + \mathcal O(h^{\infty}).$

To find a more explicit expression for $J_0$ we study the Schwartz kernel $K$ of the operator $E_+^0 E_-^0$ given by
\[  K ( x, y ) = \sum_{ \alpha \in \mathbb Z_*^2 } E_+^0 ( x, \alpha ) E_-^0 ( \alpha, y ) = 
\sum_{\alpha } w_\alpha ( x ) w_\alpha ( y )^* ,
\]
from which the symbol of the pseudodifferential operator, appearing in \eqref{eq:J0}, can be derived from the Schwartz kernel
\[ \begin{split} \sigma ( E^0_+ ( z, h )   E^0_{-} ( z, h ) ) (x , \xi ) & = 
\sum_{\alpha  \in \mathbb Z_*^2 } \int_\RR w_\alpha ( x - \tfrac w 2) w_\alpha^* ( x - \tfrac w 2) 
e^{ \frac i h w \xi } dw \\
& = \sum_{\alpha  \in \mathbb Z_*^2 } \int_\RR e^{  \frac i h w ( \xi - \alpha_2) }  w_0 ( x - \tfrac w 2 - \alpha_1 ) { w_0 }( x + \tfrac w2 - \alpha_1 )^* dw.
\end{split} 
 \]
Hence, we obtain for the integral over the Weyl symbol
\begin{equation}
\begin{split} 
&  \int_{\mathbb{T}^2_*}  \sigma ( E^0_+ ( z, h ) 
   E^0_{-} ( z, h ) ) (x,\xi)\frac{dx d\xi }{ 4 \pi^2 } \\
 &=  \sum_\alpha 
   \int_{\mathbb{T}_*^2}  \int_\RR 
e^{  \frac i h w ( \xi - \alpha_2) }  w_0 ( x - \tfrac w 2 - \alpha_1 ) { w_0 }( x + \tfrac w2 - \alpha_1 )^*
d w \frac{dx d\xi }{ 4 \pi^2 } \\
&=  \int_{\RR^2} \int_\RR
e^{  \frac i h w  \xi }  w_0 ( x - \tfrac w 2  ) { w_0 }( x + \tfrac w 2 )^* 
d w \frac{dx d\xi }{ 4 \pi^2 }  \\
&= \ \frac{h}{ 2 \pi }  \int_\RR w_0(x) w_0(x)^* dx. 
\end{split} \end{equation}
This implies for $J_0$ as in \eqref{eq:J0}
\begin{equation}
\begin{split}
J_0(z,h)
&=\int_{\mathbb T_*^2} (z-\kappa(n_1h,h))^{-1} \operatorname{tr}_{\CC^m} \sigma \left( E_{+}^0 E_{-}^0 \right)(x,\xi) \frac{dx \ d\xi}{\vert \mathbb T_*^2 \vert} \\
&=\frac{h (z-\kappa(n_1h,h,p,q))^{-1}}{2\pi} \sum_{i=1}^{m} \sum_{j=1}^{n} \int_{\mathbb R } \left\vert \left\langle \widehat{e}_i, w_j(x) \right\rangle \right \vert^2   dx  \\
&=\frac{h (z-\kappa(n_1h,h,p,q))^{-1}}{2\pi} \sum_{j=1}^{n} \int_{\mathbb R } \left\vert w_j(x)  \right \vert^2   dx  \\
&= \frac{hn}{2\pi} (z-\kappa(n_1h,h,p,q))^{-1}.
\end{split}
\end{equation}
For the hexagonal lattice with magnetic flux $h$, the reflection symmetry of the Dirac points located at quasimomenta $\pm \left(\left(\frac{2\pi}{3}, -\frac{2\pi}{3} \right)\right)$ implies that the eigenfunctions 
$u^{\pm}_{n_1} = (u^{\pm}_{n_1,1},u^{\pm}_{n_1,2}) = (u^{\mp}_{n_1,2},u^{\mp}_{n_1,1}) $ satisfy 
\[\int_\RR \Vert w_0(x)^* \vec{e}_i \Vert^2 \ dx = \int_\RR   \vert u^{+}_{n_1,i}(x)\vert^2 +\vert u^{-}_{n_1,i}(x)\vert^2 \ dx=1 + \mathcal O(h^{\infty}).\]
Taking the regularized trace and exhibiting leading-order contributions shows that for $\vert \Im(z) \vert>h^M$, with arbitrary $M$, and $\vert z-z_1 \vert \le \varepsilon h$ there are analytic functions
\begin{equation}
\begin{split}
\label{eq:g}
g_{\scriptscriptstyle{ \blacksquare},n_1}(z,h) &:= G(z,h)+\tfrac{h}{2\pi}\sum_{n \neq n_1}(z-z_n(h))^{-1},\\
g_{\tinyvarhexagon,n_1}(i,z,h) &:= \langle \vec{e}_i,G(z,h)\vec{e}_i \rangle_{\CC^2} +\tfrac{h}{2\pi}\sum_{n \neq n_1}(z-z_n(h))^{-1},   \\ 
g_{\tinyvarhexagon,n_1}(z,h) &:=  \tfrac{ g_{\tinyvarhexagon,n_1}(1,z,h)+g_{\tinyvarhexagon,n_1}(2,z,h)}{2},\\
g_{\tinyvarhexagon,n_1}(z,h,p,q) &:= \operatorname{tr}_{\mathbb C^{2q}}G(z,h,p,q)+\tfrac{hn}{2\pi}\sum_{n \neq n_1}(z-z_n(h,p,q))^{-1},  
\end{split}
\end{equation}
such that we obtain \cite[Prop.\@ 6.1]{BZ} 
\begin{equation}
\begin{split}
\label{eq:leadorder}
\widetilde{\operatorname{tr}}\left((Q_{\scriptscriptstyle{ \blacksquare}}^{\rm{w}}(x,hp_x)-z)^{-1}\right)  &=  \tfrac{h}{2\pi} (z-z_{n_1,\scriptscriptstyle{ \blacksquare}}(h))^{-1}+g_{{\scriptscriptstyle{ \blacksquare}},n_1}(z,h)+\mathcal O(h^{\infty}), \\
\widetilde{\operatorname{tr}}\left(\langle \vec{e}_i,(Q_{\tinyvarhexagon}^{\rm{w}}(x,hp_x)-z)^{-1}\vec{e}_i \rangle_{\CC^2} \right) &= \tfrac{h}{2\pi} (z-z_{n_1,\tinyvarhexagon}(h))^{-1}+g_{\tinyvarhexagon,n_1}(i,z,h) +\mathcal O(h^{\infty}), \text{ and } \\ 
\widetilde{\operatorname{tr}}\left((M_{\tinyvarhexagon}^{\rm{w}}(x,hp_x)-z)^{-1} \right) &= \tfrac{hq^2}{\pi} (z-z_{n_1,\tinyvarhexagon}(h))^{-1}+g_{\tinyvarhexagon,n_1}(z,h,p,q) +\mathcal O(h^{\infty}).
\end{split}
\end{equation}
We also observe for later that
\begin{equation}
\begin{split}
\label{eq:squared}
&\left( \widetilde{\operatorname{tr}}(Q_{\scriptscriptstyle{ \blacksquare}}^{\rm{w}}(x,hp_x)-z)^{-1} \right)^2  = -\tfrac{h^2}{4\pi^2} D_z(z-z_{n_1,\scriptscriptstyle{ \blacksquare}}(h))^{-1} \\
&\qquad \qquad \qquad+\tfrac{h}{2\pi} (z-z_{n_1,\scriptscriptstyle{ \blacksquare}}(h))^{-1} g_{\scriptscriptstyle{ \blacksquare},n_1}(z,h)+ g_{\scriptscriptstyle{ \blacksquare},n_1}(z,h)^2+\mathcal O(h^{\infty}) \text{ and }\\
&\left(\widetilde{\operatorname{tr}}\langle \vec{e}_i,(Q_{\tinyvarhexagon}^{\rm{w}}(x,hp_x)-z)^{-1}\vec{e}_i \rangle_{\CC^2} \right)^2 = -\tfrac{h^2}{4\pi^2} D_z(z-z_{n_1,\tinyvarhexagon}(h))^{-1}\\
&\qquad \qquad \qquad+\tfrac{h}{2\pi}  (z-z_{n_1,\tinyvarhexagon}(h))^{-1} g_{\tinyvarhexagon,n_1}(i,z,h) + g_{\tinyvarhexagon,n_1}(i,z,h)^2+\mathcal O(h^{\infty}). 
\end{split}
\end{equation}

\smallsection{Step $3$: Trace formulae}

We can now assume that $\Re(z) \in (z_1-\varepsilon h,z_1+\varepsilon h)$ is close to a Landau level and apply \eqref{eq:leadorder}, as analyticity of the resolvent $(Q^{\rm{w}}(x,hp_x)-z)^{-1}$ away from the Landau bands implies that there is no contribution from $z$ outside these intervals (integration by parts in Helffer-Sj\"ostrand formula). 

\smallsection{Trace formulae in Thm.\@ \ref{theo1}}
From \eqref{eq:diff}, we have since $f \in C^5(I)$ that $D_{\overline{z}}\widetilde{f}(z) = \mathcal O\left(\Vert f \Vert_{C^{5}}\vert \Im(z) \vert^4 \right)$.
By  Proposition \ref{fP2Q},  we obtain, by writing the adjusted prefactors for the hexagonal lattice in parenthesis $\left[ \right]$ and for the square lattice without parenthesis,
\begin{equation}
\begin{split} 
&\widetilde{\operatorname{tr}}_{\Lambda}(f(H_{\lambda,\omega}^h)) 
  = \tfrac{ \left[2 \right]h} {2\pi^2 \vert \vec{b}_1 \wedge \vec{b}_2 \vert } 
 \int_{\CC }\sum_{k=0}^2 \frac{\lambda^k \mathbb E(V)^k D_{\overline{z}}\widetilde{f^{(k)}}(z)}{k!} 
 \sum_{ n } ( z - z_n(h))^{-1 }  \ dm ( z ) \\
&\ \ \ \ \ \ \ \ \ \ \ \ \ \  \ \ \ \ \ - \tfrac{\left[2 \right] h^2 \operatorname{Var}(V) \lambda^2}{8\pi^3| \vec{b}_1 \wedge \vec{b}_2 |} \sum_{ n  }  \int_{\mathbb{C}}D_{\overline{z}}\widetilde{f''}(z)    ( z -z_n(h) )^{-1 }    \ d m (z)\\
&\ \ \ \ \ \ \ \ \ \ \ \ \ \  \ \ \ \ \ - \tfrac{ \left[2 \right]h \operatorname{Var}(V) \lambda^2}{2\pi^2| \vec{b}_1 \wedge \vec{b}_2 |}\sum_{ n  }   \int_{\mathbb{C}}D_{\overline{z}}(\widetilde{f'}(z)g_n(z,h))    ( z -z_n(h) )^{-1 }    \ d m (z)\\
&\ \ \ \ \ \ \ \ \ \ \ \ \ \  \ \ \ \ \ + 
\tfrac 1 \pi \int_{ |\Im z | < h^M } D_{\overline{z}}\widetilde{f}(z) 
\mathcal O \left( |\Im z |^{-1} \right)  \  dm ( z )  + \mathcal O\left(\Vert f \Vert_{C^{5}}(\lambda^3+h^{\infty})\right)
\\
&\qquad \qquad  =  
 \tfrac{ \left[2 \right] h}{2\pi| \vec{b}_1 \wedge \vec{b}_2 |}  \sum_{ n } \sum_{k=0}^2 \frac{\lambda^k \mathbb E(V)^k}{k!}
f^{(k)} ( z_n ( h  ) ) + \mathcal O\left(\Vert f \Vert_{C^{5}}(\lambda^3+h^{3M} +h^{\infty})\right)\\
&\ \ \ \ \ \ \ \ \ \ \ \ \ \  \ \ \ \  - \tfrac{\left[2 \right] h  \operatorname{Var}(V) \lambda^2}{2 \pi| \vec{b}_1 \wedge \vec{b}_2 |}  \sum_{ n } 
\left(\tfrac{f'' ( z_n ( h  ) )}{4\pi} +  f' ( z_n ( h  ) ) g_n(z_n(h),h) \right)\\ 
&\qquad \qquad  =  
 \tfrac{\left[2 \right] h}{2\pi | \vec{b}_1 \wedge \vec{b}_2 |}  \sum_{ n  } 
f ( z_n ( h  )+\lambda \mathbb E(V) ) + \mathcal O\left(\Vert f \Vert_{C^{5}}(\lambda^3+h^{3M} +h^{\infty})\right)\\
&\ \ \ \ \ \ \ \ \ \ \ \ \ \  \ \ \ \  - \tfrac{\left[2 \right] h  \operatorname{Var}(V) \lambda^2}{2\pi| \vec{b}_1 \wedge \vec{b}_2 |}  \sum_{ n  } 
\left(\tfrac{f'' ( z_n ( h  ) )}{4\pi} + f' ( z_n ( h  ) )g_n(z_n(h),h) \right).
\end{split}
 \end{equation}
 By taking $M$ arbitrarily large the trace formulae \eqref{eq:tracef} and \eqref{eq:tracef2} of Theorem \ref{theo1} follow.
 
\smallsection{Trace formula in Thm.\@ \ref{theol}}
Since $f$ is now only assumed to be H\"older continuous, we require an additional approximation argument:

Let $\psi \in C_c^{\infty}((0,1))$ be a positive function with $\int_{\mathbb R} \psi(s) \ ds =1$ and define $\psi_{h}(s):=h^{-1}\psi(h^{-1}s)$ with $f_h:=f*\psi_{h^{M_0}}$.  
Moreover, we find $\left\lVert f-f*\psi_{h^{M_0}} \right\rVert_{L^{\infty}}\le \left\lVert f \right\rVert_{C^{\alpha}} h^{\alpha M_0}$ and since the interval $I$ can contain only $\mathcal O(h^{-1})$ many Landau levels, we have
\begin{equation}
\label{eq:est}
h  \sum_{ \vert n \vert \le C/h} \left\lvert f(z_n(h))-f_{h}(z_n(h))  \right\rvert =\mathcal O\left( \left\Vert f \right\Vert_{C^{\alpha}} h^{\alpha M_0}\right).
\end{equation}
We observe that by \eqref{eq:diff} we have 
\begin{equation}
\label{eq:est1}
\Vert D_{\overline{z}}\widetilde{f_h}(z)  \Vert_{L^{\infty}} \le \Vert f_h \Vert_{C^2} \Vert \vert \Im (z) \vert = \mathcal O( \left\lVert f \right\rVert_{L^{\infty}} h^{-2M_0}\vert \Im (z) \vert ). 
\end{equation}
We then use \eqref{eq:est1} and \eqref{eq:traceseq} for the hexagonal lattice to conclude that
\begin{equation}
\begin{split} 
&\widetilde{\operatorname{tr}}_{\Lambda}(f_h(H_{\lambda,\omega}^h)) 
  = \tfrac{q h} {\pi^2 \vert \vec{b}_1 \wedge \vec{b}_2 \vert } 
 \int_{\CC } D_{\overline{z}}\widetilde{f_h}(z) 
 \sum_{ n } ( z - z_n(h))^{-1 }  \ dm ( z ) \\
 &\ \ \ \ \ \ \ \ \ \ \ \ \ \  \ \ \ \ \ + 
\tfrac 1 \pi \int_{ |\Im z | < h^M } D_{\overline{z}}\widetilde{f_h}(z) 
\mathcal O \left( |\Im z |^{-1} \right)  \  dm ( z )  + \mathcal O\left(\Vert f_h \Vert_{L^{\infty}}h^{\infty}\right)
\\
&\qquad \qquad  =  
 \tfrac{ q h}{\pi| \vec{b}_1 \wedge \vec{b}_2 |}  \sum_{ n }  
f_h ( z_n ( h  ) ) +\mathcal O \left( \left\lVert f \right\rVert_{L^{\infty}} h^{M-2M_0} \right)+  \mathcal O\left(\Vert f \Vert_{L^{\infty}}h^{\infty}\right).
\end{split}
 \end{equation}
 Thus, we have from \eqref{eq:est} that
 \begin{equation}
\begin{split} 
&\widetilde{\operatorname{tr}}_{\Lambda}(f(H_{\lambda,\omega}^h))  =  
 \tfrac{ q h}{\pi| \vec{b}_1 \wedge \vec{b}_2 |}  \sum_{ n }  
f ( z_n ( h  ) ) +\mathcal O \left( \left\lVert f \right\rVert_{L^{\infty}} h^{M-2M_0}+  \left\lVert f \right\rVert_{C^{\alpha}} h^{\alpha M_0} \right)
\end{split}
 \end{equation}
 which by choosing $M=3M_0$ and $M_0$ arbitrarily large implies \eqref{eq:TF}.
 
\smallsection{Step $4$: QHE and mobility edges for the hexagonal lattice}

From \eqref{eq:Rieffel} we conclude that for any Fermi projection $P= \indic_{J}(H^a_{\tinyvarhexagon})$ such that $J \subset I$ with $\partial J$ located inside a spectral gap of $H^a_{\tinyvarhexagon}$ there are $\gamma_1,\gamma_2 \in \mathbb Z$ such that
\begin{equation}
\widetilde{\operatorname{tr}}_{\Lambda_{\tinyvarhexagon}}( P ) =\vert \vec{b}_1 \wedge \vec{b}_2 \vert^{-1} \left(  \gamma_1+\gamma_2 \left(\tfrac{p}{q}+\tfrac{h}{2\pi} \right)\right).
\end{equation}
The trace formula \eqref{eq:TF} on the other hand yields that 
\begin{equation}
\widetilde{\operatorname{tr}}_{\Lambda_{\tinyvarhexagon}}( P ) = \tfrac{hq }{\vert \vec{b}_1 \wedge \vec{b}_2 \vert\pi} \sum_{n \in \ZZ} \indic_{J}(z_n(h,p,q)) + \mathcal O(h^{\infty}).
\end{equation}
Comparing coefficients \eqref{eq:express1} implies that the Hall conductivity, when gauged to be zero at zero energy, is given by \eqref{eq:Hallcond} for sufficiently small $h$.
\end{proof}

\end{document}